\documentclass[11pt]{article}

\usepackage[utf8]{inputenc}
\usepackage{lipsum}

\usepackage{amsmath, amssymb,amsfonts,amsthm,mathtools}
\usepackage{xspace,graphicx,relsize,bm,bbm,xcolor}
\usepackage{soul} % provides  s p a c i n g o u t, underlining and some derivatives such as overstriking and highlighting: https://ctan.mc1.root.project-creative.net/macros/generic/soul/soul.pdf

\usepackage{parskip}  % no indentation but some space between paragraphs

\usepackage{libertine}
\usepackage{libertinust1math}
\usepackage{dsfont}
\usepackage[T1]{fontenc}
\usepackage{enumitem}
\usepackage{nicefrac}
\usepackage{comment}

\usepackage[
    backend=biber,
    style=numeric-comp,
    sorting=none,
    minalphanames=3,
    maxalphanames=3,
    maxnames=99,
    backref=true
    ]{biblatex}
\DefineBibliographyStrings{english}{%
  backrefpage = {page},% originally "cited on page"
  backrefpages = {pages},% originally "cited on pages"
}

\usepackage{sepfootnotes}
\newendnotes{x}
\renewcommand\xnotesize\normalsize

\usepackage{hyperref}
\usepackage[margin=1.75cm]{geometry}
\definecolor{linkcol}{rgb}{0.0,0.55,0.7}
\definecolor{citecol}{rgb}{0.0, 0.6, 0.45}
\definecolor{urlcol}{rgb}{0.7, 0.0, 0.55}
\hypersetup{
	colorlinks,
	linkcolor={linkcol},
	citecolor={citecol},
	urlcolor={urlcol}
}

\usepackage{url}
\usepackage{subcaption}
\usepackage{mleftright}
\usepackage{hyperref}
\usepackage{multirow}
\usepackage{physics}  % all braket notation at once

\usepackage{algorithm}
\usepackage{algpseudocodex}[indLines = true,italicComments = false]

\usepackage{zref-clever}
\zcsetup{cap = true}
\newcommand{\Cref}{\zcref}

\usepackage{authblk}  % for Authors

\def\01{\{0,1\}}

\DeclareMathOperator*{\argmin}{arg\,min}

\let\Pr\relax
\DeclareMathOperator*{\Pr}{\mathbf{Pr}}

\newtheoremstyle{mydefinitionsty}% 〈name〉
{10pt}% 〈Space above〉
{10pt}% 〈Space below〉
{}% 〈Body font〉
{}% 〈Indent amount〉1
{}% 〈Theorem head font〉
{}% 〈Punctuation after theorem head〉
{.5em}% 〈Space after theorem head〉2
{\textbf{\thmname{#1}~\thmnumber{#2}:  }\thmnote{(#3)}}% 〈Theorem head spec (can be left empty, meaning ‘normal’)〉

\newtheoremstyle{myproblemsty}% 〈name〉
{10pt}% 〈Space above〉
{10pt}% 〈Space below〉
{}% 〈Body font〉
{}% 〈Indent amount〉1
{}% 〈Theorem head font〉
{}% 〈Punctuation after theorem head〉
{.5em}% 〈Space after theorem head〉2
{\textbf{\thmname{#1}~\thmnumber{#2}:  }\thmnote{(#3)}\newline}% 〈Theorem head spec (can be left empty, meaning ‘normal’)〉

\newtheoremstyle{mythmsty}% 〈name〉
{10pt}% 〈Space above〉
{10pt}% 〈Space below〉
{\itshape}% 〈Body font〉
{}% 〈Indent amount〉1
{}% 〈Theorem head font〉
{}% 〈Punctuation after theorem head〉
{.5em}% 〈Space after theorem head〉2
{\textbf{\thmname{#1}~\thmnumber{#2}:  }\thmnote{(#3)}}% 〈Theorem head spec (can be left empty, meaning ‘normal’)〉
\theoremstyle{mythmsty}
\newtheorem{theorem}{Theorem}[section]
\newtheorem{proposition}[theorem]{Proposition}
\newtheorem{lemma}[theorem]{Lemma}
\newtheorem{corollary}[theorem]{Corollary}

\newtheorem{question}[theorem]{Question}
\AddToHook{env/theorem/begin}{%
\zcsetup{countertype={theorem=theorem}}}
\AddToHook{env/proposition/begin}{%
\zcsetup{countertype={theorem=proposition}}}
\AddToHook{env/lemma/begin}{%
\zcsetup{countertype={theorem=lemma}}}
\AddToHook{env/corollary/begin}{%
\zcsetup{countertype={theorem=corollary}}}
\AddToHook{env/claim/begin}{%
\zcsetup{countertype={theorem=claim}}}
\AddToHook{env/conjecture/begin}{%
\zcsetup{countertype={theorem=conjecture}}}
\AddToHook{env/fact/begin}{%
\zcsetup{countertype={theorem=fact}}}
\zcRefTypeSetup{fact}{
Name-sg = Fact ,
name-sg = fact ,
Name-pl = Facts ,
name-pl = facts ,
}
\zcRefTypeSetup{assumption}{
  Name-sg = Assumption,
  name-sg = assumption,
  Name-pl = Assumptions,
  name-pl = assumptions,
}

\AddToHook{env/question/begin}{%
\zcsetup{countertype={theorem=question}}}

\theoremstyle{mydefinitionsty}
\newtheorem{definition}[theorem]{Definition}

\newtheorem{observation}[theorem]{Observation}
\newtheorem{example}[theorem]{Example}
\newtheorem{assumption}[theorem]{Assumption}
\newtheorem{remark}[theorem]{Remark}
\AddToHook{env/definition/begin}{%
\zcsetup{countertype={theorem=definition}}}
\AddToHook{env/notation/begin}{%
\zcsetup{countertype={theorem=notation}}}
\AddToHook{env/example/begin}{%
\zcsetup{countertype={theorem=example}}}
\AddToHook{env/assumption/begin}{%
\zcsetup{countertype={theorem=assumption}}}
\AddToHook{env/remark/begin}{%
\zcsetup{countertype={theorem=remark}}}

\theoremstyle{myproblemsty}

\AddToHook{env/problem/begin}{%
\zcsetup{countertype={theorem=problem}}}

\numberwithin{equation}{section}
\definecolor{alexcolor}{rgb}{0.0, 0.47, 0.75}   % {0.27, 0.51, 0.71}  %

\definecolor{questioncolor}{rgb}{0.36, 0.54, 0.66}

\title{Towards Surrogate Based Dequantization of Quantum Reinforcement Learning}

\author[1,2]{Pablo Rodriguez-Grasa\,\thanks{pablorogra@gmail.com }}
\author[3,4]{Sofiene Jerbi}
\author[1,5,6,7]{Mikel Sanz}
\author[8,9,10]{Ryan Sweke\,\thanks{rsweke@aims.ac.za}}
\affil[1]{Department of Physical Chemistry, University of the Basque Country UPV/EHU, Apartado 644, 48080 Bilbao, Spain}
\affil[2]{TECNALIA, Basque Research and Technology Alliance (BRTA), 48160 Derio, Spain}
\affil[3]{Dahlem Center for Complex Quantum Systems, Freie Universit\"{a}t Berlin, 14195 Berlin, Germany}
\affil[4]{Helmholtz-Zentrum Berlin f{\"u}r Materialien und Energie, 14109 Berlin, Germany}
\affil[5]{EHU Quantum Center, University of the Basque Country UPV/EHU, Apartado 644, 48080 Bilbao, Spain}
\affil[6]{IKERBASQUE, Basque Foundation for Science, Plaza Euskadi 5, 48009, Bilbao, Spain}
\affil[7]{Basque Center for Applied Mathematics (BCAM), Alameda de Mazarredo, 14, 48009 Bilbao, Spain}
\affil[8]{African Institute for Mathematical Sciences (AIMS), South Africa}
\affil[9]{Department of Mathematical Sciences, Stellenbosch University, Stellenbosch 7600, South Africa}
\affil[10]{National Institute for Theoretical and Computational Sciences (NITheCS), South Africa}

\date{\today}

\begin{document}
\maketitle

\begin{abstract}
    In recent years, the utility of parameterized quantum circuits as function approximators has been widely studied. In the context of reinforcement learning, this approach has led to variational quantum algorithms such as quantum Q-learning. While these methods show promising empirical results, and can provide provable advantages for artificial problems, it remains unclear whether they can provide a provable quantum advantage over classical approaches for problems of practical relevance. A natural way to investigate this question is through the lens of \textit{dequantization}: The construction of efficient classical algorithms capable of matching the performance of quantum variational methods. Building on recent kernel-based dequantization results for supervised learning, we take steps towards extending this surrogate-based dequantization program to reinforcement learning.
    Specifically, we study the simplified setting of reinforcement learning with a uniform generative model in which uniformly random state-action samples are available, which models the regime of sampling from a large experience replay buffer after sufficient exploration. Within this setting, we provide finite sample guarantees for classical kernelized Fitted Q-Iteration, with classical kernels designed to match the inductive bias of particular parameterized quantum circuits. Using these results, we then provide a set of sufficient conditions, on the data-encoding strategy of a parameterized quantum circuit, the corresponding classical kernel, and the problem structure, under which kernelized Fitted Q-Iteration provides a meaningful dequantization of quantum Q-learning, in this simplified setting. Apart from providing rigorous dequantization guarantees when these conditions are met, these results also motivate the use of kernelized fitted Q-iteration as a dequantization heuristic when these sufficient conditions cannot be verified.
\end{abstract}

\newpage

\tableofcontents
\newpage

\section{Introduction}
\label{sec:introduction}

Given rapid recent progress in both quantum computing and machine learning, the field of quantum machine learning (QML) has emerged as an extremely active field of research. Among the many proposals for quantum-enhanced machine learning, variational quantum algorithms, which optimize parameterized quantum circuits (PQCs) to solve learning tasks, have attracted particular attention, primarily as a result of their near-term implementability on noisy intermediate-scale quantum (NISQ) devices~\cite{Cerezo_2021, Benedetti_2019, Bharti2022noisy}. However, despite significant research, the question of whether and to which extent such variational quantum algorithms can gain meaningful advantages over classical approaches remains a central open question.

A particularly active line of research aimed at addressing this question for QML more broadly, is that of \textit{dequantization}, i.e. the construction of classical algorithms which can match the performance of their quantum counterparts with only polynomial overhead in computational resources \cite{Tang_2019,chia_2020,Tang_2021,landman2022classicallyapproximatingvariationalquantum,Sweke2025potential,sweke2025kernelbaseddequantizationvariationalqml,sahebi2025dequantizationsupervisedquantummachine,herrerogonzalez2025bornultimatumconditionsclassical,masotllima2025prospectsquantumadvantagemachine,deq_tns,gil-fuster2025on}. Dequantization results are significant for two complementary reasons: on one hand, they identify regimes in which quantum approaches offer no exponential advantage, thereby sharpening our understanding of where quantum speedups are and are not possible; on the other hand, they often yield new and independently interesting classical algorithms, inspired by the structure of the quantum models they are designed to simulate or surpass.

For \textit{variational} QML in particular, there are two broad approaches to dequantization. The first is the \textit{simulation-based} approach, which proceeds by developing classical simulation algorithms for directly simulating different classes of parameterized quantum circuits~\cite{gil-fuster2025on}. Indeed, there has been significant progress on this front via a broad array of techniques including tensor network methods and Pauli Path Propagation~\cite{McCaskey2018TNQVM,Pang2020Efficient2DTN,Guo2023DifferentiableMPS,Xu2024MPSVQE,Lykov2022StepDependentTN,Sander2025LocalTDVPCircuitSimulation,Watanabe2026TNSurrogatesVQC,Fontana2025NoisyVQC,Lerch2024LandscapePatches,Angrisani2025NoiselessObservables,Angrisani2026LocalNoisePauliPropagation,Rudolph2025PauliPropagationFramework}. The second is the \textit{surrogate-based} approach, in which one attempts to identify and optimize over a classical model class with an inductive bias which mirrors that of the PQC models~\cite{landman2022classicallyapproximatingvariationalquantum,Sweke2025potential,sweke2025kernelbaseddequantizationvariationalqml,sahebi2025dequantizationsupervisedquantummachine,herrerogonzalez2025bornultimatumconditionsclassical,masotllima2025prospectsquantumadvantagemachine}.

To date, the surrogate-based approach has been studied only in the context of supervised learning~\cite{landman2022classicallyapproximatingvariationalquantum,Sweke2025potential,sweke2025kernelbaseddequantizationvariationalqml,sahebi2025dequantizationsupervisedquantummachine} and generative modelling~\cite{herrerogonzalez2025bornultimatumconditionsclassical}. A common thread running through the works on supervised learning is the observation that PQC models are linear models with respect to a feature map defined by the data-encoding strategy of the circuit~\cite{Dataencoding}, and therefore lie within a reproducing kernel Hilbert space (RKHS) which can, under certain conditions, be optimized over efficiently using classical kernel methods. This observation has enabled the development of a surrogate-based approach to dequantization, in which one replaces PQC optimization with classical kernel ridge regression in the associated RKHS. This approach is particularly appealing, as one can often provide rigorous guarantees on the quality of the resulting solution relative to the best solution achievable by the quantum model~\cite{landman2022classicallyapproximatingvariationalquantum,Sweke2025potential,sweke2025kernelbaseddequantizationvariationalqml,sahebi2025dequantizationsupervisedquantummachine}.

In parallel, there has been growing interest in the application of quantum algorithms to reinforcement learning (RL)~\cite{Dunjko_2015, Dunjko_2016, Dunjko_2017, qpga_TQC, Hamann_2021, wang2021quantum, saggio_experimental_2021, wan_log_regrets, Hamann_2022, Buchholz2025multiarmedbandits, ambainis2025bitfreedomgoeslong, PQP_RL, Pgradients_VQC, qpg_action_decoding, QNPG, sofiene_PRX, vqc_for_deepRL, RL_with_QVC, qagents_gym, atari_hybrid_QCRL, QRL_hardware_errors, QRL_in_continuous, implementation_Lokes, chen2023quantumdeepqlearning, chen_deep_2024, freinberger2024quantumAtari, meyer2024surveyquantumreinforcementlearning}. Many early proposals assume quantum oracles for the environment and aim for provable speedups in query complexity~\cite{Dunjko_2015, Dunjko_2016, Dunjko_2017, qpga_TQC, Hamann_2021, wang2021quantum, saggio_experimental_2021, wan_log_regrets, Hamann_2022, Buchholz2025multiarmedbandits, ambainis2025bitfreedomgoeslong, doriguello2026improvedquantumalgorithmsreinforcement}. More recent NISQ-oriented works, however, focus on hybrid variational approaches that interact with classical environments~\cite{PQP_RL, Pgradients_VQC, qpg_action_decoding, QNPG, qpga_TQC, sofiene_PRX, vqc_for_deepRL, RL_with_QVC, qagents_gym, atari_hybrid_QCRL, QRL_hardware_errors, QRL_in_continuous, implementation_Lokes, chen2023quantumdeepqlearning, chen_deep_2024, freinberger2024quantumAtari, meyer2024surveyquantumreinforcementlearning}. In particular, quantum Q-learning replaces the classical neural network in deep Q-learning~\cite{human_level} with a PQC as a function approximator for the action-value function (or $Q$-function). While empirical studies have reported promising results in simulated environments, and provable advantages can be obtained in artificial environments~\cite{PQP_RL, qagents_gym}, the theoretical foundations of quantum Q-learning remain largely unexplored. In particular, to the best of our knowledge, no prior work has addressed whether quantum Q-learning can be dequantized or identified the conditions under which classical algorithms can provably match its performance.

We take steps towards addressing this gap. Specifically, in order to provide rigorous insights, we study the simplified setting of reinforcement learning with a \textit{uniform generative model} -- i.e. the setting in which uniformly random state-action samples are available, which models the setting of sampling from a large experience replay buffer after sufficient exploration.  Within this simplified setting, we propose and rigorously analyze a surrogate-based classical algorithm for reinforcement learning, and identify a set of explicit and interpretable sufficient conditions under which this algorithm provides an efficient dequantization of PQC based quantum Q-learning.

Our algorithm is based on Fitted Q-Iteration (FQI)~\cite{tree_based, FQI_bounds, theoretical_deepQ} with Kernel Ridge Regression (KRR), via a kernel chosen in such a way that its RKHS contains all functions expressible by the PQC model. Within the simplified setting of uniformly random state-action samples, we prove that whenever the sufficient conditions are met, this classical algorithm, using only polynomial time and polynomially many queries to the environment, outputs with high probability a policy which is $\epsilon$-close to the optimal policy achievable with quantum Q-learning. As such, in the limited setting of reinforcement learning with a uniform generative model, we prove that no exponential advantage is possible via quantum Q-learning, when the identified sufficient conditions are met.

In addition to technical assumptions on the structure of the problem, the main conditions we identify are (a) a sufficiently rapid decay of the eigenvalues of the chosen kernel (b) the alignment of the kernel spectrum with the optimal value functions encountered during learning and (c) the efficient implementability of the kernel. We note that, despite a significantly different analysis, these conditions are similar in spirit to the sufficient conditions identified in the context of dequantization of supervised learning~\cite{Sweke2025potential,sahebi2025dequantizationsupervisedquantummachine}, and recent work has shown how enforcing a tensor network structure on the kernel can ensure that condition (c) is always met~\cite{sweke2025kernelbaseddequantizationvariationalqml}.

We stress that the result discussed above is obtained as a corollary of rigorous finite sample guarantees for classical kernelized Fitted Q-Iteration with PQC-inspired kernels (see Theorem~\ref{thm:main-classical}), which may be of independent interest. Moreover, while our dequantization results are proven only in the simplified setting in which the learner has access to a generative model providing uniformly random state-action samples, we hope this work lays the theoretical groundwork for extending surrogate-based dequantization to reinforcement learning more broadly, and motivates future exploration of these algorithms as practical heuristics in standard settings.

The remainder of this paper is organized as follows. In Section~\ref{sec:dequantization-game} we introduce and discuss various notions of dequantization of variational QML, and describe in detail the surrogate-based approach adopted in this work. In Section~\ref{sec:rl-preliminaries} we provide the necessary background on reinforcement learning and Q-learning. In Section~\ref{sec:pqc-kernels} we introduce PQC models, describe their associated model classes and kernels, and establish the key structural properties that underpin our dequantization algorithm. In Section~\ref{sec:dequantization-algo} we then introduce the key simplifying assumption used in our work -- that of access to a uniform generative model -- and present our classical dequantization algorithm, namely kernelized Fitted Q-Iteration with PQC inspired kernels, in detail. With this in hand, we then discuss in Section~\ref{s:dequantization-question} how finite sample guarantees on kernelized Fitted Q-Iteration are sufficient to obtain dequantization results, before providing such guarantees in Section~\ref{sec:main-theorem}. Finally, we discuss in Section~\ref{ss:putting-together} the implications of these classical finite sample guarantees for dequantization of quantum Q-learning. 

\section{Dequantization of Variational Quantum Machine Learning}
\label{sec:dequantization-game}

Our goal in this work is to provide insights into the extent to which quantum approaches to reinforcement learning based on variational quantum circuits can be dequantized. However, in order to understand what this even means, we take a step back in this section, and consider the variety of different ways in which one can define ``dequantization'' of a variational quantum machine learning algorithm.  To this end, let's consider the following high-level scenario: Alice has a classical computer, Bob has a quantum computer, and both of them want to solve a machine learning problem $\mathcal{P}$. Importantly, we assume that both Alice and Bob start by agreeing on a precise definition of how to measure the ``quality'' of a hypothesis solution $h$. For example:
\begin{enumerate}
\item \textbf{Supervised learning:} If the problem is a supervised learning problem, then Alice and Bob might agree to use the \textit{empirical risk} of the hypothesis function with respect to a training dataset as a measure of solution quality. Alternatively, they might agree to use the \textit{true risk} as a measure of solution quality (as per the definition of PAC learning).
\item \textbf{Generative modeling:} If the problem is a generative modeling problem, then Alice and Bob might agree to use the Maximum Mean Discrepancy with respect to the training samples as a measure of solution quality. Alternatively, they might agree to use the total variation distance between the hypothesis distribution and the true distribution.
\item \textbf{Reinforcement learning:} Here, Alice and Bob might agree to use the distance between the hypothesis policy and the optimal policy as a measure of solution quality. 
\end{enumerate}
With this in mind, we can formulate a variety of notions of dequantization. To do this, let's use the notation $\mathcal{L}$ (for ``loss function'') to denote the function which measures the quality of a solution, and assume that $\mathcal{L}(h)\geq 0$ for any hypothesis $h$, and that the optimal quality value is zero -- i.e that hypothesis $h_a$ is better than hypothesis $h_b$ whenever $\mathcal{L}(h_a) < \mathcal{L}(h_b)$. Additionally, let's denote by $h_q(t,m)$ and $h_c(t,m)$ the random variables which are the output hypothesis of Bob's (quantum) and Alice's (classical) algorithm respectively, after running for time $t$ and using $m$ samples. The first notion we consider is what we will call a ``relative dequantization'':

\begin{definition}[Relative dequantization] We say that Alice has an $(f,g)$ relative dequantization of Bob's algorithm if she has a guarantee that, with high probability,
\begin{equation}
\mathcal{L}[h_c(f(t,m,\epsilon),g(t,m,\epsilon))] \leq  \mathcal{L}[h_q(t,m)] + \epsilon.
\end{equation}
\end{definition}
Intuitively, if Alice has a relative dequantization, then she knows exactly the overhead with respect to both time and samples that she needs to achieve a hypothesis which is at most $\epsilon$ worse than Bob's hypothesis (measured with respect to the agreed upon notion of solution quality). In other words, relative dequantization results provide an upper bound on the quantum advantage which can be obtained by Bob. One obstacle towards \textit{proving} such relative dequantization results is that in the context of variational QML, Bob's algorithm is almost always a \textit{heuristic} (such as gradient based optimization of a parameterized quantum circuit) and as such getting an explicit expression for the benchmark quality $\mathcal{L}[h_q(t,m)]$ can be extremely difficult.

Given the potential difficulty in quantifying the quality of Bob's solution after using a given number of time and samples, we could also ask what resources Alice needs to output a hypothesis comparable to the \textit{best possible} solution Bob could obtain. In particular, this can sometimes be easier to identify and quantify when Bob optimizes over a well characterized model class, such as all functions realizable by a given parameterized quantum circuit architecture. To formalize this, let's denote Bob's model class with $\mathcal{M}_q$ and define Bob's optimal solution via
\begin{equation}
h^*_q = \argmin_{h\in\mathcal{M}_q} \mathcal{L}[h].
\end{equation}
With this in hand, we formalize the notion of a ``relative-to-best'' dequantization as follows:

\begin{definition}[Relative-to-best dequantization] We say that Alice has an $(f,g)$ relative-to-best dequantization of Bob's algorithm if she has a guarantee that, with high probability
\begin{equation}
\mathcal{L}[h_c(f(d,\epsilon),g(d,\epsilon))]  \leq \mathcal{L}[h_q^*] + \epsilon,
\end{equation}
where $d$ is the size of the problem.
\end{definition}
Intuitively, when Alice has a relative-to-best dequantization, then she knows exactly the resources that she requires in order to output a solution which is no more than $\epsilon$ worse than the \textit{best possible} solution Bob could obtain. Here it's worth making a few observations: 

\begin{enumerate}
\item The notion of a relative-to-best dequantization is relatively strong, since in practice Bob may not come close to achieving the optimal hypothesis. The motivation for considering such a notion is indeed precisely that we do not want to (or cannot) quantify the performance of Bob's optimization algorithm. Instead, we want to exploit only properties of Bob's model class $\mathcal{M}_q$.
\item Relative-to-best dequantization results \textit{do not} rule out quantum advantages, but as before they allow us to place an \textit{upper bound} on the quantum advantages achievable by Bob. For example, if both $f = O(\mathrm{poly}(d,1/\epsilon))$ and $g = O(\mathrm{poly}(d,1/\epsilon))$ then Bob cannot achieve any exponential advantages (in either time or sample complexity) with respect to Alice.
\item We note that a relative-to-best dequantization result can be understood as a type of \textit{agnostic learning} result. Specifically, in the language of Ref~\cite{classicalverificationofquantumlearning}, a relative-to-best dequantization is a result for the agnostic learnability of Alice's model class with respect to Bob's model class as the benchmark class.
\end{enumerate}
As discussed in more detail in Section~\ref{ss:our-contribution}, we focus in this work on relative-to-best dequantization.

\subsection{Surrogation vs Simulation}\label{ss:surrogate-vs-simulate}

As mentioned in the introduction, there are a variety of ways one could go about trying to construct classical algorithms for which one can prove either relative or relative-to-best dequantization results. Here we highlight and contrast two main approaches, namely the surrogate-based approach and the simulation-based approach. See also Ref.~\cite{masotllima2025prospectsquantumadvantagemachine} for more details.

\subsubsection{Simulation-based dequantization}

In simulation based dequantization, one attempts to construct a classical algorithm which directly \textit{simulates} Bob's variational quantum algorithm. This approach is particularly well suited to proving relative dequantization results, as one can plausibly hope to quantify the resource overhead needed for simulating each iteration of Bob's algorithm up to a certain accuracy. The last few years have witnessed impressive progress in simulation based dequantization, using a wide variety of techniques for the simulation of quantum circuits such as tensor networks and Pauli Path Propagation~\cite{McCaskey2018TNQVM,Pang2020Efficient2DTN,Guo2023DifferentiableMPS,Xu2024MPSVQE,Lykov2022StepDependentTN,Sander2025LocalTDVPCircuitSimulation,Watanabe2026TNSurrogatesVQC,Fontana2025NoisyVQC,Lerch2024LandscapePatches,Angrisani2025NoiselessObservables,Angrisani2026LocalNoisePauliPropagation,Rudolph2025PauliPropagationFramework}.

\subsubsection{Surrogate-based dequantization}

In surrogate-based dequantization one \textit{does not} make any attempt to simulate Bob's variational quantum algorithm. Instead the idea of surrogate-based dequantization is to identify a \textit{surrogate model class} which shares the \textit{inductive bias} of Bob's model class, but for which one can optimize over using an efficient classical algorithm which admits provable guarantees. More specifically, one does the following:
\begin{enumerate}
\item Use knowledge of Bob's model class $\mathcal{M}_{q}$ to identify a closely related model class $\tilde{\mathcal{M}}_q$ which one can optimize over \textit{classically}.
\item Try to prove time and query complexity bounds for classical optimization over $\tilde{\mathcal{M}}_q$ and quantify the difference in quality of output solution with respect to the optimal solution in Bob's (closely related) model class $\mathcal{M}_q$.
\end{enumerate}
We note that this strategy is particularly well suited for proving relative-to-best dequantization results, as the dequantization algorithm only exploits knowledge of Bob's model class, and ignores completely the details of Bob's optimization algorithm. As we will describe in detail in Section~\ref{sec:pqc-kernels}, this strategy has been enabled by the precise characterization of the model classes associated with parameterized quantum circuits (PQCs)~\cite{Dataencoding}, together with the observation that all such models lie in a reproducing kernel Hilbert space which can be both succinctly described, and optimized over using kernel based algorithms, which admit rigorous guarantees~\cite{landman2022classicallyapproximatingvariationalquantum}. Indeed, precisely this approach has been used to provide surrogate based dequantization algorithms for PQC based variational regression~\cite{landman2022classicallyapproximatingvariationalquantum, Sweke2025potential, sweke2025kernelbaseddequantizationvariationalqml} as well PQC based variational classification and quantum kernel methods~\cite{sahebi2025dequantizationsupervisedquantummachine}.

\subsection{Our Contribution}\label{ss:our-contribution}
With the above established, we now have the language to state clearly, at a high level, the main contribution of this work from a dequantization perspective. Specifically:
\begin{center}
\textit{We provide a set of sufficient conditions for surrogate-based relative-to-best dequantization of a class of variational quantum algorithms for reinforcement learning, within the simplified setting of reinforcement learning with a uniform generative model.}
\end{center}
As already mentioned in Section~\ref{sec:introduction}, the result above follows from rigorous finite sample guarantees for our proposed classical surrogate-based algorithm, within this simplified setting, which may be of independent interest. In the following sections, we present the preliminaries necessary to make the above statement more precise.

\section{Reinforcement Learning Preliminaries}
\label{sec:rl-preliminaries}

Reinforcement learning (RL) studies sequential decision-making under uncertainty. An agent interacts with an environment over discrete time steps: at time $t$, it observes a state $S_t\in\mathcal{S}$, selects an action $A_t\in\mathcal{A}$, receives an immediate reward $R_t\in\mathbb{R}$, and the environment transitions to a new state $S_{t+1}$. The agent’s objective is to learn a strategy that maximizes long-term cumulative reward.

This interaction is commonly modeled as a discounted Markov Decision Process (MDP), defined by the tuple
\begin{equation}
(\mathcal{S},\mathcal{A},P,R,\gamma).
\end{equation}
Here, $\mathcal{S}\subseteq \mathbb{R}^{d_\mathcal{S}}$ denotes the state space and $\mathcal{A}$ is a finite action set. Without loss of generality, we assume that the state space is normalized such that $\mathcal{S} \subseteq [0,1]^{d_{\mathcal{S}}}$. Given an MDP defining a reinforcement learning problem, we refer to the tuple $(d_\mathcal{S},|\mathcal{A}|)$ as the ``size'' of the problem. In particular, we are interested in the scaling of reinforcement learning algorithms with respect to these parameters. Additionally, in this work we will assume, without loss of generality, that $\mathcal{A}\subseteq \{0,1\}^{d_{\mathcal{A}}}$, where $d_A\leq |\mathcal{A}|$ and depends on the encoding used (for example, a binary encoding has $d_A=\log_2(|\mathcal{A}|)$ and a one-hot encoding has $d_\mathcal{A}=|\mathcal{A}|$).  We denote by $\mathcal{P}(\mathcal{X})$ the set of probability distributions over a measurable space $\mathcal{X}$. The environment is characterized by two stochastic components: the transition kernel
\begin{equation}
P:\mathcal{S}\times\mathcal{A}\to\mathcal{P}(\mathcal{S}),
\end{equation}
which specifies the conditional distribution of the next state given the current state and action, and the reward distribution
\begin{equation}
R:\mathcal{S}\times\mathcal{A}\to\mathcal{P}(\mathbb{R}),
\end{equation}
which models the immediate feedback provided by the environment. We assume rewards are uniformly bounded: for all $(s,a)\in\mathcal{S}\times\mathcal{A}$ and $R\sim R(S,A)$, we have that $R\in [-R_{\max},R_{\max}]$ almost surely. The discount factor $\gamma\in(0,1)$ controls the relative importance of future rewards.

Given an initial state $S_0=s$, the interaction between the agent and the environment induces a trajectory $\{(S_t,A_t,R_t)\}_{t\ge 0}$. The quality of a trajectory at time $t$ is measured by the return 
\begin{equation}
G_{t}=\sum_{k=0}^{\infty}\gamma^k R_{t+k}.
\end{equation}
The fundamental goal of RL is to find a decision-making strategy that maximizes the expected return. To define this we need the notion of a policy, which is a mapping
\begin{equation}
\pi:\mathcal{S}\to\mathcal{P}(\mathcal{A}),
\end{equation}
assigning to each state a probability distribution over actions. Policies may be deterministic or stochastic. Under a policy $\pi$, the dynamics evolve according to
\begin{equation}
A_t\sim\pi(S_t),\qquad R_t\sim R(S_t,A_t),\qquad S_{t+1}\sim P(S_t,A_t).
\end{equation}
Once a policy $\pi$ is fixed, it induces a probability measure over trajectories through the interaction with the environment. For a given initial state $S_0=s$, the performance of the policy is quantified by the expected return $G_0$. This expected return is precisely the state-value function of the policy. Formally, the state-value function $V^\pi:\mathcal{S}\rightarrow \mathbb{R}$ is defined as 
\begin{equation}
V^{\pi}(s)=\mathbb{E}_{\pi}\!\left[G_0\,\middle|\,S_0=s\right]
=
\mathbb{E}_{\pi}\!\left[\sum_{t=0}^{\infty}\gamma^t R_t\,\middle|\,S_0=s\right].
\end{equation}
We also define the action-value function (or $Q$-function) $Q^\pi:\mathcal{S}\times\mathcal{A}\rightarrow \mathbb{R}$ of a policy $\pi$, which is defined via
\begin{equation}
Q^{\pi}(s,a)=\mathbb{E}_{\pi}\!\left[\sum_{t=0}^{\infty}\gamma^t R_t\,|S_0=s,\,A_0=a\right].
\end{equation}
These functions are related by 
\begin{equation}
V^{\pi}(s)=\mathbb{E}_{a\sim\pi(s)}\bigl[Q^{\pi}(s,a)\bigr],
\end{equation}
and satisfy the Bellman equation 
\begin{equation}
Q^{\pi}(s,a)=r(s,a)+\gamma\,\mathbb{E}_{s'\sim P(s,a)}\bigl[V^{\pi}(s')\bigr],
\end{equation}
where $r(s,a)=\mathbb{E}_{r\sim R(s,a)}[r]$. Notice that since rewards satisfy $|R| \leq R_{\max}$  and $\gamma \in (0,1)$, 
the return is uniformly bounded as
\begin{equation}
|G_t| 
\leq \sum_{k=0}^\infty \gamma^k R_{\max} 
= \frac{R_{\max}}{1-\gamma}.
\end{equation}
We therefore define
\begin{equation}
V_{\max} \equiv \frac{R_{\max}}{1-\gamma},
\end{equation}
and note that $\|Q^\pi\|_\infty \leq V_{\max}$ for any policy $\pi$. Define the Bellman operator $T^\pi:\mathbb{R}^{\mathcal{S}\times\mathcal{A}}\rightarrow \mathbb{R}^{\mathcal{S}\times\mathcal{A}}$ associated with $\pi$ by
\begin{equation}
(T^{\pi}Q)(s,a)=r(s,a)+\gamma\,\mathbb{E}_{s'\sim P(s,a),\,a'\sim\pi(s')}[Q(s',a')].
\end{equation}
Then $Q^{\pi}$ is the unique fixed point of $T^{\pi}$. The optimal action-value function is
\begin{equation}
Q^{*}(s,a)=\sup_{\pi}Q^{\pi}(s,a),\qquad \forall(s,a)\in\mathcal{S}\times\mathcal{A}.
\end{equation}
A policy $\pi^{*}$ is optimal if it is greedy with respect to $Q^{*}$, that is, it always selects an action that maximizes the optimal action-value function
\begin{equation}
\pi^{*}(s) \in \arg\max_{a \in \mathcal{A}} Q^{*}(s,a),
\qquad \forall s \in \mathcal{S}.
\end{equation}
The function $Q^{*}$ satisfies the Bellman optimality equation
\begin{equation}\label{eq:bellman-opt-1}
Q^{*}=TQ^{*},
\end{equation}
where 
\begin{equation}\label{eq:Bellman_optimality}
(TQ)(s,a)=r(s,a)+\gamma\,\mathbb{E}_{s'\sim P(s,a)}\Bigl[\max_{a'\in\mathcal{A}}Q(s',a')\Bigr].
\end{equation}
With the above established, we can now define a natural loss function to characterize the quality of a given action-value function $Q$. 
\begin{definition}[Loss of an action-value function]\label{def:loss-function} Given an MDP $(\mathcal{S},\mathcal{A},P,R,\gamma)$ with optimal action-value function $Q^*$, we define the loss $\mathcal{L}(Q)$ of any other action-value function $Q$ via
\begin{equation}
\mathcal{L}(Q) = \mathbb{E}_{(s,a)\sim U(\mathcal{S}\times\mathcal{A})}\left[|Q^*(s,a) - Q(s,a)|\right]
\end{equation}
where $U(\mathcal{S}\times \mathcal{A})$ denotes the uniform distribution over $\mathcal{S}\times\mathcal{A}$.
\end{definition}
We note that the loss function defined above satisfies the requirements discussed in Section~\ref{sec:dequantization-game}. Additionally, we have chosen to use the uniform distribution over states and actions to evaluate the learned action-value function, however in principle one also may be interested in alternative distributions given some prior knowledge of the environment. 

Finally, we note that broadly speaking, RL methods can be divided into policy-based approaches, which directly parameterize and optimize $\pi$, and value-based approaches, which estimate long-term values and derive a policy via greedy selection. Our analysis focuses on the latter, as described in the following section.

\subsection{(Parameterized) Q-learning}\label{ss:parameterized-q-learning}
One of the primary value-based approaches to RL is Q-learning. As we will be concerned with dequantizing modern quantum versions of this approach, we provide in this section an introduction to both classical and quantum (parameterized) Q-learning.

\subsubsection{Q-learning}
Q-learning is a foundational model-free reinforcement learning algorithm that aims to learn the optimal action-value function $Q^*(s,a)$ directly through interaction with the environment, without requiring a model of the transition dynamics or reward function. Introduced by Watkins in 1989 \cite{Watkins1989LearningFD}, Q-learning operates by iteratively updating estimates of the action-value function based on observed transitions and rewards, leveraging the Bellman optimality equation defined in Eqs.~\eqref{eq:bellman-opt-1} and~\eqref{eq:Bellman_optimality}.

In its basic form, Q-learning maintains a tabular representation of the $Q$-function, where each state-action pair $(s,a)$ is associated with a stored value $Q(s,a)$. The algorithm initializes these values arbitrarily and updates them using samples from the environment. Specifically, at each time step, the agent observes the current state $s$, selects an action $a$, receives a reward $r$, and transitions to the next state $s'$. The Q-value is then updated according to the rule
\begin{equation}
    Q(s,a)\xleftarrow{}Q(s,a)+\alpha\left[r+\gamma\max_{a'\in\mathcal{A}}Q(s',a')-Q(s,a)\right]
\end{equation}
where $\alpha\in(0,1]$ is the learning rate. This update reduces the temporal-difference error, which intuitively captures the mismatch between what the agent currently expects from taking action $a$ in state $s$ and what it newly observes: the immediate reward plus its own (discounted) estimate of the best possible future return from the next state. This process effectively performs a stochastic approximation of the Bellman optimality operator $T$ defined in Eq.~\eqref{eq:Bellman_optimality}. Under suitable conditions, such as visiting all state-action pairs infinitely often, Q-learning is guaranteed to converge to $Q^*$ in finite MDPs \cite{Watkins1992}.

\subsubsection{Deep Q-learning}\label{ss:deepqlearning}
The tabular approach becomes impractical for large state and action spaces. To address this, function approximation techniques were introduced to represent the $Q$-function \cite{qlearning_linear}. Deep Q-learning extends this idea by using deep neural networks (DNNs) as universal function approximators for $Q(s,a;\,\boldsymbol{\theta})$, where $\boldsymbol{\theta}$ denotes the network parameters.

A major practical breakthrough was achieved by \cite{human_level} with the Deep Q-Network (DQN) algorithm, which demonstrated that deep neural networks could be successfully trained to approximate the $Q$-function in high-dimensional environments. DQN trains the parameterized $Q$-function using stochastic gradient descent on a loss derived from the temporal-difference error. Its key innovations include experience replay, which stores transitions in a replay buffer and samples them uniformly to break temporal correlations and improve sample efficiency, and the use of a target network, a periodically updated copy of the main network that stabilizes the regression targets.

In deep Q-learning, the update rule is reformulated to minimize the mean squared error
\begin{equation}
    \mathcal{L}(\boldsymbol{\theta})=\mathbb{E}_{(s,a,r,s')\sim \mathcal{D}}\left[\left(r+\gamma \max_{a'}Q(s',a';\,\boldsymbol{\theta^+})-Q(s,a;\,\boldsymbol{\theta})\right)^2\right]
\end{equation}
where $\mathcal{D}$ is the replay buffer and $\boldsymbol{\theta}^+$ are the parameters of the target network. Deep Q-learning has achieved superhuman performance in complex tasks like Atari games and has become a cornerstone of deep reinforcement learning.

\subsubsection{Quantum Q-learning}\label{ss:qdeeplearning}
There is a rich history of work on quantum algorithms for reinforcement learning~\cite{Dunjko_2015, Dunjko_2016, Dunjko_2017, qpga_TQC, Hamann_2021, wang2021quantum, saggio_experimental_2021, wan_log_regrets, Hamann_2022, Buchholz2025multiarmedbandits, ambainis2025bitfreedomgoeslong, PQP_RL, Pgradients_VQC, qpg_action_decoding, QNPG, sofiene_PRX, vqc_for_deepRL, RL_with_QVC, qagents_gym, atari_hybrid_QCRL, QRL_hardware_errors, QRL_in_continuous, implementation_Lokes, chen2023quantumdeepqlearning, chen_deep_2024, freinberger2024quantumAtari, meyer2024surveyquantumreinforcementlearning}; we refer the reader to Ref.~\cite{meyer2024surveyquantumreinforcementlearning} for a comprehensive survey. Many early proposals assume quantum access to the environment and aim for provable speedups in query complexity~\cite{Dunjko_2015, Dunjko_2016, Dunjko_2017, qpga_TQC, Hamann_2021, wang2021quantum, saggio_experimental_2021, wan_log_regrets, Hamann_2022, Buchholz2025multiarmedbandits, ambainis2025bitfreedomgoeslong}. In this work, we are concerned uniquely with proposals for quantum reinforcement learning algorithms which interact with a standard classical environment. Within this setup, approaches can be broadly categorized into policy-based and value-based methods.

Policy-based methods use parameterized quantum circuits to directly approximate the policy~\cite{PQP_RL, Pgradients_VQC, qpg_action_decoding, QNPG, qpga_TQC}. Value-based approaches instead use parameterized quantum circuits to approximate the action-value function. The idea of using a parameterized quantum circuit as a Q-function approximator within a deep Q-learning framework, which we refer to as \textit{quantum Q-learning}, was originally proposed in Refs.~\cite{vqc_for_deepRL, RL_with_QVC}, and subsequently developed in Ref.~\cite{qagents_gym}, and further studied in Refs.~\cite{atari_hybrid_QCRL, QRL_hardware_errors, implementation_Lokes, chen2023quantumdeepqlearning, chen_deep_2024, freinberger2024quantumAtari}. We also note that Ref.~\cite{sofiene_PRX} proposes a distinct approach in which quantum subroutines are used to enhance action selection in value-based reinforcement learning with large action spaces, rather than to parameterize the Q-function via a PQC, and that Ref.~\cite{QRL_in_continuous} extends the variational approach to continuous action spaces via a quantum actor-critic method, employing parameterized quantum circuits for both the Q-function and the policy.

We focus in this work on quantum Q-learning. Quantum Q-learning builds on the parametrized Q-learning paradigm by replacing classical neural networks with parametrized quantum circuits to approximate the action-value function $Q(s,a;\,\boldsymbol{\theta})$. Training proceeds similarly to deep Q-learning, using gradient-based optimization (e.g., via the parameter-shift rule~\cite{Wierichs_2022}) on a temporal-difference loss, often incorporating experience replay and target circuits. While empirical results show promise in simulated environments, quantum Q-learning remains largely heuristic, with theoretical guarantees limited by the challenges of analyzing nonconvex optimization in quantum settings.

\section{PQC Model Classes and Associated Kernels}
\label{sec:pqc-kernels}

As we are concerned in this work with the dequantization of variational quantum algorithms (for reinforcement learning) based on the optimization of parameterized quantum circuits (PQCs) -- as explained at a high-level in Section~\ref{ss:qdeeplearning} -- we provide here a brief overview of the relevant PQC models, and the model classes associated with them. We also provide an overview of the connection between PQC model classes and reproducing kernel Hilbert spaces, which as described in Section~\ref{sec:dequantization-algo} is at the core of our classical dequantization algorithm.

\subsection{PQC Models}\label{ss:pqc-models}

In this work we are primarily concerned with parameterized quantum circuit (PQC) models for functions $f:\mathcal{S}\times \mathcal{A}\rightarrow\mathbb{R}$, where as explained in Section~\ref{sec:rl-preliminaries}, $\mathcal{S}\subset[0,1]^{d_{\mathcal{S}}}$ is the \textit{state space} and $\mathcal{A}\subseteq \{0,1\}^{d_\mathcal{A}}$ is some finite \textit{action space} of an MDP. More specifically, we consider models of the type
\begin{equation}\label{eq:PQC-models}
f_{\theta}(s,a) = \langle 0|U^{\dagger}(s,\theta)O_aU(s,\theta)|0\rangle,
\end{equation}
where $U(s,\theta)$ is some unitary parameterized by both states $s\in\mathcal{S}$ and trainable parameters $\theta\in\Theta$, and additionally $\{O_a\,|\,a\in\mathcal{A}\}$ is a set of observables indexed by actions $a\in\mathcal{A}$. In particular, such PQC models were originally proposed in Refs.~\cite{vqc_for_deepRL, RL_with_QVC}, subsequently developed in Ref.~\cite{qagents_gym}, and further studied in Refs.~\cite{atari_hybrid_QCRL, QRL_hardware_errors, implementation_Lokes, chen2023quantumdeepqlearning, chen_deep_2024, freinberger2024quantumAtari} as models for use in quantum Q-learning (as described in Section~\ref{ss:qdeeplearning}), and are the models whose use we are interested in dequantizing.

In order to exploit existing results on structural properties of PQCs~\cite{Dataencoding}, we will make a variety of assumptions. Firstly, we assume that there exists a parameterized unitary $V(a)$, parameterized by $a\in\{0,1\}^{d_\mathcal{A}}$, such that
\begin{align}
f_{\theta}(s,a) &= \langle 0|U^{\dagger}(s,\theta)O_aU(s,\theta)|0\rangle\label{eq:first-form} \\
&= \langle 0|U^{\dagger}(s,\theta)V^\dagger(a)OV(a)U(s,\theta)|0\rangle,\label{eq:second-form}
\end{align}
where $O$ is some fixed observable. Secondly, we will assume that all data is encoded via a \textit{Hamiltonian encoding} strategy. More specifically, we assume that:
\begin{enumerate}
\item Every gate in $U$ which depends on $s$ is of the form $U_{(j,k)}(s_j) = e^{-iH^{(j)}_ks_j}$. In other words, that every data-dependent gate in $U$ depends only on a single component $s_j$ for some $j\in [d_\mathcal{S}]$, and implements the time evolution of some Hamiltonian $H^{(j)}_k$ for time $s_j$.
\item Every gate in $V$ which depends on $a$ is of the form $V_{(j,k)}(a_j) = e^{-i\tilde{H}^{(j)}_k a_j}$ for some $j\in [d_\mathcal{A}]$.
\end{enumerate}
With this assumption, we denote by $\mathcal{D}^{(j)} = \{H^{(j)}_k\,|\,k\in [N_j]\}$ the set of all $N_j$ Hamiltonians used to encode the component $s_j$ at some point in the circuit, and we denote by $\tilde{\mathcal{D}}^{(j)} = \{\tilde{H}^{(j)}_k\,|\,k\in [\tilde{N}_j]\}$ the set of all $\tilde{N}_j$ Hamiltonians used to encode the component $a_j$ at some point in the circuit. Additionally, we define the \textit{data-encoding strategy} as the tuple
\begin{align}\label{eq:data-encoding-strategy}
\mathcal{D}&\coloneqq\left(\mathcal{D}^{(1)},\ldots, \mathcal{D}^{(d_\mathcal{S})},\tilde{\mathcal{D}}^{(1)},\ldots, \tilde{\mathcal{D}}^{(d_\mathcal{A})} \right) \\
&\coloneqq \left(\mathcal{D}^{(1)},\ldots, \mathcal{D}^{(d)}\right),
\end{align}
where we have defined
\begin{equation}
d = d_\mathcal{S} + d_\mathcal{A}.
\end{equation}
With this established, we then define the quantities
\begin{align}\label{eq:data-encoding-properties}
N_{\mathrm{max}} &\coloneqq \max_{j\in [d_{\mathcal{S}}+d_{\mathcal{A}}]} \left[N_j\right],\\
\lambda_\mathrm{max} &\coloneqq \max_{j\in [d_{\mathcal{S}}+d_{\mathcal{A}}]}\left[\max_{k\in [N_j]}\left[\|H_k^{(j)}\|_{\rm op}\right]\right].
\end{align}
Specifically, $N_\mathrm{max}$ is the maximum number of Hamiltonians used to encode any input component, and $\lambda_\mathrm{max}$ is the operator norm (equivalently, the largest absolute eigenvalue for the Hermitian encoding Hamiltonians considered here) of any encoding Hamiltonian.

We note that while these data-encoding assumptions do exclude some encoding strategies, such as those via time evolution of parameterized linear combinations of Hamiltonians, they cover a wide variety of common encoding strategies used in practice, and are not restrictive assumptions. Additionally, we show in Appendix~\ref{app:action-encoding} how some previously proposed models for Quantum Q-learning in the form of Eq.~\eqref{eq:first-form} can be brought into the form of Eq~\eqref{eq:second-form} via parameterized unitaries $V(a)$ with a suitable Hamiltonian data-encoding.

Under these assumptions, using the results of Ref.~\cite{Dataencoding} one can show that, for all $(s,a)\in\mathcal{S}\times\mathcal{A}$, one has
\begin{align}\label{eq:fourier_rep}
f_\theta(s,a) &= \sum_{\omega\in\tilde{\Omega}_{\mathcal{D}}} c_{\omega}(\theta) e^{i\langle\omega, z\rangle}, \\
&\equiv f_{\theta}(z),
\end{align}
where $z\equiv(s,a)\in \mathcal{S}\times\mathcal{A} \subset \mathbb{R}^{d}$. In particular, one has that: 
\begin{enumerate}
\item The set of frequency vectors $\tilde{\Omega}_{\mathcal{D}}\subseteq\mathbb{R}^{d}$ is completely determined by the data encoding strategy $\mathcal{D}$~\cite{Dataencoding,gil2020input,Caro_2021}. We describe exactly how $\tilde{\Omega}_{\mathcal{D}}$ is constructed from $\mathcal{D}$ in Appendix~\ref{app:data-encoding}.
\item The frequency coefficients $c_{\omega}(\theta)$ depend on all the gates in the circuit, but not on $z$, and are constrained in some way which usually does not admit a concise expression~\cite{mhiri2024constrainedvanishingexpressivityquantum}. 
\end{enumerate}
Additionally, we know that $\omega_0 \equiv (0,\ldots,0)\in\tilde{\Omega}_{\mathcal{D}}$, and that the non-zero frequencies in $\tilde{\Omega}_{\mathcal{D}}$ come in mirror pairs -- i.e. $\omega\in\tilde{\Omega}_{\mathcal{D}}$ implies $-\omega\in\tilde{\Omega}_{\mathcal{D}}$. Moreover, one can show that $c_\omega(\theta) = c_{-\omega}^*(\theta)$, which ensures that $f_\theta(x)$ is real. Finally, we note that given the structure of $\tilde{\Omega}_{\mathcal{D}}$, we can perform an arbitrary splitting of pairs to  redefine $\tilde{\Omega}_{\mathcal{D}} \coloneqq \Omega_{\mathcal{D}} \cup \left(-\Omega_{\mathcal{D}}\right)$, where $\Omega_{\mathcal{D}} \cap \left(-\Omega_{\mathcal{D}}\right) = \{\omega_0\}$. We define $M = | \Omega_{\mathcal{D}}\setminus \{\omega_0\} |$ which allows us to write $\Omega_{\mathcal{D}} = \{\omega_0,\omega_1,\ldots,\omega_{M}\}$ and $\tilde{\Omega}_{\mathcal{D}} = \{-\omega_M,\ldots,-\omega_1,\omega_0,\omega_1,\ldots,\omega_M\}$. 

With this established, given a PQC model in the form of Eq.~\eqref{eq:second-form}, specified by parameterization set $\Theta$, observable $O$ and parameterized unitaries $U,V$ with data-encoding strategy $\mathcal{D}$, we define the associated PQC model class -- i.e. the set of all functions that can be expressed by such a PQC model -- as
\begin{align}
\mathcal{F}_{\Theta, \mathcal{D},\mathcal{O}} &= \left\{f_\theta(s,a) = \langle 0|U^{\dagger}(s,\theta)V^\dagger(a)OV(a)U(s,\theta)|0\rangle \,|\,\theta\in \Theta\right\} \\
&= \left\{f_{\theta}(s,a) = \sum_{\omega\in\tilde{\Omega}_{\mathcal{D}}} c_{\omega}(\theta) e^{i\langle\omega, z\rangle}\,|\, z= (s,a)\,,\, \theta\in \Theta\right\}.
\end{align}
Additionally, we note that for all $f\in \mathcal{F}_{\Theta, \mathcal{D},\mathcal{O}}$ we have $\|f\|_\infty\leq \|O\|_\infty$.

\subsection{Linearity of PQC models}\label{ss:PQC_linear}

It is known that all such PQC models are \textit{linear models} with respect to a feature map defined by $\mathcal{D}$. To see this, we first define
\begin{align}
a_\omega(\theta) &= c_\omega(\theta) + c_{-\omega}(\theta),\\
b_\omega(\theta) &= i(c_\omega(\theta) - c_{-\omega}(\theta)).
\end{align}
Using this, together with the structure of $\tilde{\Omega}_{\mathcal{D}}$ discussed above, we can rewrite Eq.~\eqref{eq:fourier_rep} as
\begin{align}
f_\theta(z) &= c_{\omega_0}(\theta) + \sum_{i=1}^{M}\left(a_{\omega_i}(\theta)\cos(\langle\omega_i,z\rangle) + b_{\omega_i}(\theta)\sin(\langle\omega_i,z\rangle)\right)\\
&=\langle c(\theta),\phi_{\mathcal{D}}(z)\rangle,\label{eq:linear_map}
\end{align}
where we have defined 
\begin{align}
c(\theta) &= \left(c_{\omega_0}(\theta),a_{\omega_1}(\theta),b_{\omega_1}(\theta),\ldots,a_{\omega_M}(\theta),b_{\omega_M}(\theta) \right),\\
\phi_\mathcal{D}(z) &= \big(1,\cos(\langle\omega_1,z\rangle),\sin(\langle\omega_1,z\rangle),\ldots,\cos(\langle\omega_M,z\rangle),\sin(\langle\omega_M,z\rangle) \big).
\end{align}
As such, we see from Eq.~\eqref{eq:linear_map} that $f_\theta$ is indeed a linear function with respect to the feature map $\phi_\mathcal{D}:\mathbb{R}^{d}\rightarrow\mathbb{R}^{(2M + 1)}$. However, it is important to note that due to the constraints imposed by the circuit architecture, the PQC model may not contain \textit{all} linear functions defined by the feature map $\phi_\mathcal{D}$. To make this clear, let's define the set of \textit{all} linear functions with respect to $\phi_\mathcal{D}$ via 
\begin{equation}
\mathcal{F}_{\mathcal{D}} = \{f_v(\cdot) = \langle v,\phi_\mathcal{D}(\cdot)\rangle\,|\, v\in\mathbb{R}^{|\tilde{\Omega}_{\mathcal{D}}|}\}.
\end{equation}
We then immediately have 
\begin{equation}
\mathcal{F}_{\Theta,\mathcal{D},O} \subset \mathcal{F}_\mathcal{D}.
\end{equation}
The above inclusion is strict, given the fact that all functions in $\mathcal{F}_{\Theta,\mathcal{D},O}$ have bounded infinity norm, whereas functions in $\mathcal{F}_\mathcal{D}$ can be unbounded. However, if we define
\begin{equation}
\mathcal{F}_{\mathcal{D},O} = \{f\in \mathcal{F}_\mathcal{D} \,|\, \|f\|_\infty \leq \|O\|_\infty\}
\end{equation}
then indeed there could exist architectures for which $\mathcal{F}_{\Theta,\mathcal{D},O} = \mathcal{F}_{\mathcal{D},O}$. To summarize, we have
\begin{equation}
\mathcal{F}_{\Theta,\mathcal{D},O} \subseteq \mathcal{F}_{\mathcal{D},O} \subset \mathcal{F}_{\mathcal{D}}.
\end{equation}

\subsection{PQC-inspired kernels and the PQC-RKHS}\label{ss:pqc_inspired}

From any feature map $\phi:\mathbb{R}^{d_1} \rightarrow \mathbb{R}^{d_2}$ and positive semidefinite matrix $A\in\mathbb{R}^{d_2\times d_2}$, one can define a  kernel ${K_{(\phi,A)}:\mathbb{R}^{d_1}\times\mathbb{R}^{d_1}\rightarrow \mathbb{R}}$ via 
\begin{equation}
K_{(\phi,A)}(x,x') = \langle \phi(x),A\phi(x')\rangle.
\end{equation}
This follows from the fact that every positive semidefinite $A$ defines a valid inner product on the feature space $\mathbb{R}^{d_2}$. Given some parameterized quantum circuit with data encoding strategy $\mathcal{D}$, we call the kernel $K_{(\phi_{\mathcal{D}},A)}$ defined via the feature map $\phi_\mathcal{D}$ a \textit{PQC-inspired} kernel, for any positive semidefinite $A$. For simplicity, we denote this kernel with $K_{(\mathcal{D},A)}$. Additionally, to every kernel we can associate a reproducing kernel Hilbert space (RKHS) which one can roughly think of as the model class optimized over by any kernel based algorithm using the kernel. Importantly, every function in this class is linear with respect to the feature map defining this kernel. As this space will play a key role in our analysis, we denote the RKHS associated with the PQC-kernel $K_{(\mathcal{D},A)}$ as $\mathcal{H}_{{(\mathcal{D},A)}}$, and refer to this RKHS as the \textit{PQC-RKHS}.

With this in hand, we now define a special case, which will be of particular interest to us. To do this, let $w = (w_0,\ldots,w_M)\in\mathbb{R}^{(|\Omega_{\mathcal{D}}|)}$ be any vector with all components non-zero, and define $A_w\in \mathbb{R}^{2M+1,2M+1}$ via
\begin{equation}
A_w = \mathrm{diag}\left(\frac{1}{\|w\|_2}(w_0,w_1,w_1,\ldots,w_M,w_M)\right), 
\end{equation}
and $\tilde{A}_w = A_w^2$, which is positive definite by construction. We now note that
\begin{align}
K_{(\mathcal{D},\tilde{A}_w)}(z,z') &= \langle \phi_{\mathcal{D}}(z),\tilde{A}_w\phi_{\mathcal{D}}(z')\rangle\\
&=\frac{1}{\|w\|_2^2}\sum_{i=0}^M w_i^2 \left[\cos(\langle \omega_i,z-z'\rangle)\right],
\end{align}
and therefore $K_{(\mathcal{D},\tilde{A}_w)}$ is both \textit{normalized} (i.e. $K_{(\mathcal{D},\tilde{A}_w)}(z,z) = 1$) and \textit{shift-invariant} (i.e. depends only on $z-z'$). These are convenient properties, which motivate this special case. We refer to $w$ as a \textit{reweighting} of the feature map $\phi_\mathcal{D}$, and to the kernel $K_{(\mathcal{D},\tilde{A}_w)}$ as the PQC-inspired kernel reweighted by $w$. For simplicity we will denote $K_{(\mathcal{D},\tilde{A}_w)}$ with $K_{(\mathcal{D},w)}$, and the associated PQC-RKHS with $\mathcal{H}_{(\mathcal{D},w)}$. 

Additionally, note that if we define $\phi_{(\mathcal{D},w)}:\mathbb{R}^{d}\rightarrow \mathbb{R}^{(2M+1)}$ via:
\begin{align}
\phi_{(\mathcal{D},w)}(z) &\coloneqq \frac{1}{\|w\|_2}\big( w_0,w_1\cos(\langle \omega_1, z\rangle), w_1\sin(\langle \omega_1, z\rangle),\ldots,w_{M}\cos(\langle \omega_{M}, z\rangle),w_{M}\sin(\langle \omega_{M}, z\rangle)\big)
\end{align}
Then one has
\begin{align}
K_{(\mathcal{D},w)}(z,z') &= \langle \phi_{\mathcal{D}}(z),\tilde{A}_w\phi_{\mathcal{D}}(z')\rangle \\
&= \langle \phi_{(\mathcal{D},w)}(z),\phi_{(\mathcal{D},w)}(z')\rangle, \label{eq:feature-map-form}
\end{align}
which will be useful to us in later sections.

We stress at this stage however that it is a-priori not clear how $K_{(\mathcal{D},w)}$ can be implemented efficiently -- i.e. in \textit{time} polynomial with respect to the size of the problem $d_\mathcal{S}$ and $|\mathcal{A}|$. Indeed,  $\phi_{(\mathcal{D},w)}:\mathbb{R}^{d_\mathcal{S} + d_\mathcal{A}}\rightarrow \mathbb{R}^{2M+1}$, with $M=|\Omega_{\mathcal{D}}\setminus \{\omega_0\}|$, and as discussed in Appendix~\ref{app:data-encoding}, $\Omega_{\mathcal{D}}$ has a cartesian product structure which leads to $M$ scaling \textit{exponentially} with respect to $d$. As a result the naive approach of evaluating $K_{(\mathcal{D},w)}$ by first evaluating $\phi_{(\mathcal{D},w)}(z')$ and $\phi_{(\mathcal{D},w)}(z)$ then taking the inner product, will unfortunately \textit{not} be efficient.

To circumvent this obstacle, two approaches have been developed:
\begin{enumerate}
\item \textbf{Efficient approximation by Random Fourier Features (RFF):} In this approach, developed in the dequantization context in Refs.~\cite{landman2022classicallyapproximatingvariationalquantum,Sweke2025potential,sahebi2025dequantizationsupervisedquantummachine}, one \textit{approximates} the kernel $K_{(\mathcal{D},w)}$ by using a polynomial dimension feature map with features sampled randomly from $\Omega_{\mathcal{D}}$, via a distribution defined by a suitable normalization of $w$. In particular, this approach is efficient whenever the distribution defined by $w$ can be sampled from efficiently. However, as this approach \textit{approximates} the kernel, it introduces an additional approximation error which has to be suitably bounded. Indeed, the main contributions of Refs~\cite{Sweke2025potential,sahebi2025dequantizationsupervisedquantummachine} are precisely to provide such bounds, in the context of supervised learning problems.
\item \textbf{Efficient exact implementation via tensor network contractions:} In Ref~\cite{sweke2025kernelbaseddequantizationvariationalqml} it was shown that the cartesian product structure which causes the exponential scaling of $M$ can in fact sometimes be \textit{exploited} to facilitate \textit{exact} and efficient implementation of $K_{(\mathcal{D},w)}$ via tensor network contractions. More specifically, it was shown in Ref.~\cite{sweke2025kernelbaseddequantizationvariationalqml} that the feature map $\phi_{(\mathcal{D},w)}$ is in fact a bond-dimension 1 matrix product state (MPS). Therefore whenever $w$ is \textit{induced from a symmetric MPS} (see Ref.~\cite{sweke2025kernelbaseddequantizationvariationalqml} for a precise definition) with polynomial bond-dimension, the kernel $K_{(\mathcal{D},w)}$ can be evaluated exactly and efficiently via tensor network contractions. In other words, for such weight vectors, the kernel $K_{(\mathcal{D},w)}$ admits a \textit{kernel trick}. In this work, we will focus on this approach to avoid any approximation resulting from randomized approximations.
\end{enumerate}

Finally, as discussed in Observation 1 in Ref.~\cite{Sweke2025potential}, we note that by definition of the RKHS $\mathcal{H}_{(\mathcal{D},w)}$, for any $w$ with all entries non-zero, we have
\begin{equation}
\mathcal{F}_{\Theta,\mathcal{D},O} \subseteq \mathcal{F}_{\mathcal{D},O} \subset \mathcal{F}_{\mathcal{D}} \subseteq \mathcal{H}_{(\mathcal{D},w)}.
\end{equation}
Specifically, the RKHS $\mathcal{H}_{(\mathcal{D},w)}$ of the PQC inspired kernel $K_{(\mathcal{D},w)}$ contains \textit{all} functions expressible by any PQC model with data-encoding strategy $\mathcal{D}$. This observation is what motivates the use of $\mathcal{H}_{(\mathcal{D},w)}$ as a \textit{surrogate} model class for PQC optimization over $\mathcal{F}_{\Theta,\mathcal{D},O}$.

\subsection{Generalized PQC Models}
For transparency, we stress that the PQC models we have discussed in the previous sections are \textit{not} the most general one could consider. Importantly, one does not have to make some of the assumptions we have made in Section~\ref{ss:pqc-models}, and without these assumptions our analysis may not be immediately applicable. More specifically, in contrast to the PQC models we have discussed above, one could also consider:
\begin{enumerate}
\item \textbf{Trainable output scaling:} In Ref.~\cite{qagents_gym}, the Q-values are computed as $Q(s,a) = \langle 0|U_\theta^\dagger(s) O_a U_\theta(s)|0\rangle \cdot w_{o_a}$, where $w_{o_a}\in\mathbb{R}$ is a \emph{trainable} classical weight that we have excluded from consideration. We note however that our results will still hold for such models, by considering instead the PQC model $Q(s,a) = \langle 0|U_\theta^\dagger(s) \tilde{O}_a U_\theta(s)|0\rangle$ with $\tilde{O}_a = \max(w_{O_a})O_a$.
\item \textbf{Incompatible action observables:} Given a PQC model $Q(s,a) = \langle 0|U_\theta^\dagger(s) O_a U_\theta(s)|0\rangle$, with an observable set $\{O_a\,|\,a\in\mathcal{A}\}$, we have assumed that there exists some parameterized unitary $V(a)$ such that $O_a = V^\dagger(a)OV(a)$ for some fixed observable $a$. As discussed in Appendix~\ref{app:action-encoding} this \textit{is} the case in previous works on quantum Q-learning~\cite{qagents_gym}, but will not be satisfied whenever there exist two observables $O_{a}$ and $O_b$ with different spectra. When this condition is not satisfied, our analysis will not apply to the PQC model. 
\item \textbf{Trainable data-encoding}: In the above, we have only considered data-encoding gates which themselves \textit{do not} depend on trainable parameters. However, as adopted for quantum Q-learning in~\cite{qagents_gym} and further studied in Ref.~\cite{BenJaderberg}, one could also consider data-encoding gates which themselves depend on variational parameters -- e.g., by parameterizing the data-encoding Hamiltonians. As our entire analysis relies on the existence of a \textit{fixed} effective feature map defined via the data-encoding strategy, our analysis will not hold for models such as these in which the data-encoding strategy is not fixed.
\end{enumerate}
With the above caveats in mind, we note that the assumptions we have made on the PQC models do cover the vast majority of models proposed and used in practice, and we leave it as an interesting direction to expand the tools and results of this work to the more general models mentioned above.

\section{A Dequantization Algorithm for Quantum Q-learning}
\label{sec:dequantization-algo}

Given the preliminaries above, in this section we are finally ready to develop and present our proposed classical algorithm for the dequantization of quantum Q-learning. 

\subsection{The Idea}

Let's assume that Bob runs Quantum Q-learning, as described in Section~\ref{ss:qdeeplearning}, using a parameterized quantum circuit model of the form
\begin{align}
Q_{\theta}(s,a) &= \langle 0|U^{\dagger}(s,\theta)O_aU(s,\theta)|0\rangle,\\
&= \langle 0|U^{\dagger}(s,\theta)V^\dagger(a)O V(a) U(s,\theta)|0\rangle,
\end{align}
where both $U$ and $V$ use a \textit{Hamiltonian data-encoding strategy} $\mathcal{D}$, with associated frequency set $\Omega_{\mathcal{D}}$, as introduced in Section~\ref{sec:pqc-kernels}. Additionally, we assume that 
\begin{equation}
\|O\|_\infty = V_\mathrm{max} \equiv \frac{R_\mathrm{max}}{1-\gamma}
\end{equation}
to ensure that $Q_\theta$ has the correct range. Succinctly, in the notation of the previous section, we assume that Bob is optimizing over the model class $\mathcal{F}_{\Theta,\mathcal{D},O}$.

Using the observation from the previous section that, for any (strictly non-zero) $w\in\mathbb{R}^{|\Omega_{\mathcal{D}}|}$, the PQC-RKHS $\mathcal{H}_{(\mathcal{D},w)}$ contains all functions expressible via the PQC model $Q_{\theta}(s,a)$, the high level idea for Alice is to do the following:
\begin{center}
\textit{Implement classical Q-learning using the PQC-RKHS $\mathcal{H}_{(\mathcal{D},w)}$ of the associated PQC-kernel $K_{(\mathcal{D},w)}$ as a model class.}
\end{center}
More specifically, instead of updating the PQC model via iterations of stochastic gradient descent based updates, in each iteration Alice can use Kernel Ridge Regression with kernel $K_{(\mathcal{D},w)}$ to identify the optimal $Q$-function in the associated RKHS $\mathcal{H}_{(\mathcal{D},w)}$. This approach is motivated by the fact that both the PQC optimization and the kernel method are optimizing over linear functions with respect to the same feature map!

We stress that the idea above is simply the natural instantiation of \textit{kernel-based dequantization} for quantum Q-learning, and as such an instance of \textit{surrogate-based} dequantization. This approach has already been explored and analyzed for the dequantization of PQC-based  regression~\cite{landman2022classicallyapproximatingvariationalquantum,sweke2025kernelbaseddequantizationvariationalqml, Sweke2025potential} as well as both PQC-based classification and quantum kernel methods~\cite{sahebi2025dequantizationsupervisedquantummachine}.

\subsection{Some Problems}

In order to \textit{prove} that the above idea works, there are two obstacles one has to overcome:

\begin{enumerate}
\item \textbf{Efficiency:} As discussed in the previous section, given that $K_{(\mathcal{D},w)}$ is explicitly specified by a feature map into a potentially exponentially large feature space, it is not immediately obvious how to efficiently evaluate the kernel $K_{(\mathcal{D},w)}$, for an arbitrary choice of weighting $w$.
\item \textbf{Relative performance guarantees:} One has to provide a rigorous analysis of Alice's proposed algorithm. In particular, as discussed in Section~\ref{sec:dequantization-game}, one requires sample and time complexity bounds for learning $\epsilon$-optimal policies (at least, relative to Bob's optimal solution).
\end{enumerate}

As discussed, for the first obstacle above, there are luckily by now a variety of strategies one can use. In particular, the first approach, in the setting where the distribution defined by $w$ can be efficiently sampled, is to \textit{approximate} the kernel using Random Fourier Features~\cite{landman2022classicallyapproximatingvariationalquantum,Sweke2025potential,sahebi2025dequantizationsupervisedquantummachine}. The second approach is to use tensor network techniques, given the results of Ref~\cite{sweke2025kernelbaseddequantizationvariationalqml} which show that $K_{(\mathcal{D},w)}$ can be efficiently and exactly evaluated whenever $w$ is induced from a symmetric matrix product state of polynomial bond dimension. In this work, we will always assume that the weight vector has such a structure, so that evaluation of $K_{(\mathcal{D},w)}$ can be done exactly efficiently. 

Unfortunately, the second obstacle above requires more work to overcome. As Alice's proposed algorithm is essentially parameterized Q-learning with an RKHS model class, a natural place to look for inspiration for how to analyze Alice's algorithm would be existing analysis of parameterized Q-learning algorithms. Unfortunately however, parameterized Q-learning (with either DNNs or PQCs as a model class) is used exclusively as a \textit{heuristic}, and there are a wide variety of obstacles which make its rigorous analysis extremely challenging. For example: samples collected along trajectories generated by environment interactions are temporally correlated, violating independence assumptions needed for statistical analysis. Additionally, learning proceeds via stochastic optimization in a nonconvex landscape, complicating convergence guarantees. Finally, the update targets depend on the current $Q$-estimate, leading to bootstrapping instability. As mentioned in Section~\ref{ss:deepqlearning}, heuristics like experience replay and target networks mitigate these issues \textit{empirically}, but obscure the underlying statistical structure, making theoretical bounds difficult to derive. While not all of these obstacles would be present in the analysis of Alice's proposed algorithm -- for example the SGD based updates would be replaced with rigorous empirical risk minimization via Kernel Ridge Regression -- significantly new technical tools and techniques would be necessary to provide a rigorous analysis of Alice's high-level algorithm as proposed above.

\subsection{A Simplifying Assumption: Uniform Environment Samples}\label{ss:simplifying-assumption}

In order to overcome the analytical challenges mentioned above, we make the simplifying assumption that both Alice and Bob are able to obtain samples $(s,a)$ from the \textit{uniform distribution} over $\mathcal{S}\times\mathcal{A}$. Specifically, we assume the following:

\begin{assumption}[Sampling model]\label{ass:sampling-model}
We assume the ability to draw state-action tuples $(s,a)$ from the uniform distribution $U(\mathcal{S}\times\mathcal{A})$ over $\mathcal{S}\times\mathcal{A}$.
\end{assumption}

We stress, that this assumption is \textit{not} immediately satisfied in the standard reinforcement learning setting, where state-action samples are collected through \textit{exploration} within the environment. However, in many practical reinforcement learning algorithms (including deep Q-learning), the samples collected via exploration are often collected into a ``replay buffer'', which is then sampled from uniformly to provide a dataset for model updates (so called ``experience replay''). As such, in the regime of a large replay buffer, this training data can sometimes be reasonably assumed to arise from a \textit{fixed} distribution. Moreover, in the regime of sufficient exploration -- in which every region of the state-action space is visited with non-negligible frequency -- this sampling distribution converges toward the uniform distribution as the buffer size grows. As such, the assumption of access to uniform samples from $\mathcal{S}\times\mathcal{A}$ provides an idealization of sampling from a sufficiently large replay buffer, after sufficient exploration.

We note also that this simplifying assumption, referred to in the reinforcement learning literature as the assumption of \textit{access to a generative model}~\cite{azar2012samplecomplexityreinforcementlearning}, is made in many works aimed at providing rigorous analysis of RL algorithms, and in particular in prior work on theoretical analysis of deep Q-learning~\cite{theoretical_deepQ}. Indeed,  this assumption isolates the statistical components of the analysis and allows us to treat the problem as a sequence of standard supervised regression problems over samples drawn independently from $\mathrm{Unif}(\mathcal{S}\times\mathcal{A})$. Without it, the temporal correlations between samples collected along environment trajectories complicate the statistical analysis significantly. Of course, while any dequantization results obtained under the sampling model of Assumption~\ref{ass:sampling-model} do not immediately yield dequantization results in the standard setting, they can be used to provide preliminary evidence in this regard. Furthermore, the classical algorithms for which provable guarantees can be obtained under this assumption are then natural \textit{heuristic} candidates for dequantization in the standard setting. Indeed, this is the case with deep Q-learning, which can be rigorously analyzed under Assumption~\ref{ass:sampling-model}~\cite{theoretical_deepQ}, and is then used heuristically in the standard setting.

\subsection{Fitted Q-Iteration}
Under assumption~\ref{ass:sampling-model}, Fitted Q-Iteration (FQI) is a parameterized Q-learning algorithm which admits rigorous theoretical guarantees~\cite{tree_based, FQI_bounds, theoretical_deepQ}. Indeed, FQI was introduced precisely to provide rigorous theoretical evidence in favour of heuristic parameterized Q-learning algorithms such as deep Q-learning (via the simplifying assumption of state-action samples from a fixed distribution).

In FQI,  Q-learning is achieved by repeatedly solving supervised regression problems to approximate Bellman updates. More specifically, given a function class $\mathcal{F}$ and an estimate $\hat{Q}_k$ of the action-value function at iteration $k$, we sample i.i.d.\ state–action pairs $\{(s_i,a_i)\}_{i=1}^m$ from a distribution $\sigma$ (which here is assumed to be the uniform distribution). For each $i\in[m]$, let $r_i$ and $s_i'$ denote the immediate reward and the next state obtained after executing action $a_i$ in state $s_i$. We then compute the regression targets $y_i=r_i+\gamma\,\max_{a\in\mathcal{A}}\hat{Q}_k(s_i',a)$ and update 
\begin{equation}
    \hat{Q}_{k+1}=\arg\min_{f\in\mathcal{F}}\frac{1}{m}\sum_{i=1}^m\bigl(y_i-f(s_i,a_i)\bigr)^2.
\end{equation}
This cleanly separates sampling, approximation, and optimization, enabling provable guarantees, under the assumption of transition samples from a fixed distribution. In light of this, FQI provides a natural candidate dequantization algorithm for Alice, under Assumption~\ref{ass:sampling-model}.

\subsection{The Dequantization Algorithm}\label{ss:dequant-algo}
With the above established, we can finally present our proposed dequantization algorithm in detail.

Given Assumption~\ref{ass:sampling-model}, assume that Bob runs quantum Q-learning using a PQC model
\begin{equation}
Q_{\theta}(s,a) 
= \langle 0|U^{\dagger}(s,\theta)V^\dagger(a)O V(a) U(s,\theta)|0\rangle,
\end{equation}
with data-encoding strategy $\mathcal{D}$. Alice's high-level classical dequantization algorithm is then the following:
\begin{enumerate}
\item Choose some reweighting vector $w$ (as discussed above, here we assume that this is induced by a symmetric matrix product state, to ensure efficient evaluation of $K_{(\mathcal{D},w)}$~\cite{sweke2025kernelbaseddequantizationvariationalqml}).
\item Run Fitted Q-iteration (FQI) using the RKHS $\mathcal{H}_{(\mathcal{D},w)}$ as a model class. Specifically, use Kernel Ridge Regression via $K_{(\mathcal{D},w)}$ to solve the regression problem in each iteration.
\end{enumerate}
To be more precise, let $\mathcal{Z}=\mathcal{S}\times\mathcal{A}$ denote the joint state-action space. As per Assumption~\ref{ass:sampling-model} we assume access to i.i.d. samples from $U(\mathcal{S}\times\mathcal{A}$). At each iteration $k$, collect a batch of $m$ i.i.d.\ pairs $\{(s_i, a_i)\}_{i=1}^m \sim U(\mathcal{S}\times\mathcal{A})$, and for each $i$ observe the corresponding reward $r_i$ and next state $s_i'$. Using the action-value estimate $\hat{Q}_k$ output from the previous iteration, we then construct the Bellman regression targets
\begin{equation}
  y_i=r_i+\gamma\, \max_{a\in\mathcal{A}} \hat{Q}_k(s_i',a),\quad i=1,...,m,
\end{equation}
from which we can extract the regression dataset $\{(z_i,y_i)\}_{i=1}^m$, with $z_i=(s_i,a_i)$.

The next RKHS predictor is then defined as the empirical risk minimizer with $\ell_2$-regularization in $\mathcal{H}_{(\mathcal{D},w)}$:
\begin{equation}\label{eq:krr_step}
F_{k+1} 
= \arg\min_{f \in \mathcal{H}_{(\mathcal{D},w)}} 
\left\{ \sum_{i=1}^m \bigl( y_i - f(z_i) \bigr)^2 + \lambda^2 \|f\|_{\mathcal{H}_{(\mathcal{D},w)}}^2 \right\},
\end{equation}
where $\lambda > 0$ is the regularization parameter. We note that the loss $\mathcal{L}$ used to evaluate the resulting policy is defined via the $L^1$-norm (Definition~\ref{def:loss-function}), whereas Eq.~\eqref{eq:krr_step} solves an $L^2$-regularized regression subproblem at each iteration; the connection between per-iteration $L^2$ errors and the global $L^1$ policy loss is made precise by the error propagation result presented in Section~\ref{sec:main-theorem}. We can then solve this problem using Kernel Ridge Regression with PQC-inspired kernel $K_{(\mathcal{D},w)}$.  More specifically, by the representer theorem, the solution admits the explicit finite-dimensional form
\begin{equation}\label{eq:explicit_predictor}
F_{k+1}(z) 
= \mathbf{K}_m(z)^\top (K_m + \lambda^2 \mathds{1}_m)^{-1} \mathbf{y}_m,
\end{equation}
where  $\mathbf{K}_m(z) = \bigl(K_{(\mathcal{D},w)}(z,z_1),\dots,K_{(\mathcal{D},w)}(z,z_m)\bigr)^\top \in \mathbb{R}^m$ is the vector of kernel evaluations, $K_m \in \mathbb{R}^{m\times m}$ is the Gram matrix with $(K_m)_{ij} = K_{(\mathcal{D},w)}(z_i,z_j)$ and $\mathbf{y}_m = (y_1,\dots,y_m)^\top$ is the vector of Bellman targets.

Finally, in order to ensure that the output action-value function $\hat{Q}_{k+1}$ respects the maximum possible rewards, i.e. that $\|\hat{Q}_{k+1}\|_\infty\leq V_\mathrm{max}$, we define the pointwise clipping operator $\Pi_{[-V_{\max},V_{\max}]}$ via
\begin{equation}
\Pi_{[-V_{\max},V_{\max}]}(f)(z) 
\equiv \max\bigl(-V_{\max},\,\min\bigl(V_{\max},\,f(z)\bigr)\bigr),
\end{equation}
and define the output action-value function of the $k$'th iteration via $\hat{Q}_{k+1}=\Pi_{[-V_{\max},V_{\max}]}F_{k+1}$.

The above process is repeated $K$ times, and after $K$ iterations, we output the greedy policy
\begin{equation}
  \pi_K(s) \in \arg\max_{a \in \mathcal{A}} \hat{Q}_K(s,a).
\end{equation}
This is summarized in Algorithm~\ref{alg:dequantization}. The algorithm mirrors quantum Q-learning's structure but replaces optimization over the PQC model class with classical KRR in $\mathcal{H}_{(\mathcal{D},w)}$. Assuming  classical evaluation of  $K_{(\mathcal{D},w)}(\cdot,\cdot)$ it runs with purely classical resources. In the next sections, we proceed to rigorously analyze this algorithm, with the goal of providing sufficient conditions for this to provide an efficient relative-to-best dequantization of quantum Q-learning.

\begin{algorithm}[H]\label{alg:FQI-KRR-PQC}
\caption{Fitted Q-Iteration with PQC inspired kernel $K_{(\mathcal{D},w)}$}
\label{alg:dequantization}
\begin{algorithmic}[1]
    \State \textbf{Input:} Data encoding strategy $\mathcal{D}$, discount factor $\gamma$, iterations $K$, samples per iteration $m$, regularization $\lambda>0$
    \State \textbf{Choose} Reweighting vector $w$ defining the kernel $K_{(\mathcal{D},w)}$
  \State Initialize $F_0 \equiv 0$ (or any function in $\mathcal{H}_{(\mathcal{D},w)}$)
  \State $\hat{Q}_0 \gets \Pi_{[-V_{\max},\,V_{\max}]}F_0$
    \For{$k =0,\dots,K-1$}
        \State Sample $m$ i.i.d.\ state--action pairs $\{(s_i,a_i)\}_{i=1}^m \sim U(\mathcal{S}\times\mathcal{A})$
        \For{$i=1,\dots,m$}
            \State Observe reward $r_i$ and next state $s_i'$
      \State $y_i \gets r_i + \gamma \max_{a\in\mathcal{A}} \hat{Q}_k(s_i',a)$
        \EndFor
        \State Form dataset $\{(z_i,y_i)\}_{i=1}^m$ with $z_i=(s_i,a_i)$ 
        \State Compute Gram matrix $(K_m)_{ij} = K_{(\mathcal{D},w)}(z_i,z_j)$
    \State $F_{k+1} \gets \mathbf{K}_m(\cdot)^\top (K_m + \lambda^2 \mathds{1}_m)^{-1} \mathbf{y}_m$ \Comment{Kernel Ridge Regression} 
    \State $\hat{Q}_{k+1}\gets\Pi_{[-V_{\max},\,V_{\max}]}F_{k+1}$ \Comment{Action-value clipping}
    \EndFor
  \State \textbf{Output:} Greedy policy $\pi_K(s) \in \arg\max_{a\in\mathcal{A}} \hat{Q}_K(s,a)$ 
\end{algorithmic}
\end{algorithm}

\section{The Dequantization Question}\label{s:dequantization-question}
Given Algorithm~\ref{alg:dequantization}, we can finally phrase precisely the question that we would like to answer. However, to do this, it will be helpful to first introduce some notation. In particular, given an MDP $(\mathcal{S},\mathcal{A},P,R,\gamma)$ with optimal action value function $Q^*$ (as defined in Section~\ref{sec:rl-preliminaries}), as well as a PQC model
\begin{equation}
Q_{\theta}(s,a) 
= \langle 0|U^{\dagger}(s,\theta)V^\dagger(a)O V(a) U(s,\theta)|0\rangle,
\end{equation}
with $\theta\in\Theta$, data-encoding strategy $\mathcal{D}$ and observable $O$ satisfying $\|O\|_\infty =V_\mathrm{max}$, we define the \textit{optimal PQC model} as the function $Q_{\theta^*}$ with
\begin{align}
\theta^* = \argmin_{\theta\in\Theta}\left[\mathcal{L}(Q_\theta)\right] 
\end{align}
where $\mathcal{L}$ is as per Definition~\ref{def:loss-function}. Intuitively, $Q_{\theta^*}$ is the optimal action-value function that can be obtained by optimization over the PQC model class $\mathcal{F}_{\Theta,\mathcal{D},O}$. With this established, we would like to answer the following question:

\begin{question}[Sufficient conditions for relative-to-best dequantization of Quantum Q-learning via Algorithm~\ref{alg:dequantization}]\label{q:sufficient-conditions}
Given a PQC model
\begin{equation}
Q_{\theta}(s,a) 
= \langle 0|U^{\dagger}(s,\theta)V^\dagger(a)O V(a) U(s,\theta)|0\rangle,
\end{equation}
with $\theta\in\Theta$, data-encoding strategy $\mathcal{D}$ and observable $O$ satisfying $\|O\|_\infty =V_\mathrm{max}$, what are a set of sufficient conditions for Algorithm~\ref{alg:dequantization} to provide a time and sample efficient relative-to-best dequantization of Quantum Q-learning? More specifically, under what conditions can one guarantee that running Algorithm~\ref{alg:dequantization}, with a specifically chosen $w$, and with
\begin{align}
m = \mathcal{O}\left(\mathrm{poly}\left(d_{\mathcal{S}},|\mathcal{A}|, 1/\epsilon,1/\delta\right)\right),\\
K = \mathcal{O}\left(\mathrm{poly}\left(d_{\mathcal{S}}, |\mathcal{A}|, 1/\epsilon,1/\delta\right)\right),
\end{align}
runs time-efficiently and yields, with probability at least $1-\delta$, a policy $\pi_K$ satisfying
\begin{equation}
\mathcal{L}(Q^{\pi_K}) \leq \mathcal{L}(Q_{\theta^*}) + \epsilon.
\end{equation}
\end{question}
In particular, we recall from the discussion in Section~\ref{sec:rl-preliminaries} that the \textit{problem size} is given by the tuple $(d_\mathcal{S},|\mathcal{A}|)$, where $d_\mathcal{S}$ is the dimension of the state space and $|\mathcal{A}|$ is the number of actions. 

Towards answering Question~\ref{q:sufficient-conditions}, let's define the optimal action-value function achievable by optimizing over the PQC-RKHS $\mathcal{H}_{(\mathcal{D},w)}$ via 
\begin{equation}
Q^*_{(\mathcal{D},w)} = \argmin_{Q\in \mathcal{H}_{(\mathcal{D},w)}}\left[\mathcal{L}(Q)\right].
\end{equation}
With this defined, note that
\begin{equation}
\mathcal{F}_{\Theta,\mathcal{D},O} \subset \mathcal{H}_{(\mathcal{D},w)}
\end{equation}
immediately implies
\begin{equation}
\mathcal{L}(Q^*_{(\mathcal{D},w)}) \leq \mathcal{L}(Q_{\theta^*}).
\end{equation}
As a result
\begin{equation}
\mathcal{L}(Q^{\pi_K}) \leq \mathcal{L}(Q^*_{(\mathcal{D},w)}) + \epsilon \implies \mathcal{L}(Q^{\pi_K}) \leq \mathcal{L}(Q_{\theta^*}) + \epsilon,
\end{equation}
and therefore any sufficient conditions which ensure $\mathcal{L}(Q^{\pi_K}) \leq \mathcal{L}(Q^*_{(\mathcal{D},w)}) + \epsilon$ are immediately also sufficient to ensure $\mathcal{L}(Q^{\pi_K}) \leq \mathcal{L}(Q_{\theta^*}) + \epsilon$.

With this in mind, the question we answer in Section~\ref{sec:main-theorem} is the following:
\begin{question}[Sufficient conditions for Algorithm~\ref{alg:dequantization} to be an efficient PAC learner]\label{q:sufficient-conditions-2}
Given a  data-encoding strategy~$\mathcal{D}$, what are a set of sufficient conditions to guarantee that running Algorithm~\ref{alg:dequantization}, with a specifically chosen $w$, and with
\begin{align}
m = \mathcal{O}\left(\mathrm{poly}\left(d_{\mathcal{S}},|\mathcal{A}|, 1/\epsilon,1/\delta\right)\right),\\
K = \mathcal{O}\left(\mathrm{poly}\left(d_{\mathcal{S}},|\mathcal{A}|, 1/\epsilon,1/\delta\right)\right),
\end{align}
runs time-efficiently and yields, with probability at least $1-\delta$, a policy $\pi_K$ satisfying
\begin{equation}\label{eq:goal}
\mathcal{L}(Q^{\pi_K}) \leq \mathcal{L}(Q^*_{(\mathcal{D},w)}) + \epsilon.
\end{equation}
\end{question}
We stress that Question~\ref{q:sufficient-conditions-2} is a \textit{purely classical} question about the properties of FQI with kernel ridge regression via PQC-inspired kernels. However, as discussed, any sufficient conditions identified in answering Question~\ref{q:sufficient-conditions-2} immediately provide sufficient conditions for Algorithm~\ref{alg:dequantization} to provide an efficient relative-to-best dequantization of quantum Q-learning.

\section{Complexity Bounds for FQI with Kernel Ridge Regression and PQC Kernels}
\label{sec:main-theorem}
In this section we proceed to answer Question~\ref{q:sufficient-conditions-2}, by providing a rigorous analysis of Fitted Q-Iteration via Kernel Ridge Regression with PQC inspired kernels (Algorithm~\ref{alg:dequantization}). The section is structured as follows: 

\begin{enumerate}
\item Section~\ref{sub:assumptions}: Here we describe and motivate the assumptions that we will use in answering Question~\ref{q:sufficient-conditions-2}.
\item Section~\ref{ss:putting-together-classical}: We state the main classical result of this work -- Theorem~\ref{thm:main-classical} -- which provides an answer to Question~\ref{q:sufficient-conditions-2}. We sketch the proof, and defer the full proof to Appendix~\ref{app:proof-main-theorem}.
\end{enumerate}

\subsection{Assumptions}\label{sub:assumptions}

In addition to the foundational Assumption~\ref{ass:sampling-model} of uniform state-action samples, as discussed in Section~\ref{ss:simplifying-assumption}, we will require three other assumptions, all of which will then later appear as sufficient conditions for dequantization. Here we present and discuss these assumptions. Additionally, for each assumption, we provide (at least) two accompanying paragraphs. The first, labeled \textit{Justification}, argues why the assumption is plausible or natural in the context of quantum Q-learning. The second, labeled \textit{Role}, explains precisely why the assumption is needed for our analysis, and what would go wrong in its absence. This structure is intended to make the role of each assumption transparent, and to help identify which of them may be relaxed in future work.

\begin{assumption}[Bellman target realizability for clipped RKHS predictors]\label{as:func_classes}
For every RKHS predictor $F$ encountered by Algorithm~\ref{alg:dequantization}, define $Q=\Pi_{[-V_{\max},V_{\max}]}F$. Then the Bellman target
\begin{equation}
g_F \equiv TQ = T\!\left(\Pi_{[-V_{\max},V_{\max}]}F\right)
\end{equation}
belongs to $\mathcal{H}_{(\mathcal{D},w)}$.
\end{assumption}

\textbf{Justification:} This assumption requires that the Bellman targets encountered during the execution of Algorithm~\ref{alg:dequantization} remain within the hypothesis class $\mathcal{H}_{(\mathcal{D},w)}$. It is automatically satisfied whenever $\mathcal{H}_{(\mathcal{D},w)}$ is a \emph{universal} RKHS, since in that case all bounded continuous functions, and in particular all Bellman targets of bounded value functions, belong to the RKHS. For the finite-dimensional PQC-RKHS considered here, universality does not hold in general, and so the assumption is non-trivial: it requires that the specific frequency structure of the data-encoding strategy $\mathcal{D}$ is rich enough to represent the Bellman targets that arise during training. We acknowledge that this is a somewhat restrictive structural condition, and we sketch in Appendix~\ref{ap:relaxing_closure} how to relax it at the cost of an additional approximation error term in the final bound that we obtain with this assumption.

\textbf{Role:} This assumption allows us to exploit the reproducing property of the RKHS when analyzing the regression error at each iteration of Algorithm~\ref{alg:dequantization}. Specifically, by guaranteeing that the ideal regression target $g_k \equiv T\hat{Q}_{k-1}$ belongs to $\mathcal{H}_{(\mathcal{D},w)}$, it enables us to express the bias error of the kernel ridge regression predictor as an inner product in the RKHS, which in turn yields the pointwise error bounds that underpin Lemma~\ref{lem:sufficient_m_K} and, ultimately, Theorem~\ref{thm:main-classical}. Without this assumption, the regression targets would not be representable in the hypothesis class, introducing an irreducible approximation error that must be separately controlled (as discussed in Appendix~\ref{ap:relaxing_closure}).

\begin{remark} As discussed above, Assumption~\ref{as:func_classes} involves the predictors output at each iteration of Algorithm~\ref{alg:dequantization}, and as a result is not easy to verify a-priori. A stronger assumption, which implies the above, and is perhaps easier to verify (but may still be hard!), is the assumption that the \textit{clipped} RKHS is closed under the Bellman operator, i.e. that
\begin{equation}\label{eq:stronger}
T\!\left(\Pi_{[-V_{\max},V_{\max}]}(\mathcal{H}_{(\mathcal{D},w)})\right)\subseteq \mathcal{H}_{(\mathcal{D},w)}.
\end{equation}
In particular, this is the assumption that the image of the \textit{clipped} RKHS under the Bellman operator is contained in the RKHS, whereas technically for our analysis we only require this for the Bellman targets of those predictors encountered during optimization. However, we state this here to make clear that Equation~\ref{eq:stronger} is sufficient but not necessary for Assumption~\ref{as:func_classes}. As discussed in Section~\ref{ss:putting-together} we view it as an interesting and important problem to understand the extent to which these assumptions may be verified in practice.
\end{remark}

Before stating our next assumption, we note the following corollary of the stronger closure condition~\eqref{eq:stronger}:

\begin{corollary}[Bellman closure~\eqref{eq:stronger} implies realizability]\label{cor:realizability}
Under the closure condition~\eqref{eq:stronger}, the optimal action-value function satisfies $Q^*\in\mathcal{H}_{(\mathcal{D},w)}$.
\end{corollary}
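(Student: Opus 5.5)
The plan is to exhibit $Q^*$ as an element of the image $T\bigl(\Pi_{[-V_{\max},V_{\max}]}(\mathcal{H}_{(\mathcal{D},w)})\bigr)$. The closure condition~\eqref{eq:stronger} then places it in $\mathcal{H}_{(\mathcal{D},w)}$ directly. The only facts needed are the Bellman optimality equation~\eqref{eq:bellman-opt-1}, $Q^* = TQ^*$, and the uniform bound $\|Q^*\|_\infty \le V_{\max}$ established in Section~\ref{sec:rl-preliminaries}.

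First, I observe that $\|Q^\pi\|_\infty\le V_{\max}$ holds for every policy, so the supremum also satisfies $\|Q^*\|_\infty \le V_{\max}$. The clipping operator therefore acts as the identity on $Q^*$ and on anything else with range in $[-V_{\max},V_{\max}]$.

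Second, I need some $F\in\mathcal{H}_{(\mathcal{D},w)}$ whose clipping, once $T$ is applied, gives $Q^*$. The most direct choice is any $F$ with $\Pi F = Q^*$. Since this requires knowing $Q^*$ is already in the RKHS, which is circular, I will instead use a fixed-point argument. Take $F_0 = 0\in\mathcal{H}_{(\mathcal{D},w)}$ and define $Q_0 = \Pi F_0 = 0$. Set $F_{k+1} = T(\Pi F_k)$. By~\eqref{eq:stronger}, induction gives $F_{k+1}\in\mathcal{H}_{(\mathcal{D},w)}$.

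Next, since rewards are bounded by $R_{\max}$, we have $\|TQ\|_\infty \le R_{\max}+\gamma V_{\max} = V_{\max}$ whenever $\|Q\|_\infty\le V_{\max}$. Hence every $F_k$ with $k\ge1$ already lies in $[-V_{\max},V_{\max}]$, so $\Pi F_k = F_k$ and the recursion reduces to $F_{k+1} = TF_k$. Because $T$ is a $\gamma$-contraction in $\|\cdot\|_\infty$ with fixed point $Q^*$, this gives $\|F_k - Q^*\|_\infty \le \gamma^k V_{\max}\to 0$.

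The main obstacle is passing to the limit: I must show $Q^*\in\mathcal{H}_{(\mathcal{D},w)}$ given only that $Q^*$ is a uniform limit of RKHS elements. Here I use finite-dimensionality. By construction, $\mathcal{H}_{(\mathcal{D},w)}$ equals the span of the $2M+1$ functions making up $\phi_{\mathcal{D}}$, so it is a finite-dimensional subspace of the bounded functions on $\mathcal{Z}$. Finite-dimensional subspaces of a normed space are closed, so the uniform limit $Q^*$ lies in $\mathcal{H}_{(\mathcal{D},w)}$.

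If one wants to avoid the limit entirely, a cleaner alternative is available. The map $G := T\circ\Pi$ sends the closed, nonempty set $\Pi(\mathcal{H}_{(\mathcal{D},w)})\cap\mathcal{H}_{(\mathcal{D},w)}$, intersected with the $V_{\max}$-ball, into itself and is a contraction there. Banach's theorem then gives a fixed point inside $\mathcal{H}_{(\mathcal{D},w)}$, and by uniqueness of the fixed point of $T$ on bounded functions this fixed point must be $Q^*$. Closedness again comes from finite dimensionality, so either route works.
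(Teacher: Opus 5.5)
Your proposal is correct and is essentially the paper's argument. Your main route just unrolls the Banach fixed-point iteration $F_{k+1}=TF_k$ inside the bounded part of $\mathcal{H}_{(\mathcal{D},w)}$, identifies the limit with $Q^*$ directly via $\|F_k-Q^*\|_\infty\le\gamma^k V_{\max}$, and then uses that a finite-dimensional subspace is closed under $\|\cdot\|_\infty$; your alternative is literally the paper's proof, since $\Pi_{[-V_{\max},V_{\max}]}(\mathcal{H}_{(\mathcal{D},w)})\cap\mathcal{H}_{(\mathcal{D},w)}\cap\{\|f\|_\infty\le V_{\max}\}$ is just the paper's set $\mathcal{B}=\{f\in\mathcal{H}_{(\mathcal{D},w)}:\|f\|_\infty\le V_{\max}\}$, on which the paper applies Banach's theorem and then invokes uniqueness of the fixed point of $T$ among bounded functions.
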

\begin{proof}
Define the set $\mathcal{B}\equiv\{f\in\mathcal{H}_{(\mathcal{D},w)} : \|f\|_\infty\leq V_{\max}\}$. We show that $T$ maps $\mathcal{B}$ into itself. Take any $f\in\mathcal{B}$. Since $\|f\|_\infty\leq V_{\max}$, clipping acts as the identity: $\Pi_{[-V_{\max},V_{\max}]}f = f$, so $f\in\Pi_{[-V_{\max},V_{\max}]}(\mathcal{H}_{(\mathcal{D},w)})$. The closure condition~\eqref{eq:stronger} then gives $Tf\in\mathcal{H}_{(\mathcal{D},w)}$. Moreover, $\|Tf\|_\infty\leq R_{\max}+\gamma\|f\|_\infty\leq R_{\max}+\gamma V_{\max}= V_{\max}$, so $Tf\in\mathcal{B}$.

Now, $\mathcal{B}$ is a closed, bounded, convex subset of the finite-dimensional space $\mathcal{H}_{(\mathcal{D},w)}$, hence complete under the sup norm. Since the Bellman optimality operator $T$ is a $\gamma$-contraction on bounded functions under $\|\cdot\|_\infty$ and maps $\mathcal{B}$ into itself, the Banach fixed point theorem guarantees a unique fixed point of $T$ in $\mathcal{B}$. Since $Q^*$ is the unique fixed point of $T$ on the space of all bounded functions, we conclude $Q^*\in\mathcal{B}\subseteq\mathcal{H}_{(\mathcal{D},w)}$.
\end{proof}

In particular, an immediate consequence of Corollary~\ref{cor:realizability} is that $\mathcal{L}(Q^*_{(\mathcal{D},w)})=0$, since $Q^*$ itself belongs to $\mathcal{H}_{(\mathcal{D},w)}$ and achieves loss zero. Given this, we note that the condition
\begin{equation}
\mathcal{L}(Q^{\pi_K}) \leq \mathcal{L}(Q^*_{(\mathcal{D},w)}) + \epsilon.
\end{equation}
at the heart of Question~\ref{q:sufficient-conditions-2}, becomes 
\begin{equation}
\mathcal{L}(Q^{\pi_K}) \leq \epsilon.
\end{equation}
With this established, we now move on to our second main assumption

\begin{assumption}[Bounded transition density]
\label{ass:bounded-density}
Recalling that $P:\mathcal{S}\times\mathcal{A}\rightarrow \mathcal{P}(S)$ denotes the transition kernel of the MDP, we assume that there exists some finite $p_\mathrm{max}(d_\mathcal{S},|\mathcal{A}|) >0$ such that for every $(s,a)\in\mathcal{S}\times\mathcal{A}$ and every measurable set $B\subseteq\mathcal{S}$,
\begin{equation}
P(B\mid s,a)\le p_{\max}\,U(\mathcal{S})(B),
\end{equation}
where $U(\mathcal{S})$ denotes the uniform distribution over $\mathcal{S}$.
\end{assumption}

\textbf{Justification:}
The assumption requires each transition kernel to be uniformly dominated by the uniform measure on the state space. Equivalently, no measurable subset $B\subseteq\mathcal S$ can receive more than $p_{\max}$ times as much transition probability as its mass under the uniform measure. For a finite state space, this condition is automatically satisfied. Indeed, if $|\mathcal S|$ is finite, then $U(\mathcal S)(s')=\frac{1}{|\mathcal S|}$ for every $s'\in\mathcal S$. Since $P(s'\mid s,a)\le 1$, we have $P(s'\mid s,a)\le|\mathcal S|\,U(\mathcal S)(s')$, and, by summing over the elements of any set $B\subseteq\mathcal S$, we get $P(B\mid s,a)\le|\mathcal S|\,U(\mathcal S)(B)$. Thus one may always take $p_{\max}=|\mathcal S|$. In particular, the condition also allows deterministic transitions, which means that if $P(\cdot\mid s,a)$ is concentrated on a single state, the same choice $p_{\max}=|\mathcal S|$ suffices. For continuous state spaces, the condition is instead a regularity assumption, requiring the transition distribution to be uniformly controlled relative to the uniform measure on $\mathcal S$. This is standard in the analysis of fitted value iteration with off-policy sampling; see, e.g., Assumption~A1 of~\cite{FQI_bounds}.

\textbf{Role:} This assumption controls the mismatch between the distribution from which training data is sampled and the distribution under which the learned policy is evaluated. More technically, it enables us to bound the concentrability coefficient $\phi_{U,U}$ appearing in the FQI error propagation (Theorem~\ref{th:fqi_error_propagation}) underlying Theorem~\ref{thm:main-classical}, which governs how per-iteration regression errors amplify into suboptimality of the final policy. Under this assumption, $\phi_{U,U}$ grows at most as $\sqrt{|\mathcal{A}|\,p_{\max}}$, ensuring that the overall sample complexity in Theorem~\ref{thm:main-classical} remains polynomial in the problem size whenever $p_{\max}$ is itself polynomial.

With this established, we now state our final assumption, which concerns the choice of the reweighting vector $w$ defining the PQC-inspired kernel $K_{(\mathcal{D},w)}$. While the above two assumptions involve the structure of the \textit{environment}, something over which Alice does not have control, this assumption is on the weight vector $w$ chosen by Alice when running Algorithm~\ref{alg:dequantization}, and therefore can always be chosen to be satisfied. In particular, we require that the ordered weights decay at most inverse polynomially. More specifically:

\begin{assumption}[Inverse polynomial decay of kernel weights]
\label{ass:poly-eigendecay}
Assume the weight vector $w = (w_0,\dots,w_M)$ chosen in line 2 of Algorithm~\ref{alg:dequantization} has been permuted to be non-increasing, i.e., $w_0 \ge w_1 \ge \dots \ge w_M > 0$, and that there exist constants $C>0$ and $p>1$ such that
\begin{equation}
\frac{w^2_i}{\|w\|^2_2} \le C \, (i+1)^{-p} \qquad\text{for all } i=0,1,\dots,M.
\end{equation}
\end{assumption}

\begin{remark}\label{rem:alice-chooses} As already mentioned, we note that the weighting vector $w$ is \textit{chosen by Alice} when she chooses to run Algorithm~\ref{alg:dequantization}. As such, Alice can always ensure that Assumption~\ref{ass:poly-eigendecay} is satisfied by design.
\end{remark}

\textbf{Justification:} Unlike Assumptions~\ref{as:func_classes} and~\ref{ass:bounded-density}, which impose structural conditions on the environment, this assumption concerns the weight vector $w$ chosen by Alice. As noted in Remark~\ref{rem:alice-chooses}, Alice can always ensure that this condition is satisfied by design. For this reason, no separate justification of its plausibility as a property of the environment is needed.

\textbf{Role:} One can straightforwardly show that the elements of the weight vector $w$ are directly related to the eigenvalues of the kernel integral operator associated with the reweighted PQC kernel $K_{(\mathcal{D},w)}$~\cite{Sweke2025potential}. Additionally, it is known that these eigenvalues control the \textit{inductive} bias of the kernel. More specifically, the eigenvalue associated with an eigenfunction determines how rapidly the coefficient of that eigenfunction is resolved during kernel ridge regression. As such, this assumption ensures a decay of the kernel integral operator eigenvalues which is sufficient to ensure a polynomial query complexity of Algorithm~\ref{alg:dequantization}.

\begin{remark}[Assumptions as sufficient conditions]
Assumption~\ref{as:func_classes} and all other assumptions in this section are \emph{sufficient conditions for the theoretical analysis}, not necessary conditions for practical success. Both classical and quantum Q-learning routinely succeed as heuristics even when these assumptions are violated (e.g., when the hypothesis class is not closed under the Bellman operator). The assumptions serve to make the bounds rigorous and should not be interpreted as restrictions for algorithm design.
\end{remark}

\subsection{Complexity Bounds}\label{ss:putting-together-classical}

We are now in a position to state the main result of this work: Namely, a set of sufficient conditions under which Algorithm~\ref{alg:dequantization} provides a time and sample efficient PAC learner. This provides a concrete answer to Question~\ref{q:sufficient-conditions-2}, and therefore, as discussed in Section~\ref{s:dequantization-question}, has clear implications for the dequantization of quantum Q-learning, which we make explicit in Section~\ref{ss:putting-together}.

Before stating the result, we introduce a key quantity that appears in the theorem statement. Given the PQC-RKHS $\mathcal{H}_{(\mathcal{D},w)}$ associated with the kernel $K_{(\mathcal{D},w)}$, any function $f\in\mathcal{H}_{(\mathcal{D},w)}$ has an associated \textit{RKHS norm} $\|f\|_{\mathcal{H}_{(\mathcal{D},w)}}$.\label{ss:RKHS-norm} This norm can be understood as a measure of the \textit{misalignment} between the frequency spectrum of $f$ and the reweighting vector $w$ defining the kernel (or equivalently, the eigenvalues of the kernel integral operator). More specifically, when the significant frequency components of $f$ receive high weight under $w$, the RKHS norm remains small (often constant or polynomial in $d$), whereas when the function's energy is concentrated on low-weight frequencies, the norm can grow exponentially. As such, the quantity
\begin{equation}
G_K \;\equiv\; \max_{1 \le k \le K} \bigl\| T\hat{Q}_{k-1} \bigr\|_{\mathcal{H}_{(\mathcal{D},w)}}
\end{equation}
captures the worst-case misalignment of the Bellman targets encountered during training with the kernel reweighting. We refer to Ref~\cite{Sweke2025potential} for a more detailed discussion of this quantity, which plays a central role in the sample complexity of Algorithm~\ref{alg:dequantization}.

With this in mind, we have the following result:

\begin{theorem}[Sufficient conditions for Algorithm~\ref{alg:dequantization} to be an efficient PAC learner]\label{thm:main-classical}
Assume the following:
\begin{enumerate}
\item Assumption~\ref{ass:sampling-model} (state-action samples from the uniform distribution).
\item The closure condition~\eqref{eq:stronger} (image of clipped RKHS under Bellman operator contained in the RKHS) and Assumption~\ref{ass:bounded-density}, with 
\begin{equation}
p_{\max} = \mathcal{O}(\mathrm{poly}(d_\mathcal{S},|\mathcal{A}|)).
\end{equation}
\item Assumption~\ref{ass:poly-eigendecay} (polynomial decay of ordered weight vector $w$).
\item The data-encoding strategy $\mathcal{D}$ satisfies:
\begin{align}
N_\mathrm{max} &= 2^{O(\mathrm{poly}(d_\mathcal{S},|\mathcal{A}|))}\\
\lambda_\mathrm{max} &= 2^{O(\mathrm{poly}(d_\mathcal{S},|\mathcal{A}|))}
\end{align}
\item For all $k\in [K]$, the Bellman targets $T\hat{Q}_{k-1}$ are sufficiently well-aligned with $w$ such that
\begin{equation}\label{eq:condition1-thm}
G_K \;\equiv\; \max_{1 \le k \le K} \bigl\| T\hat{Q}_{k-1} \bigr\|_{\mathcal{H}_{(\mathcal{D},w)}} = \mathcal{O}(\mathrm{poly}(d_\mathcal{S},|\mathcal{A}|)).
\end{equation}
\end{enumerate}
Then
\begin{align}
m &= \tilde{\mathcal{O}}\!\left(\mathrm{poly}\!\left(d_\mathcal{S},|\mathcal{A}|,1/\epsilon,1/\delta\right)\right) \\
K &= \mathcal{O}(\log(1/\epsilon))
\end{align}
is sufficient for Algorithm~\ref{alg:dequantization} to yield, with probability at least $1-\delta$, a policy $\pi_K$ satisfying
\begin{equation}
\mathcal{L}(Q^{\pi_K}) \leq \epsilon = \mathcal{L}(Q^*_{(\mathcal{D},w)}) + \epsilon.
\end{equation}
Additionally, provided the weight vector $w$ is induced from a symmetric matrix product state with bond-dimension polynomial in $(d_\mathcal{S}, |\mathcal{A}|)$, the kernel $K_{(\mathcal{D},w)}$ can be evaluated efficiently, so that Algorithm~\ref{alg:dequantization} is both time and sample efficient.
\end{theorem}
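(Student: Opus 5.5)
The argument has three layers: an FQI error-propagation step, a per-iteration kernel ridge regression error bound, and a bookkeeping step that turns the theorem's hypotheses into polynomial bounds. First, I would use Corollary~\ref{cor:realizability} to reduce the goal to $\mathcal{L}(Q^{\pi_K})\le\epsilon$.

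\textbf{Step 1 (error propagation).} I would invoke the FQI error propagation result (Theorem~\ref{th:fqi_error_propagation}, in the style of~\cite{FQI_bounds}) with sampling and evaluation distributions both uniform. It should give a bound of the form
\begin{equation}
\mathcal{L}(Q^{\pi_K}) \;\lesssim\; \frac{2\gamma}{(1-\gamma)^2}\,\phi_{U,U}\,\max_{1\le k\le K}\bigl\|\hat{Q}_k - T\hat{Q}_{k-1}\bigr\|_{L^2(U)} \;+\; \frac{2\gamma^{K}V_{\max}}{1-\gamma}.
\end{equation}
Assumption~\ref{ass:bounded-density} bounds $\phi_{U,U}\le\sqrt{|\mathcal{A}|p_{\max}}$, which is polynomial by hypothesis. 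Choosing $K=\mathcal{O}(\log(1/\epsilon)/\log(1/\gamma))$ makes the second term at most $\epsilon/2$. It then suffices to make each per-iteration $L^2(U)$ error at most $\epsilon'=\epsilon(1-\gamma)^2/(4\gamma\phi_{U,U})$, with failure probability $\delta/K$ per iteration, and take a union bound over iterations.

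\textbf{Step 2 (one KRR step).} I would condition on $\hat{Q}_{k-1}$, which is independent of the fresh batch drawn at iteration $k$. The targets then satisfy $y_i=g_k(z_i)+\eta_i$, where $g_k=T\hat{Q}_{k-1}\in\mathcal{H}_{(\mathcal{D},w)}$ by the closure condition~\eqref{eq:stronger}. The noise $\eta_i$ has mean zero and is bounded by $2V_{\max}$. Because $g_k$ is bounded by $V_{\max}$, clipping is a contraction toward it, so $\|\hat{Q}_k-g_k\|\le\|F_k-g_k\|$.

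I would bound $\|F_k-g_k\|_{L^2(U)}$ by a standard bias--variance analysis of KRR.
\begin{itemize}
\item The bias is $\lambda^2$-regularization bias, bounded via the reproducing property by roughly $\lambda\,\|g_k\|_{\mathcal{H}}/\sqrt m \le \lambda G_K/\sqrt m$.
\item The variance is controlled by the effective dimension $\mathcal{N}(\lambda^2/m)=\sum_i \mu_i/(\mu_i+\lambda^2/m)$. Here the integral-operator eigenvalues are $\mu_i=w_i^2/\|w\|^2$, because the Fourier features are orthogonal under $U$. This orthogonality holds exactly only for integer frequencies, so I would need to handle this carefully, for example via a near-orthogonality bound.
\end{itemize}
Assumption~\ref{ass:poly-eigendecay} gives $\mathcal{N}(t)=\mathcal{O}(t^{-1/p})$, independent of $M$. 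Concentration of the empirical covariance operator (a Bernstein-type inequality for Hilbert-space-valued variables, using $K(z,z)=1$) then yields, with probability $1-\delta/K$,
\begin{equation}
\|F_k-g_k\|_{L^2(U)}^2 \;\lesssim\; \Bigl(\frac{\lambda^2}{m}G_K^2 + \frac{V_{\max}^2\,\mathcal{N}(\lambda^2/m)}{m}\Bigr)\log\frac{K}{\delta}.
\end{equation}
Optimizing $\lambda$ gives a rate of order $m^{-p/(p+1)}$ times $\mathrm{poly}(G_K,V_{\max})$. Setting this below $\epsilon'^2$ gives $m=\tilde{\mathcal{O}}(\mathrm{poly}(d_\mathcal{S},|\mathcal{A}|,1/\epsilon,\log(1/\delta)))$ under condition~\eqref{eq:condition1-thm}.

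\textbf{Role of $N_{\max}$ and $\lambda_{\max}$.} These quantities bound $M\le (2N_{\max}\lambda_{\max}+1)^{d}$-type counts and frequency magnitudes. I expect them to enter only logarithmically, for example through covering or near-orthogonality arguments, which is why exponential bounds on them suffice. I would verify this bookkeeping carefully.

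\textbf{Step 3 (efficiency).} Time efficiency follows from~\cite{sweke2025kernelbaseddequantizationvariationalqml}: each kernel evaluation is a polynomial-size MPS contraction. Each iteration then costs $\mathcal{O}(m^2)$ kernel evaluations for the Gram matrix, $\mathcal{O}(m^3)$ for the linear solve, and $\mathcal{O}(m|\mathcal{A}|)$ evaluations of $\hat{Q}_{k-1}$ at next states to form the targets.

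\textbf{Main obstacle.} I expect Step 2 to be the main obstacle. It requires a dimension-free KRR bound, since $M$ is exponential, that depends only on the eigendecay and $G_K$. It also requires identifying the weights $w_i^2/\|w\|^2$ with the operator spectrum under the uniform measure on $\mathcal{S}\times\mathcal{A}$, where $\mathcal{A}$ is discrete and the frequencies may be non-integer. My first attempt would be to bound the relevant spectrum by the trace-class structure of the kernel, and to apply the Caponnetto--De Vito / Smale--Zhou style concentration without requiring exact eigenfunctions.
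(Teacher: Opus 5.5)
Your proposal is correct in outline but takes a genuinely different route from the paper in Step 2. The overall architecture is the same as the paper's: Theorem~\ref{th:fqi_error_propagation} with $\phi_{U,U}\le\sqrt{|\mathcal{A}|\,p_{\max}}$, clipping as a contraction toward $T\hat{Q}_{k-1}$, conditioning on the past with a union bound over rounds, and the MPS kernel trick for time efficiency.

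\textbf{The paper's route.} The paper never uses the spectrum under $U(\mathcal{Z})$.
\begin{itemize}
\item It bounds $|T\hat{Q}_{k-1}(z)-F_k(z)|$ pointwise by $\sigma_m(z)\bigl(G_K+\frac{\tau}{\lambda}\sqrt{2\log(2/\delta_1)}\bigr)$.
\item It extends this to all of $\mathcal{Z}$ with a covering net and Lipschitz bounds. This is the only place $L_{\mathcal{D}}$, hence $N_{\max}$ and $\lambda_{\max}$, enters, logarithmically through $|\mathcal{Z}_m|$.
\item It controls $\mathbb{E}_{\{z_i\}}\mathbb{E}_z[\sigma_m^2(z)]$ by $\Gamma(m+1)/(m+1)$ via exchangeability and Lemma~\ref{lem:srinivas}, then applies Markov's inequality.
\end{itemize}
Because $\Gamma$ is a supremum over point sets, Lemma~\ref{lem:gamma_intermediate} needs only $K(z,z)=1$ and the weight decay. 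Your orthogonality worry therefore never arises there. The price is a $\mathrm{poly}(1/\delta)$ dependence from Markov, a fixed $\lambda$ with resulting exponent $p/(p-1)$, and the discretization machinery.

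\textbf{What your route buys.} Once completed, your operator-concentration route gives $\mathrm{polylog}(1/\delta)$ and the faster rate $m^{-p/(p+1)}$ from tuning $\lambda$. This tuning is legitimate because $\lambda$ is an input to Algorithm~\ref{alg:dequantization}. It also makes Condition~4 superfluous.

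\textbf{Closing your open step.} The step you leave open closes with the paper's truncation idea rather than eigenfunctions or near-orthogonality.
\begin{itemize}
\item Set $\Psi(z)=(\sqrt{\nu_i}\,\phi^{(i)}(z))_i$ and $C=\mathbb{E}_U[\Psi\Psi^\top]$. This $C$ has the same nonzero spectrum as the covariance operator on $\mathcal{H}_{(\mathcal{D},w)}$.
\item The inequality $(a+b)(a+b)^\top\preceq 2aa^\top+2bb^\top$ gives $C\preceq 2C_{\le T}\oplus 2C_{>T}$.
\item Since $x\mapsto x/(x+t)$ is operator monotone, $\mathcal{N}(t)\le T+1+2\,\mathrm{tr}(C_{>T})/t\le T+1+2\epsilon_T/t$.
\item Choosing $T\sim t^{-1/p}$ yields $\mathcal{N}(t)=\mathcal{O}(t^{-1/p})$.
\item The remaining ingredient, $\sup_z\|(\Sigma+t)^{-1/2}K(\cdot,z)\|_{\mathcal{H}}^2\le 1/t$, follows directly from normalization.
\end{itemize}
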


\begin{remark}[Explicit complexity bounds]
The asymptotic bounds stated above are given for simplicity. The full non-asymptotic expressions for the sufficient sample size $m$ are significantly more involved, and are worked out in detail in Appendix~\ref{app:proof-main-theorem}.
\end{remark}

\begin{remark}[Efficient evaluation of $K_{(\mathcal{D},w)}$] As already discussed, we note that the time efficiency of evaluating $K_{(\mathcal{D},w)}$ when $w$ is induced by a symmetric matrix product state of polynomial bond dimension is the central result of Ref.~\cite{sweke2025kernelbaseddequantizationvariationalqml}.
\end{remark}

\textbf{Proof sketch.} The proof of Theorem~\ref{thm:main-classical} proceeds through the following steps, with full details provided in Appendix~\ref{app:proof-main-theorem}:

\textit{Step 1: Intermediate complexity bound (Appendix~\ref{ss:main-technical-lemma}).} We first establish Lemma~\ref{lem:sufficient_m_K}, which provides sufficient conditions on $m$ and $K$ for Algorithm~\ref{alg:dequantization} to output an $\epsilon$-optimal policy, expressed in terms of three abstract quantities: the maximum information gain $\Gamma(m)$ of the kernel, the maximum frequency norm $L_{\mathcal{D}} = \max_{\omega\in\Omega_{\mathcal{D}}}\|\omega\|_2$, and the RKHS norm $G_K$ of the Bellman targets. This lemma is proven by combining an existing FQI error propagation result (Theorem~\ref{th:fqi_error_propagation} adapted from~\cite{theoretical_deepQ}) with pointwise kernel ridge regression error bounds, a discretization argument, and a Markov inequality applied to the integrated predictive uncertainty. The proof then proceeds by analyzing and upper bounding these three abstract quantities.

\textit{Step 2: Bounding the maximum information gain (Appendix~\ref{ss:max-info-gain}).} We show that under Assumption~\ref{ass:poly-eigendecay} (polynomial weight decay), the maximum information gain satisfies $\Gamma(m) = \mathcal{O}((m\log^{p-1}m)^{1/p})$. This is achieved by decomposing the kernel into a finite-rank truncation and a tail, bounding each contribution separately (Lemma~\ref{lem:gamma_intermediate}), and then optimally choosing the truncation rank to yield the sublinear bound (Lemma~\ref{lem:gamma_poly}). Substituting this into the conditions of Lemma~\ref{lem:sufficient_m_K} yields Corollary~\ref{cor:poly_sample}, which gives an explicit (though involved) sufficient sample size.

\textit{Step 3: Bounding the frequency norm (Appendix~\ref{ss:Lk-term}).} We show that $L_{\mathcal{D}} \leq 2\sqrt{d}\,N_\mathrm{max}\lambda_\mathrm{max}$ (Lemma~\ref{lem:L_K-upperbound}), where $N_\mathrm{max}$ is the maximum number of encoding Hamiltonians per data component and $\lambda_\mathrm{max}$ is their largest eigenvalue. Under Condition 4 of the theorem, this ensures that the logarithmic dependence on $L_{\mathcal{D}}$ in the sample complexity remains polynomial.

\textit{Step 4: Time complexity (Appendix~\ref{ss:time-complexity}).} Following Ref.~\cite{sweke2025kernelbaseddequantizationvariationalqml}, we note that when $w$ is induced from a symmetric matrix product state with polynomial bond-dimension, the kernel $K_{(\mathcal{D},w)}$ can be evaluated exactly and efficiently via tensor network contractions, ensuring that each iteration of Algorithm~\ref{alg:dequantization} runs in polynomial time.

Combining these four steps with the alignment condition on $G_K$ (Condition 5) yields the stated polynomial sample and time complexity. \qed

\section{Sufficient Conditions for Dequantization of Quantum Q-learning (with Uniform Sampling)}\label{ss:putting-together}

We now return to the dequantization question introduced in Section~\ref{s:dequantization-question}. Recall from the discussion there that any sufficient conditions for Algorithm~\ref{alg:dequantization} to answer Question~\ref{q:sufficient-conditions-2} are immediately also sufficient to answer Question~\ref{q:sufficient-conditions}. Specifically, we showed that
\begin{equation}
\mathcal{F}_{\Theta,\mathcal{D},O} \subset \mathcal{H}_{(\mathcal{D},w)}
\end{equation}
implies
\begin{equation}
\mathcal{L}(Q^*_{(\mathcal{D},w)}) \leq \mathcal{L}(Q_{\theta^*}),
\end{equation}
and therefore
\begin{equation}
\mathcal{L}(Q^{\pi_K}) \leq \mathcal{L}(Q^*_{(\mathcal{D},w)}) + \epsilon \implies \mathcal{L}(Q^{\pi_K}) \leq \mathcal{L}(Q_{\theta^*}) + \epsilon.
\end{equation}
As a result, Theorem~\ref{thm:main-classical} immediately yields a set of sufficient conditions for Algorithm~\ref{alg:dequantization} to provide a time and sample efficient relative-to-best dequantization of Quantum Q-learning whenever uniform state-action samples are available, as per Question~\ref{q:sufficient-conditions}. Whenever these conditions hold, our result rules out any exponential quantum advantage via Quantum Q-learning, in the simplified setting in which uniform state-action samples are available.

We now discuss the significance and interpretation of these conditions. As Assumptions~\ref{ass:sampling-model} and~\ref{as:func_classes}  have been discussed in detail, and Condition 3 of Theorem~\ref{thm:main-classical} is a mild condition satisfied by typical data-encoding strategies, we focus on the role played by the reweighting vector $w$.

\begin{figure}[t]
\begin{center}
\includegraphics[width=\textwidth]{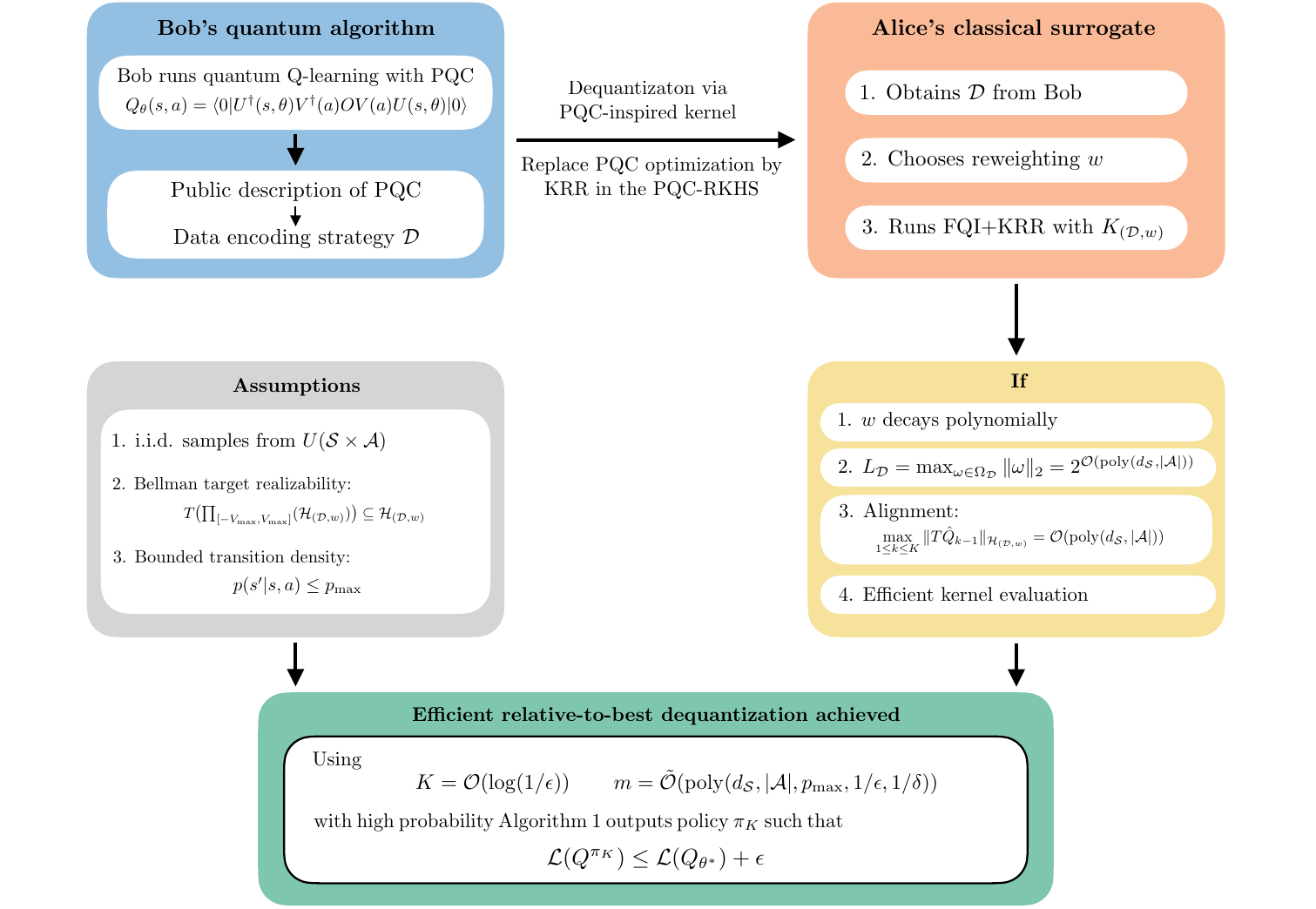}
\caption{A summary of the dequantization algorithm proposed and analyzed in this work. Specifically, we prove that if the assumptions shown above are met, and the reweighting vector $w$ and data encoding strategy $\mathcal{D}$ satisfies the identified sufficient conditions, then Algorithm~\ref{alg:dequantization} provides an efficient relative-to-best dequantization of quantum Q-learning, in the setting in which one assumes access to a uniform generative model.}
\label{fig:together}
\end{center}
\end{figure} 

More specifically, as noted in Observation~\ref{rem:alice-chooses}, recall that Alice is given the data encoding strategy $\mathcal{D}$, and must choose the reweighting vector $w \in \mathbb{R}^{|\Omega_{\mathcal{D}}|}$ before running Algorithm~\ref{alg:dequantization}.
Theorem~\ref{thm:main-classical} makes clear that this choice cannot be made arbitrarily: in order for the algorithm to be provably efficient, it must simultaneously satisfy three conditions, which we now discuss in turn.

\noindent \textbf{Polynomial eigenvalue decay (Assumption~\ref{ass:poly-eigendecay}).} The first requirement is that the components of $w$ decay polynomially when sorted in non-increasing order. As discussed in Section~\ref{ss:max-info-gain}, this controls the intrinsic complexity of the kernel $K_{(\mathcal{D},w)}$ via its Mercer eigenvalue decay, and is precisely what guarantees that the maximum information gain $\Gamma(m)$ grows sublinearly in $m$, thereby ensuring that the sample complexity does not scale exponentially with the problem size. 

\noindent\textbf{Efficient kernel evaluation via a tensor network structure.} The second requirement is that $w$ is induced from a symmetric matrix product state with bond-dimension polynomial in $(d_\mathcal{S}, |\mathcal{A}|)$. As established in Ref.~\cite{sweke2025kernelbaseddequantizationvariationalqml}, this algebraic structure on $w$ is precisely what allows $K_{(\mathcal{D},w)}$ to be evaluated exactly and efficiently via tensor network contractions, circumventing the otherwise exponential cost of naively computing the inner product in the full feature space. Without this, or an analogous condition enabling efficient Random Fourier Feature approximation~\cite{landman2022classicallyapproximatingvariationalquantum,Sweke2025potential,sahebi2025dequantizationsupervisedquantummachine}, each iteration of Algorithm~\ref{alg:dequantization} would be computationally intractable regardless of how small the sample complexity is. We stress that while one could in principle also use Random Fourier Feature techniques to efficiently \textit{approximate} the kernel, this would require handling an additional approximation error term.

\noindent\textbf{Alignment of $w$ with all regression functions (Condition~\eqref{eq:condition1-thm}} The third and most problem-dependent condition is that $w$ must be well-aligned with the frequency decomposition of \textit{all} the regression functions encountered during learning, in the sense discussed in Section~\ref{ss:RKHS-norm}. At a high level, this is the condition that the inductive bias of the kernel must be well suited to the problem being solved. We stress that the optimal regression functions are not known a priori -- indeed knowing them would amount to having solved the problem -- and so in practice it is not clear how Alice can ensure that her choice of $w$ will satisfy this condition. As such, while this is a succinct and rigorous sufficient condition, which can hopefully guide both the design of QML architectures and the identification of problems that may admit quantum advantage, in practice Alice would need to use heuristics to choose $w$. We view the design of such heuristics, informed by this condition, as an interesting open direction for further research. We stress however that there is a fundamental tension between the requirement of polynomial eigenvalue decay (Assumption~\ref{ass:poly-eigendecay}) and the alignment condition discussed here. In particular, the alignment is of course defined \textit{with respect to} the reweighting $w$, and therefore in order for a weighting vector to both decay polynomially \textit{and} be aligned with the optimal regression functions, it should be the case that the spectrum of these regression functions decay polynomially. We view the development of algorithms for \textit{testing} whether this is the case as an important open question which may be of independent interest.

\noindent \textbf{Summary.} Taken together, the three conditions on $w$ described above characterize the regime in which Alice can \textit{provably} efficiently dequantize Bob's Quantum Q-learning algorithm, in the simplified setting in which uniform state-action samples are available. The first and second conditions are structural requirements that Alice can satisfy by design, through an appropriate choice of $w$ with the correct decay profile and tensor network structure. The third condition, by contrast, is problem-dependent and cannot be evaluated efficiently before solving the problem, but makes rigorous the intuition that dequantization is possible when the problem structure is well matched to the inductive bias of the classical surrogate. When all three conditions are met, Theorem~\ref{thm:main-classical} guarantees that Algorithm~\ref{alg:dequantization} solves the reinforcement learning problem to $\epsilon$-accuracy with polynomial time and sample complexity, thereby ruling out any exponential quantum advantage via Quantum Q-learning in this simplified setting. See Figure~\ref{fig:together} for an overview of these conditions and their implications.

\begin{remark}[Algorithm~\ref{alg:dequantization} as a \textit{heuristic}] As per the discussion above, Theorem~\ref{thm:main-classical} makes clear that under certain conditions, Algorithm~\ref{alg:dequantization} provides a \textit{provably} efficient dequantization of quantum Q-learning. While we do not provide any guarantees when these conditions are not met, we believe that our results provide motivation for exploring the performance of Algorithm~\ref{alg:dequantization} as a \textit{heuristic}, even when these conditions are not met, or cannot be verified.
\end{remark}

\begin{remark}[On the relation to Refs~\cite{Sweke2025potential,sweke2025kernelbaseddequantizationvariationalqml,sahebi2025dequantizationsupervisedquantummachine}]
We note that the conditions on $w$ discussed above are very similar to the sufficient conditions identified for kernel-based dequantization of supervised learning in Refs.~\cite{Sweke2025potential,sweke2025kernelbaseddequantizationvariationalqml,sahebi2025dequantizationsupervisedquantummachine}, despite the significantly different analysis required in the reinforcement learning setting. Fundamentally, this similarity arises because of the key role that the spectrum of a kernel (its Mercer eigenvalues) plays in the analysis of kernel algorithms, and the tight relation between the reweighting vector $w$ and the eigenvalues of the PQC kernel $K_{(\mathcal{D},w)}$. We stress however, that unlike the case of supervised regression, where the alignment condition applies to a \textit{single} fixed target function, in reinforcement learning the regression target at step~$k$ is $g_k = T\hat{Q}_{k-1}$, which depends on the previous iterate and therefore changes as the algorithm progresses. The condition $G_K = \max_k \|g_k\|_{\mathcal{H}_{(\mathcal{D},w)}} = O(\mathrm{poly})$ must therefore hold \emph{uniformly} over all $K$ iterations, constraining not just the optimal $Q$-function but the entire trajectory of Bellman iterates. This makes the requirement strictly stronger than the analogous condition in supervised dequantization~\cite{Sweke2025potential,sahebi2025dequantizationsupervisedquantummachine}.
\end{remark}

\section*{Acknowledgements}

RS thanks the Alexander von Humboldt foundation for their support, under the German Research Chair program at the African Institutes for Mathematical Sciences. MS and PRG thank support from the Basque Government BasQ initiative under the Q-STREAM project. They also acknowledge support from OpenSuperQ+100 (Grant No. 101113946) of the EU Flagship on Quantum Technologies, from Project Grant No. PID2024-156808NB-I00 and Spanish Ram\'on y Cajal Grant No. RYC-2020-030503-I funded by MICIU/AEI/10.13039/501100011033 and by “ERDF A way of making Europe” and “ERDF Invest in your Future”, by the EU through the Recovery, Transformation and Resilience Plan–Next Generation EU within the framework of the Digital Spain 2026 Agenda, and by the Basque Government through Grant No. IT1887-26.

\section*{AI Disclosure} The first version of this manuscript, containing all the essential ideas, constructions, proofs and method of presentation, was obtained without the use of generative AI. Once we had a first draft, Claude Opus 4.8 (Max) was then used to generate a detailed and critical referee report, with a focus on verifying mathematical correctness. This referee report flagged a variety of mathematical issues, which were then fixed via a process of interaction with the model. This process materially effected details in all proofs, however was predominantly focused on the proof of Lemma~\ref{lem:sufficient_m_K}.

\newpage
\appendix 

\section{On encoding of actions into PQC models}\label{app:action-encoding}

As discussed in Section~\ref{sec:pqc-kernels}, previous works on quantum Q-learning, such as Ref.~\cite{qagents_gym}, have used PQC models of the form
\begin{equation}\label{eq:first-form-app}
Q_\theta(s,a)=\langle 0| U^\dagger(\theta,s) O_a U(\theta,s) |0\rangle,
\end{equation}
for some set of observables $\{O_a\}_{a\in\mathcal{A}}$. In this work however, we have assumed that the PQC model is of the form
\begin{equation}\label{eq:second-form-app}
Q_\theta(s,a)=\langle 0| U^\dagger(\theta,s)V^\dagger(a) OV(a) U(\theta,s) |0\rangle,
\end{equation}
where $V(a)$ uses only Hamiltonian data-encodings. In this appendix we show that this assumption is not unreasonable, in the sense that previously used models in the form of Eq.~\eqref{eq:first-form-app} can indeed be written in the form of Eq.~\eqref{eq:second-form-app}, with unitaries $V(a)$ using a Hamiltonian data-encoding strategy. We illustrate this with two concrete examples from Ref.~\cite{qagents_gym}.

\begin{example}[Example 1: Frozen Lake Model (Ref.~\cite{qagents_gym})]In the \emph{Frozen Lake} environment there are four actions $(1,2,3,4)$, and Ref.~\cite{qagents_gym} uses a four-qubit PQC model in the form of Eq.~\eqref{eq:first-form-app} with action-dependent observables $(O_1,O_2,O_3,O_4) = \{Z_1, Z_2, Z_3, Z_4\}$. Here we show two different ways to write this model in the form of Eq.~\eqref{eq:second-form-app}, by constructing a suitable action-parameterized unitary $V(a)$.

To do this, first define
\begin{equation}
H_{i,j} = \frac{1}{4}\left(\mathds{1} - X_iX_j - Y_iY_j - Z_iZ_j\right)
\end{equation}
so that
\begin{equation}
\mathrm{SWAP}_{i,j} = e^{-i\pi H_{i,j}}.
\end{equation}
With this, we then have the following:

\textbf{Method 1 (one-hot action encoding):} Let's use a one-hot encoding to denote the actions -- i.e. 
\begin{equation}
\mathcal{A} = (1,2,3,4) = \big((1,0,0,0),(0,1,0,0),(0,0,1,0),(0,0,0,1)\big)\subset{\{0,1\}^4}.
\end{equation}
Additionally, let's define
\begin{align}
H^{(2)} &= H_{1,2},\\
H^{(3)} &= H_{1,3},\\
H^{(4)} &= H_{1,4},
\end{align}
and
\begin{equation}
V(a) = V(a_1,a_2,a_3,a_4) = e^{-i\pi a_2 H^{(2)}}e^{-i\pi a_3H^{(3)}}e^{-i\pi a_4H^{(4)}}.
\end{equation}
One then has
\begin{align}
V(1) &= \mathds{1} \\
V(2) &= \mathrm{SWAP}_{1,2} \\
V(3) &= \mathrm{SWAP}_{1,3} \\
V(4) &= \mathrm{SWAP}_{1,4} 
\end{align}
so that
\begin{equation}
O_a = V^\dagger(a)Z_1V(a)
\end{equation}
for all $a\in\mathcal{A}$ as desired. 

\textbf{Method 2 (Binary action encoding):} Ideally one wants to to use an encoding of the actions into $\{0,1\}^{d_\mathcal{A}}$ with $d_\mathcal{A}$ as small as possible. To this end, we show that the method above can be improved by using a binary encoding of the actions. More specifically, if we define
\begin{equation}
\mathcal{A} = (1,2,3,4) = \big((0,0),(0,1),(1,0),(1,1)\big)\subset{\{0,1\}^4}
\end{equation}
as well as
\begin{align}
H^{(1)} &= H_{1,3} + H_{2,4},\\
H^{(2)} &= H_{1,2},
\end{align}
\begin{equation}
V(a) = V(a_1,a_2) =  e^{-i\pi a_1 H^{(1)}}e^{-i\pi a_2H^{(2)}},
\end{equation}
then
\begin{align}
V(1) &= \mathds{1} \\
V(2) &= \mathrm{SWAP}_{1,2} \\
V(3) &= \mathrm{SWAP}_{1,3}\mathrm{SWAP}_{2,4} \\
V(4) &= \mathrm{SWAP}_{1,3}\mathrm{SWAP}_{2,4}\mathrm{SWAP}_{1,2} 
\end{align}
As a result, we then again have
\begin{equation}
O_a = V^\dagger(a)Z_1V(a)
\end{equation}
for all $a\in\mathcal{A}$ as desired.
\end{example}

\begin{example}[Example 2: Cart Pole Model (Ref.~\cite{qagents_gym})] In the \textit{Cart Pole} environment there are two actions, which we denote as $(0,1)$ and Ref.~\cite{qagents_gym} uses a four-qubit PQC model in the form of Eq.~\ref{eq:first-form-app} with $(O_0,O_1) = (Z_1Z_2,Z_3Z_4)$. Here the actions are already in a binary encoding, so by defining
\begin{equation}
H^{(1)} = H_{1,3} + H_{2,4}
\end{equation}
and 
\begin{equation}
V(a) = e^{-i\pi a H^{(1)}}
\end{equation}
we have
\begin{align}
V(0) &= \mathds{1}\\
V(1) &= \mathrm{SWAP}_{1,3}\mathrm{SWAP}_{2,4}.
\end{align}
As a result, in this case we have
\begin{equation}
O_a = V^\dagger(a)Z_1Z_2V(a)
\end{equation}
for all $a\in\mathcal{A}$ as desired.
\end{example}

\begin{remark}
Of course the examples above do not prove that in general one can \textit{always} rewrite a model in the form of Eq.~\eqref{eq:first-form-app} in the form of Eq.~\eqref{eq:second-form-app}, with an appropriately parameterized $V(a)$ which uses a Hamiltonian data-encoding. Indeed, one \textit{necessary} (but not sufficient) condition for this to be possible is that the spectrum of all observables $\{O_a\}$ should be the same, otherwise they cannot all be obtained by unitary conjugation of a fixed observable. With this in mind, the examples above are meant only to show that, at least for previously considered models this is possible, and to give some general techniques via which this model conversion can sometimes be achieved.
\end{remark}

\section{From Data-Encoding Strategy to Frequency Set}\label{app:data-encoding}
\begin{figure}[h!]
\begin{center}
\includegraphics[scale=.6]{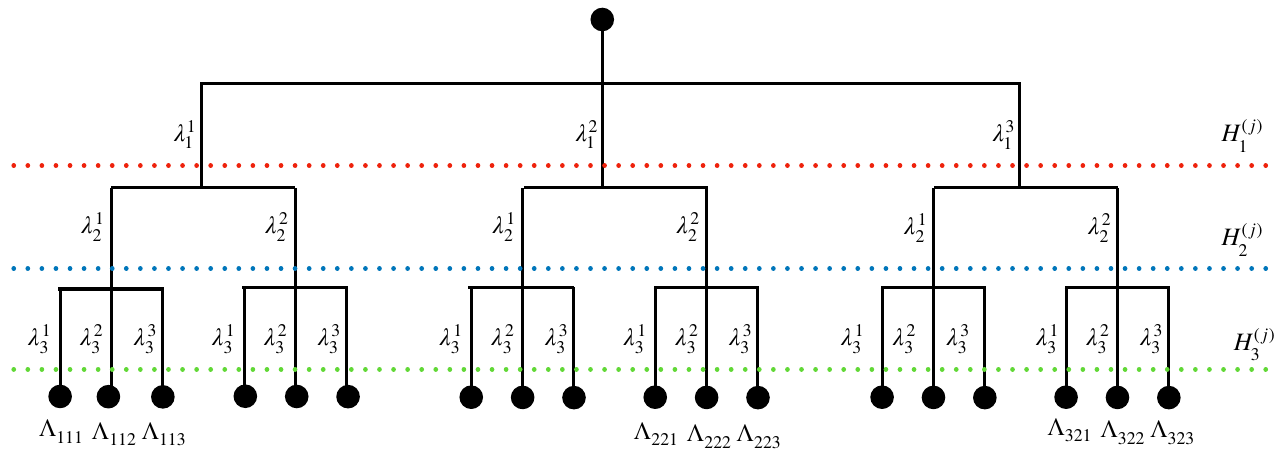}
\caption{Construction of the frequency set $\tilde{\Omega}_{\mathcal{D}}^{(j)}$ from the data-encoding strategy $\mathcal{D}^{(j)}$.}
\label{fig:tree_construction}
\end{center}
\end{figure}  
We describe here the way in which the frequency set $\tilde{\Omega}_{\mathcal{D}}$ of the PQC models considered in Section~\ref{ss:pqc-models} is constructed from the data-encoding strategy $\mathcal{D}$. We follow closely the presentation in Appendix A of Ref.~\cite{Sweke2025potential}. To simplify things, let's relabel the data encoding strategy $\mathcal{D}$ considered in Eq~\eqref{eq:data-encoding-strategy} via
\begin{align}
\mathcal{D} &=\left(\mathcal{D}^{(1)},\ldots, \mathcal{D}^{(d)},\tilde{\mathcal{D}}^{(1)},\ldots, \tilde{\mathcal{D}}^{(d')} \right) \\
&\equiv\left(\mathcal{D}^{(1)},\ldots, \mathcal{D}^{(d+d')}\right).
\end{align}
With this, we then have
\begin{equation}\label{eq:cartesian_freqs}
\tilde{\Omega}_{\mathcal{D}} = \tilde{\Omega}_{\mathcal{D}}^{(1)}\times \ldots \times \tilde{\Omega}_{\mathcal{D}}^{(d+d')}
\end{equation}
where $\tilde{\Omega}^{(j)}_\mathcal{D}\subseteq\mathbb{R}$ depends only on $\mathcal{D}^{(j)}$. We can therefore focus on the construction of $\tilde{\Omega}_{\mathcal{D}}^{(j)}$ for a single co-ordinate. In light of this, let us drop some coordinate-indicating superscripts for ease of presentation. In particular, let us write $\mathcal{D}^{(j)} = \{H_k\,|\, k \in [N_j] \}$, where we have dropped the coordinate-indicating superscripts from the Hamiltonians. We then use $\lambda^{i}_{k}$ to denote the $i$'th eigenvalue of $H_k$, and $N_{k}$ to denote the number of eigenvalues of $H_k$. We also introduce the multi-index $\vec{i} = (i_1,\ldots,i_{L_j})$, with $i_k\in [N_k]$, which allows us to define the sum of the eigenvalues indexed by $\vec{i}$, one from each Hamiltonian, as
\begin{equation}
\Lambda_{\vec{i}} = \lambda^{i_1}_1 +\ldots + \lambda^{i_{L_j}}_{L_j}.
\end{equation}
With this setup, we then have that the frequency set $\tilde{\Omega}_{\mathcal{D}}^{(j)}$ is given by the set of all differences of all possible sums of eigenvalues, i.e., 
\begin{equation}
\tilde{\Omega}_{\mathcal{D}}^{(j)} = \left\{\Lambda_{\vec{i}} - \Lambda_{\vec{j}}\,|\, \vec{i},\vec{j}\right\},
\end{equation}
and as mentioned before, the total frequency set is given by Eq.~\eqref{eq:cartesian_freqs}. There is a convenient graphical way to understand this construction, which is illustrated in Figure.~\ref{fig:tree_construction}. Essentially, one notes that, in order to construct $\tilde{\Omega}^{(j)}_\mathcal{D}$ one can consider a tree, with depth equal to the number of data-encoding gates, whose leaves contain the eigenvalue sums $\Lambda_{\vec{i}}$. The frequency set is then given by all possible pairwise differences between leaves.

\section{Proof of Theorem~\ref{thm:main-classical}}\label{app:proof-main-theorem}
In this appendix we provide the full proof of Theorem~\ref{thm:main-classical}. The proof proceeds through a series of intermediate results that progressively translate the abstract sufficient conditions into concrete polynomial bounds in terms of the problem size $(d_\mathcal{S},|\mathcal{A}|)$. We now give an overview of the proof:

\begin{itemize}
\item \textit{A starting observation (Section~\ref{ss:st_obs}).} We first verify that the regression noise arising at each iteration of FQI--KRR is conditionally sub-Gaussian with parameter $\tau = 2R_{\max}/(1-\gamma)$. This is used in the next step for applying kernel ridge regression error bounds.

\item \textit{Intermediate complexity bound (Section~\ref{ss:main-technical-lemma}).} We establish Lemma~\ref{lem:sufficient_m_K}, which provides sufficient conditions on the sample size $m$ and the number of iterations $K$ for Algorithm~\ref{alg:dequantization} to output an $\epsilon$-optimal policy. These conditions are expressed in terms of three abstract quantities: the maximum information gain $\Gamma(m)$, the maximum 2-norm of all frequency vectors in $\Omega_{\mathcal{D}}$ denoted by $L_{\mathcal{D}}$, and the RKHS norm $G_K$ of the Bellman targets. The proof combines an FQI error propagation result imported from \cite{theoretical_deepQ}(Theorem~\ref{th:fqi_error_propagation}) with pointwise kernel ridge regression error bounds, a discretization argument over the state-action space to deal with the continuous state space, and a Markov-inequality argument that converts the in-expectation bound on the integrated predictive uncertainty into a high-probability statement over the draw of the training set.

\item \textit{Bounding the maximum information gain (Section~\ref{ss:max-info-gain}).} Under Assumption~\ref{ass:poly-eigendecay} (inverse polynomial decay of kernel weight), we show that $\Gamma(m) = \mathcal{O}((m\log^{p-1}m)^{1/p})$ where $p$ controls the decay rate. We do this by decomposing the kernel into a finite-rank truncation and a tail, bounding each contribution separately (Lemma~\ref{lem:gamma_intermediate}), and optimally choosing the truncation rank (Lemma~\ref{lem:gamma_poly}). Substituting this into the conditions of Lemma~\ref{lem:sufficient_m_K} yields the explicit sample complexity shown in Corollary~\ref{cor:poly_sample}.

\item \textit{Bounding the frequency norm (Section~\ref{ss:Lk-term}).} We show that $L_{\mathcal{D}} \leq 2\sqrt{d}\,N_\mathrm{max}\lambda_\mathrm{max}$ (Lemma~\ref{lem:L_K-upperbound}). In particular, for typical Pauli-based data-encoding strategies one has $N_\mathrm{max} = \mathcal{O}(1)$ and $\lambda_\mathrm{max} = 1$, yielding $L_{\mathcal{D}} = \mathcal{O}(\sqrt{d})$, which ensures the logarithmic dependence on $L_{\mathcal{D}}$ in the sample complexity remains polynomial in the problem parameters.

\item \textit{Time complexity (Section~\ref{ss:time-complexity}).} We recall that when the reweighting vector $w$ is induced by a symmetric matrix product state with polynomial bond dimension, the kernel $K_{(\mathcal{D},w)}$ can be evaluated exactly and efficiently through tensor network contractions, yielding polynomial per-iteration complexity.

\item \textit{Combining the bounds (Section~\ref{ss:comb_bounds}).} We now combine the preceding results with the alignment condition on $G_K$ to obtain the stated polynomial sample and time complexity.
\end{itemize}

\subsection{A starting observation}\label{ss:st_obs}
Before proving the main technical lemma, we first establish a basic property of the regression noise that appears at each iteration of Fitted Q-Iteration with kernel ridge regression. Let $\epsilon_i = y_i - g_k^*(z_i)$ denote the regression error. The next lemma shows that these errors are conditionally sub-Gaussian. This fact is needed later when we invoke standard kernel ridge regression error bounds in Section~\ref{ss:main-technical-lemma}.

\begin{observation}[Sub-Gaussian regression noise in FQI--KRR]
\label{obs:subgaussian-noise}
Fix any iteration $k\in\{1,\dots,K\}$, and define
\begin{align}
y_i \;&\equiv\; r_i + \gamma \max_{a\in\mathcal{A}}\hat{Q}_{k-1}(s_i',a),\\
g_k^*(z_i) \;&\equiv\; (T\hat{Q}_{k-1})(z_i),\\
\epsilon_i \;&\equiv\; y_i-g_k^*(z_i),
\end{align}
where $z_i=(s_i,a_i)$. Then $\{\epsilon_i\}_{i=1}^m$ are conditionally independent, conditionally centered, and conditionally sub-Gaussian. In particular, for all $t\in\mathbb{R}$,
\begin{equation}
\mathbb{E}\!\left[e^{t\epsilon_i}\mid z_i\right]
\;\le\;
\exp\!\left(\frac{\tau^2 t^2}{2}\right),
\end{equation}
where
\begin{equation}
\tau \;=\; 2V_{\max} \;=\; \frac{2R_{\max}}{1-\gamma}.
\end{equation}
\end{observation}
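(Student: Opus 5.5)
The plan is to condition on everything generated before iteration $k$, so that $\hat{Q}_{k-1}$ becomes a fixed, deterministic function. Let $\mathcal{G}_{k-1}$ denote the $\sigma$-algebra generated by all samples drawn in iterations $1,\dots,k-1$. Then $\hat{Q}_{k-1}=\Pi_{[-V_{\max},V_{\max}]}F_{k-1}$ is $\mathcal{G}_{k-1}$-measurable and satisfies $\|\hat{Q}_{k-1}\|_\infty\le V_{\max}$.

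At iteration $k$ a fresh batch is drawn. The pairs $z_i$ are i.i.d.\ from $U(\mathcal{S}\times\mathcal{A})$, and given $z_i$ the pair $(r_i,s_i')$ is drawn from $R(s_i,a_i)\otimes P(s_i,a_i)$, independently across $i$. Hence, conditionally on $\mathcal{G}_{k-1}$ and $z_{1:m}$, each $\epsilon_i$ is a fixed measurable function of $(z_i,r_i,s_i')$ alone. The $\epsilon_i$ are therefore conditionally independent, and the law of $\epsilon_i$ given everything else depends only on $z_i$. This gives conditional independence and justifies conditioning only on $z_i$ (implicitly together with $\mathcal{G}_{k-1}$).

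Centering follows directly from the definition of $T$ in Eq.~\eqref{eq:Bellman_optimality}:
\begin{equation*}
\mathbb{E}[y_i\mid z_i]=r(s_i,a_i)+\gamma\,\mathbb{E}_{s'\sim P(s_i,a_i)}\Bigl[\max_{a}\hat{Q}_{k-1}(s',a)\Bigr]=(T\hat{Q}_{k-1})(z_i),
\end{equation*}
so $\mathbb{E}[\epsilon_i\mid z_i]=0$.

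For sub-Gaussianity, the key step is an almost-sure range bound. Since $|r_i|\le R_{\max}$ and $|\max_a\hat{Q}_{k-1}(s_i',a)|\le V_{\max}$, we get $|y_i|\le R_{\max}+\gamma V_{\max}=V_{\max}$. The same argument gives $|g_k^*(z_i)|\le V_{\max}$. Thus, conditionally on $z_i$, the variable $y_i$ lies in the fixed interval $[-V_{\max},V_{\max}]$, and $\epsilon_i$ lies in an interval of length $2V_{\max}$.

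Hoeffding's lemma then gives $\mathbb{E}[e^{t\epsilon_i}\mid z_i]\le\exp\bigl(t^2(2V_{\max})^2/8\bigr)=\exp(t^2V_{\max}^2/2)$. This is even stronger than the claimed bound $\exp(\tau^2t^2/2)$ with $\tau=2V_{\max}$. Alternatively, one can use the cruder estimate $|\epsilon_i|\le 2V_{\max}$, which yields exactly the stated $\tau$.

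The only delicate point is the measurability and conditioning bookkeeping. One must ensure that $\hat{Q}_{k-1}$ is independent of the fresh batch, which holds because Algorithm~\ref{alg:dequantization} resamples at every iteration. One must also check that $\max_a\hat{Q}_{k-1}(\cdot,a)$ is measurable, which holds because $\mathcal{A}$ is finite and $\hat{Q}_{k-1}$ is continuous, being a clipped trigonometric polynomial. Everything else is routine.
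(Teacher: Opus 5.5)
Your proposal is correct and follows the paper's route: you bound $|y_i|$ and $|g_k^*(z_i)|$ by $V_{\max}$, get centering from the definition of $T$, apply Hoeffding's lemma, and obtain conditional independence from the fresh per-iteration batch. Your one refinement is also valid: given $z_i$, $\epsilon_i$ ranges over a translate of $[-V_{\max},V_{\max}]$, an interval of length $2V_{\max}$ rather than the paper's $[-2V_{\max},2V_{\max}]$, so you get the sharper parameter $V_{\max}$, which implies the stated $\tau=2V_{\max}$.
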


\begin{proof} Fix an iteration \(k\), and define
\begin{equation}
y_i = r_i + \gamma\max_{a\in\mathcal A}\hat Q_{k-1}(s_i',a),\qquad
g_k^*(z_i)=(T\hat Q_{k-1})(z_i),\qquad
\epsilon_i=y_i-g_k^*(z_i),
\end{equation}
with \(z_i=(s_i,a_i)\).

We first show boundedness. By clipping in Algorithm~\ref{alg:dequantization},
\begin{equation}
\|\hat Q_{k-1}\|_{\infty}\le V_{\max},\qquad V_{\max}=\frac{R_{\max}}{1-\gamma}.
\end{equation}
Since \(|r_i|\le R_{\max}\) almost surely,
\begin{equation}
|y_i|
\le
R_{\max}+\gamma V_{\max}
=
\frac{R_{\max}}{1-\gamma}
=
V_{\max}.
\end{equation}
Also,
\begin{equation}
|g_k^*(z)|
=
|(T\hat Q_{k-1})(z)|
\le
R_{\max}+\gamma\|\hat Q_{k-1}\|_\infty
\le
V_{\max}.
\end{equation}
Therefore
\begin{equation}
|\epsilon_i|
\le
|y_i|+|g_k^*(z_i)|
\le
2V_{\max}.
\end{equation}

Next, \(\epsilon_i\) is conditionally centered:
\begin{equation}
\mathbb E[\epsilon_i\mid z_i]
=
\mathbb E[y_i\mid z_i]-g_k^*(z_i)
=
(T\hat Q_{k-1})(z_i)-g_k^*(z_i)
=
0.
\end{equation}

By Hoeffding's lemma, any centered random variable bounded in
\([-2V_{\max},2V_{\max}]\) is sub-Gaussian with parameter
\begin{equation}
\tau^2=\frac{(2V_{\max}-(-2V_{\max}))^2}{4}=4V_{\max}^2.
\end{equation}
Hence, for all \(t\in\mathbb R\),
\begin{equation}
\mathbb E\!\left[e^{t\epsilon_i}\mid z_i\right]
\le
\exp\!\left(\frac{\tau^2 t^2}{2}\right),
\qquad
\tau=2V_{\max}=\frac{2R_{\max}}{1-\gamma}.
\end{equation}

Finally, conditional independence of \(\{\epsilon_i\}_{i=1}^m\) follows from the sampling model (independent transition/reward draws across samples given the sampled state-action pairs). This proves Observation~\ref{obs:subgaussian-noise}. 
\end{proof}

\subsection{Intermediate Complexity Bound via RKHS Norm, $L_{\mathcal{D}}$, and Maximum Information Gain}\label{ss:main-technical-lemma}
We begin by establishing an intermediate technical lemma that expresses sufficient conditions on $m$ and $K$ in terms of the following three abstract quantities: 

\textbf{(a) Maximum information gain:}
The first quantity is the maximum information gain of the kernel $K_{(\mathcal{D},w)}$ on $m$ samples, which we denote with $\Gamma(m)$. This quantity captures how quickly the posterior uncertainty of the kernel ridge regressor shrinks with more samples and is the only term that depends on the intrinsic complexity of the PQC kernel. To be more specific, this quantity is defined via
\begin{equation}
\Gamma(m) = \max_{\{Z_i\}_{i=1}^m \subset \mathcal{Z}} \frac{1}{2} \log \det \left( \mathds{1}_m + \frac{1}{\lambda^2} K_{\{Z_i\}} \right),
\end{equation}
where $K_{\{Z_i\}}$ is the Gram matrix of the PQC kernel $K_{(\mathcal{D},w)}$. After stating our complexity bound in terms of $\Gamma(m)$, we will provide a detailed analysis of this quantity in Section~\ref{ss:max-info-gain}.

\textbf{(b) Maximum 2-norm of all frequency vectors in $\Omega_{\mathcal{D}}$:} Specifically, the quantity
\begin{equation}
L_{\mathcal{D}} = \max_{\omega\in\Omega_{\mathcal{D}}}\|\omega\|_2.
\end{equation}

\textbf{(c) Maximum RKHS norm of $g_k\equiv T\hat{Q}_{k-1}$ over $k=1,...,K$.} In the clipped two-sequence formulation of Algorithm~\ref{alg:dequantization}, Bellman targets are built from $\hat{Q}_{k-1}=\Pi_{[-V_{\max},V_{\max}]}F_{k-1}$ by application of the Bellman operator $T$. Given this, we define $g_k\equiv T\hat{Q}_{k-1}$ and note that the regression function in stage $k$ is precisely $g_k$. Moreover, the corresponding RKHS-controlled target-complexity term is $G_K \equiv \max_{1\leq k\leq K}\|g_k\|_{\mathcal{H}_{(\mathcal{D},w)}}=\max_{1\leq k\leq K}\|T\hat{Q}_{k-1}\|_{\mathcal{H}_{(\mathcal{D},w)}}$ -- the maximum RKHS norm attained by any ideal target function across the $K$ regression stages of Algorithm~\ref{alg:dequantization}. This was defined, interpreted and analyzed rigorously in Section~\ref{ss:RKHS-norm}, however here we mention that at a high-level $\|T\hat{Q}_{k-1}\|_{\mathcal{H}_{(\mathcal{D},w)}}$ can be interpreted as the \textit{``misalignment''} of the Bellman target with the weighting vector $w$ defining the PQC kernel $K_{(\mathcal{D},w)}$.

With this established, the key starting point for the proof of Theorem~\ref{thm:main-classical} is the following:

\begin{lemma}[Sufficient conditions for $\epsilon$-accuracy of FQI--KRR]
\label{lem:sufficient_m_K}
Consider Fitted Q-Iteration (FQI) with kernel ridge regression (KRR) in the fixed reproducing kernel Hilbert space (RKHS) $\mathcal{H}_{(\mathcal{D},w)}$ associated with the PQC kernel $K_{(\mathcal{D},w)}$, as per Algorithm~\ref{alg:dequantization}. Assume that the regularization parameter $\lambda > 0$ is a fixed positive constant, independent of the sample size $m$. Let $\epsilon > 0$ be a target accuracy and $\delta \in (0,1)$ a confidence parameter. Under the assumptions of Section~\ref{sub:assumptions}, define
\begin{align}
\varepsilon_{\mathrm{stat}} &\equiv \frac{\epsilon(1-\gamma)^2}{4\gamma\sqrt{|\mathcal{A}|\,p_{\max}(d_\mathcal{S},|\mathcal{A}|)}},\\
G_K &\equiv \max_{1\le k\le K}\|T\hat{Q}_{k-1}\|_{\mathcal{H}_{(\mathcal{D},w)}},\\
B_{\mathrm{net}}(m) &\equiv \max\!\left\{G_K,\; \frac{\sqrt{m}\,V_{\max}}{\lambda}\right\},\\
\mathcal{F}_m &\equiv G_K + \frac{\tau}{\lambda}\sqrt{2\log\frac{4K|\mathcal{Z}_m|}{\delta}},
\end{align}
where $|\mathcal{Z}_m| \le |\mathcal{A}|\bigl(\sqrt{d_{\mathcal{S}}}\,L_{\mathcal{D}}\,B_{\mathrm{net}}(m)\,m + 1\bigr)^{d_{\mathcal{S}}}$, $L_{\mathcal{D}}=\max_{\omega\in\Omega_{\mathcal{D}}}\|\omega\|_2$, $\tau$ is the sub-Gaussian constant from Observation~\ref{obs:subgaussian-noise}, $p_\mathrm{max}$ is the function from Assumption~\ref{ass:bounded-density}, and $\gamma$ is the discount factor.

Suppose the sample size $m$ and the number of iterations $K$ satisfy
\begin{align}
m&\ge \frac{36K\,\Gamma(m+1)\,\mathcal{F}_m^2}{\delta\,\varepsilon_{\mathrm{stat}}^2\,\log(1+1/\lambda^2)}-1, \label{eq:m_condition1}\\[4pt]
m &\ge \frac{18\,\mathcal{F}_m^2}{\varepsilon_{\mathrm{stat}}^2\,B_{\mathrm{net}}(m)}, \label{eq:m_condition2}\\[4pt]
m &\ge \frac{6}{\varepsilon_{\mathrm{stat}}}, \label{eq:m_condition3}\\[4pt]
K &\ge \frac{\log\!\left(\frac{8R_{\max}}{(1-\gamma)^2\epsilon}\right)}{\log(1/\gamma)}-1,
\label{eq:K_condition}
\end{align}
where $R_{\max}$ is an upper bound on the maximum absolute reward. Then, with probability at least $1-\delta$, the greedy policy $\pi_K$ output by Algorithm~\ref{alg:dequantization} satisfies
\begin{equation}\label{eq:acc-guarantee}
\mathcal{L}(Q^{\pi_K}) \;\le\; \mathcal{L}(Q^*_{(\mathcal{D},w)}) + \epsilon = \epsilon.
\end{equation}
\end{lemma}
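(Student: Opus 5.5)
The plan is to reduce the policy loss to per-iteration regression errors via the FQI error propagation result (Theorem~\ref{th:fqi_error_propagation}, adapted from~\cite{theoretical_deepQ}). Under Assumption~\ref{ass:bounded-density}, the concentrability coefficient satisfies $\phi_{U,U}\le\sqrt{|\mathcal{A}|p_{\max}}$. The propagation bound then has the form
\begin{equation}
\mathcal{L}(Q^{\pi_K}) \le \frac{2\gamma\,\phi_{U,U}}{(1-\gamma)^2}\max_{k\le K}\|\hat{Q}_k - T\hat{Q}_{k-1}\|_{L^2(U)} + \frac{4\gamma^{K+1}R_{\max}}{(1-\gamma)^2}.
\end{equation}
Condition~\eqref{eq:K_condition} makes the second term at most $\epsilon/2$. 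The definition of $\varepsilon_{\mathrm{stat}}$ is chosen exactly so that $\max_k\|\hat{Q}_k-g_k\|_{L^2(U)}\le\varepsilon_{\mathrm{stat}}$ gives a first term of at most $\epsilon/2$. Since clipping is $1$-Lipschitz and $|g_k|\le V_{\max}$, we have $|\hat{Q}_k-g_k|\le|F_k-g_k|$ pointwise. It therefore suffices to show that, with probability $1-\delta$, $\|F_k-g_k\|_{L^2(U)}\le\varepsilon_{\mathrm{stat}}$ holds for all $k$ simultaneously. Corollary~\ref{cor:realizability} gives $\mathcal{L}(Q^*_{(\mathcal{D},w)})=0$, and the closure condition gives $g_k\in\mathcal{H}_{(\mathcal{D},w)}$.

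For a fixed $k$, conditioned on the previous iterates, the data $(z_i,y_i)$ are i.i.d. with regression function $g_k$. By Observation~\ref{obs:subgaussian-noise} the noise is $\tau$-sub-Gaussian. The next step is a pointwise KRR bound of Gaussian-process/kernel-bandit type. Write $\sigma_m^2(z)=K(z,z)-\mathbf{K}_m(z)^\top(K_m+\lambda^2\mathds{1})^{-1}\mathbf{K}_m(z)$. Using the reproducing property, split $F_k-g_k$ into a bias term, bounded by $\|g_k\|_{\mathcal{H}}\sigma_m(z)\le G_K\sigma_m(z)$, and a noise term. The noise term is a linear combination $\mathbf{K}_m(z)^\top(K_m+\lambda^2\mathds{1})^{-1}\boldsymbol{\epsilon}$, which is conditionally sub-Gaussian with variance proxy at most $(\tau^2/\lambda^2)\sigma_m^2(z)$. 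A union bound over a finite set $\mathcal{Z}_m$ and over $k\le K$ then gives $|F_k(z)-g_k(z)|\le\mathcal{F}_m\sigma_m(z)$ on $\mathcal{Z}_m$, with failure probability at most $\delta/2$.

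To pass from the net to all of $\mathcal{Z}$, cover each action slice of $\mathcal{S}$ by a grid of spacing $\eta$. Every function in the span has gradient bounded by $L_{\mathcal{D}}\|f\|_{\mathcal{H}}$, since $|\partial f|\le\sum_i|c_i|\|\omega_i\|$ is controlled through Cauchy--Schwarz with the normalized kernel. The representer solution satisfies $\|F_k\|_{\mathcal{H}}\le\|\mathbf{y}\|/\lambda\le\sqrt{m}V_{\max}/\lambda$, and $\sigma_m$ is Lipschitz with the same type of constant. So $F_k-g_k$ and $\mathcal{F}_m\sigma_m$ are Lipschitz with constant of order $L_{\mathcal{D}}B_{\mathrm{net}}(m)$. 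Choosing $\eta$ of order $1/(\sqrt{d_\mathcal{S}}L_{\mathcal{D}}B_{\mathrm{net}}m)$ gives the stated $|\mathcal{Z}_m|$. The discretization error is then $O(1/m)$ in $\sigma$ terms, or $O(\mathcal{F}_m^2/(mB_{\mathrm{net}}))$ after squaring. Conditions~\eqref{eq:m_condition2} and~\eqref{eq:m_condition3} are precisely what make these off-net remainders each at most $\varepsilon_{\mathrm{stat}}^2/3$ (respectively $\varepsilon_{\mathrm{stat}}/6$). This gives, for all $z$,
\begin{equation}
(F_k-g_k)(z)^2\le 3\mathcal{F}_m^2\sigma_m^2(z)+\text{(remainders)}.
\end{equation}

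What remains is the integrated uncertainty $\mathbb{E}_{z\sim U}[\sigma_m^2(z)]$. By exchangeability, treating $z$ as an $(m+1)$-st sample, $\mathbb{E}_{\text{data}}\mathbb{E}_z\sigma_m^2(z)\le\frac{1}{m+1}\mathbb{E}\sum_{j\le m+1}\sigma_{j-1}^2(z_j)$. The standard elliptical-potential inequality $\sum_j\sigma_{j-1}^2\le\frac{2}{\log(1+1/\lambda^2)}\Gamma(m+1)$ (which uses $K(z,z)=1$) bounds the right-hand side. Markov's inequality with a union over $K$ iterations, at level $\delta/(2K)$ each, turns this into a high-probability bound. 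Condition~\eqref{eq:m_condition1} then makes $3\mathcal{F}_m^2\mathbb{E}_z\sigma_m^2\le\varepsilon_{\mathrm{stat}}^2/3$, and combining everything yields $\|F_k-g_k\|_{L^2(U)}\le\varepsilon_{\mathrm{stat}}$.

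The main obstacle is the bookkeeping in the discretization step: making the Lipschitz constants of $\sigma_m$ and $F_k$ explicit in terms of $L_{\mathcal{D}}$ and $B_{\mathrm{net}}(m)$, so that the net size and the off-net remainders match the stated constants. A second subtlety is that $g_k$ depends on earlier batches. This is handled by using fresh samples at each iteration and conditioning on $\hat{Q}_{k-1}$ before applying the union bounds.
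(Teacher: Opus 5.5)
Your proposal follows the paper's proof essentially step for step. The shared steps are: FQI error propagation with $\phi_{U,U}\le\sqrt{|\mathcal{A}|p_{\max}}$; the bias/noise split bounding $|g_k-F_k|$ by $\mathcal{F}_m\sigma_m$ on a net whose resolution is fixed via the envelope $B_{\mathrm{net}}(m)\ge\max(G_K,\sqrt{m}V_{\max}/\lambda)$; the exchangeability-plus-elliptical-potential bound on $\mathbb{E}_z\sigma_m^2$; and Markov/union bounds at level $\delta/(2K)$. The one small deviation is your claim that $\sigma_m$ is itself $L_{\mathcal{D}}$-Lipschitz. This is correct (the reverse triangle inequality applied to $\sigma_m(z)=\lambda\|(C_m+\lambda^2 I)^{-1/2}K(\cdot,z)\|_{\mathcal{H}_K}$ gives it) and is sharper than the paper's H\"older-$1/2$ bound. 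However, the off-net remainder you then quote, $O(\mathcal{F}_m^2/(mB_{\mathrm{net}}))$, is the one produced by the paper's weaker bound, not by your Lipschitz estimate.
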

The complete proof of Lemma~\ref{lem:sufficient_m_K} is given in Appendix~\ref{a:proof_lemm_mk}.

With this established, the remainder of the proof proceeds by analyzing each of the abstract quantities involved in the intermediate lemma above.

\subsection{Bounding the Maximum Information Gain}\label{ss:max-info-gain}

We now show that under Assumption~\ref{ass:poly-eigendecay}, the maximum information gain $\Gamma(m)$ grows sublinearly in $m$, which is a sufficient condition for polynomial sample complexity. Note from Eq.~\eqref{eq:feature-map-form} that
\begin{align}
K_{(\mathcal{D},w)}(z,z') &= \frac{w^2_0}{\|w\|^2_2} + \sum_{i = 1}^M\frac{w^2_i}{\|w\|^2_2} \left[\cos(\langle \omega_i,z\rangle)\cos(\langle \omega_i,z'\rangle) + \sin(\langle \omega_i,z\rangle)\sin(\langle \omega_i,z'\rangle) \right] \\
&= \frac{w^2_0}{\|w\|^2_2} + \sum_{i = 1}^M\frac{w^2_i}{2\|w\|^2_2} \left[\sqrt{2}\cos(\langle \omega_i,z\rangle)\sqrt{2}\cos(\langle \omega_i,z'\rangle) + \sqrt{2}\sin(\langle \omega_i,z\rangle)\sqrt{2}\sin(\langle \omega_i,z'\rangle) \right].
\end{align}
Therefore if we define 
\begin{equation}\label{eq:PQC-mercer_1}
(\nu_0,\nu_1,\ldots,\nu_{2M}) = \left(\frac{w^2_0}{\|w\|_2^2}, \frac{w^2_1}{2\|w\|_2^2}, \frac{w^2_1}{2\|w\|_2^2},\ldots, \frac{w^2_M}{2\|w\|_2^2},\frac{w^2_M}{2\|w\|_2^2} \right)
\end{equation}
and
\begin{equation}\label{eq:PQC-mercer_2}
\big(\phi^{(0)}(x),\phi^{(1)}(x),\ldots, \phi^{(2M)}(x)\big) = \big(1,\sqrt{2}\cos(\langle \omega_1,x\rangle),\sqrt{2}\sin(\langle \omega_1,x\rangle),\ldots, \sqrt{2}\cos(\langle \omega_M,x\rangle),\sqrt{2}\sin(\langle \omega_M,x\rangle)\big)
\end{equation}
then we have the decomposition
\begin{equation}\label{eq:PQC-mercer_3}
K_{(\mathcal{D},w)}(z,z') = \sum_{i = 0}^{2M}\nu_i\phi^{(i)}(z)\phi^{(i)}(z'),
\end{equation}
where $\sum_{i = 0}^{2M}\nu_i = 1$.

Without loss of generality we assume that the weight vector $w = (w_0,\dots,w_M)$ has been permuted to be non-increasing, i.e., $w_0 \ge w_1 \ge \dots \ge w_M > 0$, from which it follows that $\nu_0 \ge \nu_1 = \nu_2 \ge \nu_3 = \nu_4 \ge \dots \ge \nu_{2M-1} = \nu_{2M}$.

\begin{lemma}[Intermediate upper bound on $\Gamma(m)$ via Kernel decomposition -- adapted from Ref~\cite{vakili2024kernelizedreinforcementlearningorder}]
\label{lem:gamma_intermediate} 
Given the decomposition of $K_{(\mathcal{D},w)}$ defined via Eqs.~\eqref{eq:PQC-mercer_1}, \eqref{eq:PQC-mercer_2}, and \eqref{eq:PQC-mercer_3}, let
\begin{equation}
K^T(z,z') \equiv \sum_{i=0}^T \nu_i \, \phi^{(i)}(z) \, \phi^{(i)}(z')
\end{equation}
be the \emph{rank-$T$ truncation} of the decomposition. Define the tail kernel $K^0 \equiv K - K^T$ and the associated tail error
\begin{equation}
\epsilon_T \equiv \sup_{z\in\mathcal{Z}} K^0(z,z).
\end{equation}
Then
\begin{equation}
\Gamma(m) \le \frac{T}{2} \log\!\left(1 + \frac{m}{\lambda^2 T}\right) + \frac{m \epsilon_T}{2\lambda^2}.
\end{equation}
\end{lemma}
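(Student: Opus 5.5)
Fix an arbitrary set of points $\{Z_i\}_{i=1}^m\subset\mathcal{Z}$ and bound $\tfrac12\log\det(\mathds{1}_m+\lambda^{-2}K_{\{Z_i\}})$ uniformly in the choice of points; maximizing over the points then gives the claimed bound on $\Gamma(m)$. Split the Gram matrix as $K_{\{Z_i\}}=K^T_{\{Z_i\}}+K^0_{\{Z_i\}}$, where both pieces are Gram matrices of positive semidefinite kernels. This holds because each is a nonnegative combination of the rank-one kernels $\phi^{(i)}(z)\phi^{(i)}(z')$ with $\nu_i\ge0$. Write $A=\mathds{1}_m+\lambda^{-2}K^T_{\{Z_i\}}$, which is positive definite. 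Then
\begin{equation}
\log\det(A+\lambda^{-2}K^0_{\{Z_i\}})=\log\det A+\log\det\!\left(\mathds{1}_m+\lambda^{-2}A^{-1/2}K^0_{\{Z_i\}}A^{-1/2}\right),
\end{equation}
and we bound the two terms separately.

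\textbf{Tail term.} Since $A\succeq\mathds{1}_m$, we have $A^{-1}\preceq\mathds{1}_m$. Using $\log\det(\mathds{1}+B)\le\mathrm{tr}(B)$ for $B\succeq0$, we get
\begin{equation}
\log\det\!\left(\mathds{1}_m+\lambda^{-2}A^{-1/2}K^0A^{-1/2}\right)\le\lambda^{-2}\,\mathrm{tr}(A^{-1}K^0)\le\lambda^{-2}\,\mathrm{tr}(K^0)=\lambda^{-2}\sum_{i=1}^m K^0(Z_i,Z_i)\le\frac{m\epsilon_T}{\lambda^2}.
\end{equation}
The middle inequality uses $\mathrm{tr}(A^{-1}K^0)=\mathrm{tr}(K^{0\,1/2}A^{-1}K^{0\,1/2})\le\mathrm{tr}(K^0)$. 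Halving gives the second term $m\epsilon_T/(2\lambda^2)$.

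\textbf{Truncated term.} Let $\Phi\in\mathbb{R}^{m\times(T+1)}$ have entries $\Phi_{ji}=\sqrt{\nu_i}\,\phi^{(i)}(Z_j)$, so that $K^T_{\{Z_i\}}=\Phi\Phi^\top$. By Sylvester's determinant identity,
\begin{equation}
\log\det(\mathds{1}_m+\lambda^{-2}\Phi\Phi^\top)=\log\det(\mathds{1}_{T+1}+\lambda^{-2}\Phi^\top\Phi),
\end{equation}
which is a determinant of size equal to the rank of the truncation. Apply AM--GM to its eigenvalues: this determinant is at most $\bigl(1+\lambda^{-2}\mathrm{tr}(\Phi^\top\Phi)/(T+1)\bigr)^{T+1}$. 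Moreover $\mathrm{tr}(\Phi^\top\Phi)=\sum_j K^T(Z_j,Z_j)\le m$, because $K^T(z,z)\le K(z,z)=1$ by normalization of $K_{(\mathcal{D},w)}$.

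\textbf{Main obstacle: index bookkeeping.} The truncation $\sum_{i=0}^T$ contains $T+1$ terms, whereas the claimed bound is written with $T$. The plan is to read $T$ as the number of retained features, absorbing the off-by-one into the definition of the rank-$T$ truncation. If the indexing must be kept literally, the argument yields $\tfrac{T+1}{2}\log(1+m/(\lambda^2(T+1)))$, which agrees with the claim up to this constant shift, and the shift is harmless for the later choice of $T$ in Lemma~\ref{lem:gamma_poly}. Apart from this, the only facts needed are positive semidefiniteness of the tail kernel and normalization of the kernel. Adding the two halved bounds gives the lemma.
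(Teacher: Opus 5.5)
Your proof is correct and follows the paper's argument step for step. It uses the same split $K=K^T+K^0$ factored through $\mathds{1}_m+\lambda^{-2}K^T$. The head is handled by the Weinstein--Aronszajn (Sylvester) identity with AM--GM and $K^T(z,z)\le 1$, and the tail by $(\mathds{1}_m+\lambda^{-2}K^T)^{-1}\preceq\mathds{1}_m$ with $\mathrm{tr}(K^0)\le m\epsilon_T$; your $\log\det(\mathds{1}+B)\le\mathrm{tr}(B)$ just merges the paper's AM--GM and $\log(1+x)\le x$ steps.

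The off-by-one you flag is real, and the paper glosses over it: its proof silently uses only $\phi^{(1)},\ldots,\phi^{(T)}$ and $\nu_1,\ldots,\nu_T$, while the statement's $K^T$ has the $T+1$ terms $i=0,\ldots,T$. Read literally, the claim can fail: for $T=2M$ one has $\epsilon_T=0$, yet $\Gamma(m)$ can grow like $\tfrac{2M+1}{2}\log m$. Your fix, counting $T$ as the number of retained features or carrying $T+1$ into Lemma~\ref{lem:gamma_poly}, is the right one.
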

The proof of Lemma~\ref{lem:gamma_intermediate} is given in Appendix~\ref{a:gamma_upper}.

Under Assumption~\ref{ass:poly-eigendecay}, one can optimally balance the truncation rank $T$ in Lemma~\ref{lem:gamma_intermediate} to obtain a tight asymptotic characterization of the information gain:

\begin{lemma}[Maximum information gain under polynomial weight decay]
\label{lem:gamma_poly}
Under Assumption~\ref{ass:poly-eigendecay}, and for any fixed constant regularization parameter $\lambda$, there exists a constant $C_4>0$ (depending only on $C$, $p$, and $\lambda$) such that for all sufficiently large $m$
\begin{equation}
\Gamma(m+1) \le C_4 \bigl( (m+1) \, (\log(m+1))^{p-1} \bigr)^{1/p}.
\end{equation}
In particular, $\Gamma(m+1) = o(m)$ for any $p>1$.
\end{lemma}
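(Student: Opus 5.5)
We start from Lemma~\ref{lem:gamma_intermediate}, applied with $m+1$ samples, and choose the truncation rank $T$ to balance its two terms. The first step is to bound the tail error $\epsilon_T$ using Assumption~\ref{ass:poly-eigendecay}. Since $|\phi^{(i)}(z)|\le\sqrt{2}$ for every $i$ and $z$, we have
\begin{equation}
\epsilon_T=\sup_z\sum_{i>T}\nu_i\,\phi^{(i)}(z)^2\le 2\sum_{i>T}\nu_i .
\end{equation}
In fact the cos/sin pairs combine, $\cos^2+\sin^2=1$, so up to one boundary term the tail is at most $\sum_{j\ge \lfloor T/2\rfloor} w_j^2/\|w\|_2^2$. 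By the assumption,
\begin{equation}
\sum_{j\ge \lfloor T/2\rfloor} \frac{w_j^2}{\|w\|_2^2}\le C\sum_{j\ge \lfloor T/2\rfloor}(j+1)^{-p}\le C' T^{1-p},
\end{equation}
where the last step compares the sum with an integral and uses $p>1$. The constant $C'$ depends only on $C$ and $p$, and in particular is independent of $M$. This independence of $M$ is the key point, because $M$ may be exponentially large in $d$. We also note that if $T\ge 2M$ the tail vanishes, which only helps.

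Substituting, with $n=m+1$, gives
\begin{equation}
\Gamma(n)\le \frac{T}{2}\log\!\left(1+\frac{n}{\lambda^2T}\right)+\frac{C'\,n\,T^{1-p}}{2\lambda^2}.
\end{equation}
The second step is to optimize over $T$. We choose $T=\left\lceil \left(n/\log^{\,1}\! n\right)^{1/p}\right\rceil$; more precisely, we take $T\asymp (n\log^{-1} n)^{1/p}$ so that $nT^{1-p}\approx T\log n$. The resulting computation goes as follows.
\begin{itemize}
\item With $T^p=n/\log n$, the tail term becomes $nT^{1-p}=T\log n$.
\item The first term is at most $\tfrac{T}{2}\log(1+n/\lambda^2)=O(T\log n)$ for large $n$.
\item The total is therefore $O(T\log n)=O\!\left(n^{1/p}(\log n)^{1-1/p}\right)=O\!\left((n\log^{p-1}n)^{1/p}\right)$, which is exactly the claimed form.
\end{itemize}
The constant $C_4$ absorbs $C'$, the factor $1/\lambda^2$, $\log(1/\lambda^2)$-type terms, and the ceiling. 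This requires $n$ large enough that $T\ge1$ and $\log(1+n/(\lambda^2T))\le 2\log n$, which is the source of the qualifier ``for all sufficiently large $m$.'' Finally, since $p>1$, $(n\log^{p-1}n)^{1/p}/n = (\log n)^{(p-1)/p}n^{-(p-1)/p}\to0$, so $\Gamma(m+1)=o(m)$.

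The main obstacle is the first step. The tail bound must be uniform in $z$ and in $M$, and the indexing between the $\nu_i$ (with duplicated pairs) and the $w_j$ must be handled carefully. Pairing the cos and sin features resolves this cleanly. The boundary case of odd $T$, where only one feature of a pair is kept, contributes at most an extra $2\nu_{T+1}\le C(T/2)^{-p}$, which is of lower order than $T^{1-p}$.
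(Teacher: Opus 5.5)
Your proof is correct and follows the paper's argument: apply Lemma~\ref{lem:gamma_intermediate} with $n=m+1$, bound the tail $\epsilon_T=O(T^{1-p})$ uniformly in $M$ by comparing the polynomially decaying weights with an integral, and balance the two terms by choosing $T\asymp (n/\log n)^{1/p}$. Your bookkeeping is slightly tighter than the paper's, which writes $\epsilon_T=\sum_{i>T}\nu_i$ and $\nu_i\le C i^{-p}$ (both hold only up to constant factors because of the duplicated cos/sin eigenvalues) and does not round $T$ to an integer; these differences affect only the constant $C_4$.
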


The proof of Lemma~\ref{lem:gamma_poly} appears in Appendix~\ref{a:gamma_poly}.

Lemma~\ref{lem:gamma_poly} with Lemma~\ref{lem:sufficient_m_K} yields an explicit sufficient sample size:

\begin{corollary}[Sufficient sample complexity for Algorithm~\ref{alg:dequantization}]
\label{cor:poly_sample}
Assume that the reweighting vector $w$ defining the PQC kernel $K_{(\mathcal{D},w)}$ satisfies Assumption~\ref{ass:poly-eigendecay}. Then, under the assumptions of Lemma~\ref{lem:sufficient_m_K}, let $\alpha \equiv (p-1)/p\in(0,1)$ and define
\begin{align}
\mathcal{P} &\equiv 2G_K^2 + \frac{4\tau^2}{\lambda^2}\log\!\frac{4K|\mathcal{A}|\,B_*^{d_{\mathcal{S}}}}{\delta},
\label{eq:calP-def}\\
\mathcal{Q} &\equiv \frac{6\tau^2 d_{\mathcal{S}}}{\lambda^2},
\label{eq:calQ-def}
\end{align}
where $G_K \equiv \max_{1\le k\le K}\|T\hat{Q}_{k-1}\|_{\mathcal{H}_{(\mathcal{D},w)}}$, $B_* \equiv \sqrt{d_{\mathcal{S}}}\,L_{\mathcal{D}}(G_K + V_{\max}/\lambda) + 1$, $L_{\mathcal{D}}\equiv\max_{\omega\in\Omega_{\mathcal{D}}}\|\omega\|_2$, and $V_{\max}\equiv R_{\max}/(1-\gamma)$. Then, provided that
\begin{equation}
K \ge \frac{\log\!\left(\frac{8R_{\max}}{(1-\gamma)^2\epsilon}\right)}{\log(1/\gamma)}-1,
\end{equation}
one has that 
\begin{equation}
m \;=\; \widetilde{\mathcal{O}}\left(\left( \frac{|\mathcal A|\,p_{\max}} {\delta\,\epsilon^2} \right)^{p/(p-1)} \max\!\left\{G_K^2,\; L,\; d_{\mathcal{S}}\right\}^{p/(p-1)}\right),
\end{equation}
is sufficient for 
\begin{equation}
\mathcal{L}(Q^{\pi_K}) \;\le\; \epsilon,
\end{equation}
where $L = \log\bigl(4K|\mathcal{A}|\,B_*^{d_{\mathcal{S}}}/\delta\bigr)$, and $\widetilde{O}(\cdot)$ hides polylogarithmic factors in $1/\epsilon$, $1/\delta$, and constants depending only on $\gamma$, $\lambda$, $\tau$, $C$, and $p$.
\end{corollary}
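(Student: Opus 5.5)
The plan is to substitute the bound of Lemma~\ref{lem:gamma_poly} into the three sample-size conditions \eqref{eq:m_condition1}--\eqref{eq:m_condition3} of Lemma~\ref{lem:sufficient_m_K}. The condition on $K$ is inherited verbatim. We then check that the stated $m$ satisfies all three conditions. We first remove the circular dependence of $\mathcal{F}_m$ on $m$ through $|\mathcal{Z}_m|$. Using $B_{\mathrm{net}}(m)\le \sqrt{m}\,(G_K+V_{\max}/\lambda)$ for $m\ge 1$, we get
\begin{equation*}
\sqrt{d_{\mathcal{S}}}L_{\mathcal{D}}B_{\mathrm{net}}(m)m+1\le B_*\,m^{3/2},
\end{equation*}
so that $\log|\mathcal{Z}_m|\le \log(|\mathcal{A}|B_*^{d_{\mathcal{S}}})+\tfrac{3}{2}d_{\mathcal{S}}\log m$. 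Combining this with $(a+b)^2\le 2a^2+2b^2$ gives
\begin{equation*}
\mathcal{F}_m^2\le 2G_K^2+\frac{4\tau^2}{\lambda^2}\Bigl(\log\tfrac{4K|\mathcal{A}|B_*^{d_{\mathcal{S}}}}{\delta}+\tfrac32 d_{\mathcal{S}}\log m\Bigr)=\mathcal{P}+\mathcal{Q}\log m .
\end{equation*}
This explains the definitions \eqref{eq:calP-def} and \eqref{eq:calQ-def}.

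Next we handle the dominant condition \eqref{eq:m_condition1}. Inserting $\Gamma(m+1)\le C_4((m+1)\log^{p-1}(m+1))^{1/p}$, it suffices (up to the harmless $-1$) to have
\begin{equation*}
(m+1)^{1-1/p}\ \ge\ \frac{36KC_4(\mathcal{P}+\mathcal{Q}\log(m+1))\log^{\alpha}(m+1)}{\delta\,\varepsilon_{\mathrm{stat}}^2\log(1+1/\lambda^2)} .
\end{equation*}
Write $A\equiv 36KC_4/(\delta\varepsilon_{\mathrm{stat}}^2\log(1+1/\lambda^2))$. The condition has the form $n^{\alpha}\ge A(\mathcal{P}+\mathcal{Q}\log n)\log^{\alpha}n$, with $\alpha=(p-1)/p$. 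A standard inversion lemma applies: for $x^{\alpha}\ge c\log^{\beta}x$, the choice $x=\tilde{\mathcal{O}}(c^{1/\alpha})$ suffices, where the hidden logs are in $c$. By that lemma, $n=\tilde{\mathcal{O}}\bigl((A\max\{\mathcal{P},\mathcal{Q}\})^{1/\alpha}\bigr)$ suffices.

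We then unpack the constants. $\varepsilon_{\mathrm{stat}}^{-2}\propto |\mathcal{A}|p_{\max}/\epsilon^2$ with $\gamma$-dependent constants. $K=\mathcal{O}(\log(1/\epsilon))$ is absorbed into the polylogs. $\mathcal{P}\lesssim \max\{G_K^2,L\}$ and $\mathcal{Q}\lesssim d_{\mathcal{S}}$. Together these yield exactly the displayed rate $\bigl(|\mathcal{A}|p_{\max}/(\delta\epsilon^2)\bigr)^{p/(p-1)}\max\{G_K^2,L,d_{\mathcal{S}}\}^{p/(p-1)}$.

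Finally, we verify that the remaining conditions are dominated. For \eqref{eq:m_condition2}, use $B_{\mathrm{net}}(m)\ge \sqrt m V_{\max}/\lambda$. It then suffices that $m^{3/2}\gtrsim(\mathcal{P}+\mathcal{Q}\log m)/\varepsilon_{\mathrm{stat}}^2$, and this requires only $m=\tilde{\mathcal{O}}((\max\{\mathcal P,\mathcal Q\}/\varepsilon_{\mathrm{stat}}^2)^{2/3})$. That requirement is smaller than the bound from the previous step, since there the exponent is $p/(p-1)>1$ and the extra factor $1/\delta\ge1$ appears. Condition \eqref{eq:m_condition3} requires $m\gtrsim 1/\varepsilon_{\mathrm{stat}}$, which is likewise dominated. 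Lemma~\ref{lem:gamma_poly} holds only for sufficiently large $m$, and that threshold depends only on $C,p,\lambda$; it is absorbed into the constants. Lemma~\ref{lem:sufficient_m_K} then gives $\mathcal{L}(Q^{\pi_K})\le\epsilon$ with probability $1-\delta$.

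The main obstacle is the self-referential step. $m$ appears on both sides through $\Gamma(m+1)$, through $\log m$ inside $\mathcal{F}_m$, and through $B_{\mathrm{net}}(m)$. The inversion lemma therefore has to be stated carefully, so that the polylogarithmic factors are genuinely logs of $\mathrm{poly}(d_{\mathcal{S}},|\mathcal{A}|,1/\epsilon,1/\delta,G_K)$ and do not hide a further power. I would first prove the explicit claim that $n=\bigl(2c\,\log^{\beta}(2c)^{1/\alpha}\bigr)^{1/\alpha}$-type choices satisfy $n^\alpha\ge c\log^\beta n$ for $c$ large, and then apply it.
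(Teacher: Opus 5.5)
Your proposal is correct and follows essentially the same route as the paper's proof. The shared steps are: the bound $\mathcal{F}_m^2\le\mathcal{P}+\mathcal{Q}\log m$ via $|\mathcal{Z}_m|\le|\mathcal{A}|B_*^{d_{\mathcal{S}}}m^{3d_{\mathcal{S}}/2}$; substituting Lemma~\ref{lem:gamma_poly} into condition~\eqref{eq:m_condition1} and inverting an inequality of the form $(m/\log m)^{\alpha}\gtrsim A+B'\log m$; and dominating conditions~\eqref{eq:m_condition2}--\eqref{eq:m_condition3} via $B_{\mathrm{net}}(m)\ge\sqrt{m}\,V_{\max}/\lambda$. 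The differences are cosmetic: you merge $\mathcal{P}+\mathcal{Q}\log n$ into a single $\max\{\mathcal{P},\mathcal{Q}\}$ term and invert once, whereas the paper splits into the two branches $(2A)^{1/\alpha}$ and $(2X^*\log X^*)^{(\alpha+1)/\alpha}$ of Lemma~\ref{lem:inversion}; you are also more explicit than the paper that the hidden logarithms involve $|\mathcal{A}|$, $p_{\max}$, $G_K$ and $d_{\mathcal{S}}$.
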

The proof of Corollary~\ref{cor:poly_sample} can be found in Appendix~\ref{a:coro_sample}.

\begin{observation}\label{obs:poly-sample-complexity}
Corollary~\ref{cor:poly_sample} shows that the sample complexity $m$ is polynomial in $d_{\mathcal{S}},|\mathcal{A}|, p_{\max}, 1/\epsilon,1/\delta$ provided that $G_K = \mathcal{O}(\mathrm{poly}(d_{\mathcal{S}},|\mathcal{A}|))$, $L_{\mathcal{D}}= 2^{\mathcal{O}(\mathrm{poly}(d_{\mathcal{S}},|\mathcal{A}|))}$, and $p_{\max} = \mathcal{O}(\mathrm{poly}(d_{\mathcal{S}},|\mathcal{A}|))$.
\end{observation}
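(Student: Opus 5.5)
The statement to prove is Observation~\ref{obs:poly-sample-complexity}. The plan is to read off the dependence of the sufficient sample size in Corollary~\ref{cor:poly_sample} on each problem parameter, and check that every factor is polynomial under the stated hypotheses. That bound is
\begin{equation*}
m=\widetilde{\mathcal{O}}\Bigl(\bigl(|\mathcal{A}|p_{\max}/(\delta\epsilon^2)\bigr)^{p/(p-1)}\max\{G_K^2,L,d_{\mathcal{S}}\}^{p/(p-1)}\Bigr).
\end{equation*}
Since $p>1$ is a fixed constant from Assumption~\ref{ass:poly-eigendecay}, the exponent $p/(p-1)$ is a fixed finite constant. Raising a polynomial to a constant power keeps it polynomial. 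So it suffices to show that the prefactor and the three entries of the maximum are each polynomial in $(d_{\mathcal{S}},|\mathcal{A}|,p_{\max},1/\epsilon,1/\delta)$. The hidden polylogarithmic factors in $1/\epsilon$ and $1/\delta$ are harmless.

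\textbf{Step 1: prefactor and the simple entries.} The prefactor $|\mathcal{A}|p_{\max}/(\delta\epsilon^2)$ is polynomial as it stands. The entry $d_{\mathcal{S}}$ is trivially polynomial. The entry $G_K^2$ is polynomial by the hypothesis $G_K=\mathcal{O}(\mathrm{poly}(d_{\mathcal{S}},|\mathcal{A}|))$.

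\textbf{Step 2: the logarithmic term.} This is the only step needing real care. Expand
\begin{equation*}
L=\log(4K/\delta)+\log|\mathcal{A}|+d_{\mathcal{S}}\log B_*, \qquad B_*=\sqrt{d_{\mathcal{S}}}\,L_{\mathcal{D}}(G_K+V_{\max}/\lambda)+1.
\end{equation*}
\begin{itemize}
\item Take $K$ at the threshold of the condition on $K$, so $K=\mathcal{O}(\log(1/\epsilon))$ with constants depending only on $\gamma$ and $R_{\max}$. Then $\log(4K/\delta)$ is polylogarithmic in $1/\epsilon,1/\delta$.
\item The term $\log|\mathcal{A}|$ is at most $|\mathcal{A}|$.
\item For $\log B_*$, use $\log(xy+1)\le \log(x+1)+\log(y+1)$ and $\log(1+x)\le x$. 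This gives
\begin{equation*}
\log B_*\le \tfrac12\log d_{\mathcal{S}}+\log(L_{\mathcal{D}}+1)+\log(G_K+V_{\max}/\lambda+1)+\mathcal{O}(1).
\end{equation*}
\item The hypothesis $L_{\mathcal{D}}=2^{\mathcal{O}(\mathrm{poly})}$ gives $\log(L_{\mathcal{D}}+1)=\mathcal{O}(\mathrm{poly}(d_{\mathcal{S}},|\mathcal{A}|))$.
\item Since $G_K$ is polynomial and $V_{\max},\lambda$ are constants, $\log(G_K+V_{\max}/\lambda+1)=\mathcal{O}(\log\mathrm{poly})$.
\end{itemize}
Multiplying by $d_{\mathcal{S}}$, the term $d_{\mathcal{S}}\log B_*$ is polynomial, and hence so is $L$.

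\textbf{Step 3: combine.} The maximum is polynomial. Raising it and the prefactor to the power $p/(p-1)$ keeps them polynomial, and so does multiplying by polylog factors. This proves the claim.

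The \textbf{main obstacle} is a consistency check. The bound on $m$ in Corollary~\ref{cor:poly_sample} was obtained from conditions \eqref{eq:m_condition1}--\eqref{eq:m_condition3}, which are implicit in $m$: $B_{\mathrm{net}}(m)$ and $|\mathcal{Z}_m|$ grow with $m$. I would verify that the corollary already absorbed this self-reference, which is why $B_*$ rather than $B_{\mathrm{net}}(m)$ appears. If it did not, I would argue as follows. The $m$-dependence enters only through $\log m$ in $\mathcal{F}_m^2$, while $\Gamma(m+1)=o(m)$ by Lemma~\ref{lem:gamma_poly}. A bound of the form $m\ge A\,m^{1/p}\,\mathrm{polylog}(m)$ is then solved by $m=\widetilde{\mathcal{O}}(A^{p/(p-1)})$, which stays polynomial whenever $A$ is.
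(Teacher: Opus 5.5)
Your proposal is correct and follows the same route as the paper, which gives no separate proof: the Observation is read off from the final bound of Corollary~\ref{cor:poly_sample}. Your treatment of $L=\log(4K|\mathcal{A}|B_*^{d_{\mathcal{S}}}/\delta)$, showing that $d_{\mathcal{S}}\log B_*=\mathcal{O}(\mathrm{poly})$ when $L_{\mathcal{D}}=2^{\mathcal{O}(\mathrm{poly})}$, is exactly why that hypothesis appears. The self-reference you flag is already absorbed in the corollary's proof: it replaces $B_{\mathrm{net}}(m)$ by $B_*$ at the cost of an $m^{3d_{\mathcal{S}}/2}$ factor in $|\mathcal{Z}_m|$, giving $\mathcal{F}_m^2\le\mathcal{P}+\mathcal{Q}\log m$, and then inverts $(m/\log m)^{\alpha}\ge A+B'\log m$ via Lemma~\ref{lem:inversion}, just as your fallback sketch anticipates.
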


\subsection{Bounding the Frequency Norm $L_{\mathcal{D}}$}\label{ss:Lk-term}

We now establish that the frequency norm $L_{\mathcal{D}}$ is controlled by the data-encoding strategy. Recall that 
\begin{align}
N_{\mathrm{max}} &\coloneqq \max_{j\in [d_{\mathcal{S}}+d_{\mathcal{A}}]} \left[N_j\right],\\
\lambda_\mathrm{max} &\coloneqq \max_{j\in [d_{\mathcal{S}}+d_{\mathcal{A}}]}\left[\max_{k\in [N_j]}\left[\|H_k^{(j)}\|_{\rm op}\right]\right]
\end{align}
denote the maximum number of encoding Hamiltonians per data component and the largest eigenvalue, respectively. We then have

\begin{lemma}[Upper bound on $L_{\mathcal{D}}$]\label{lem:L_K-upperbound} We have 
\begin{equation}
L_{\mathcal{D}}\leq 2\sqrt{d}N_\mathrm{max}\lambda_\mathrm{max},
\end{equation}
where $d = d_\mathcal{S} + d_\mathcal{A}\leq d_\mathcal{S} + |\mathcal{A}|$.
\end{lemma}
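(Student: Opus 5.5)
The plan is to bound each coordinate of an arbitrary frequency vector $\omega\in\Omega_{\mathcal{D}}$ separately, using the Cartesian product structure of Eq.~\eqref{eq:cartesian_freqs}, and then combine the coordinates via the elementary inequality $\|\omega\|_2\le\sqrt{d}\,\|\omega\|_\infty$.

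First fix a coordinate $j\in[d]$ and recall from Appendix~\ref{app:data-encoding} that every element of $\tilde{\Omega}^{(j)}_{\mathcal{D}}$ has the form $\Lambda_{\vec{i}}-\Lambda_{\vec{j}}$. Here each $\Lambda_{\vec{i}}=\sum_{k}\lambda^{i_k}_k$ is a sum of one eigenvalue from each of the $N_j$ encoding Hamiltonians $H^{(j)}_k\in\mathcal{D}^{(j)}$. Every eigenvalue of a Hermitian $H^{(j)}_k$ satisfies $|\lambda^{i}_k|\le\|H^{(j)}_k\|_{\rm op}\le\lambda_{\max}$. Hence $|\Lambda_{\vec{i}}|\le N_j\lambda_{\max}\le N_{\max}\lambda_{\max}$, and by the triangle inequality
\begin{equation}
|\omega_j| = |\Lambda_{\vec{i}}-\Lambda_{\vec{j}}|\le 2N_{\max}\lambda_{\max}.
\end{equation}

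Since $\Omega_{\mathcal{D}}\subseteq\tilde{\Omega}_{\mathcal{D}}=\tilde{\Omega}^{(1)}_{\mathcal{D}}\times\cdots\times\tilde{\Omega}^{(d)}_{\mathcal{D}}$, every $\omega\in\Omega_{\mathcal{D}}$ has all $d$ coordinates bounded in this way. It follows that
\begin{equation}
\|\omega\|_2=\Bigl(\sum_{j=1}^d\omega_j^2\Bigr)^{1/2}\le\sqrt{d}\,\max_j|\omega_j|\le 2\sqrt{d}\,N_{\max}\lambda_{\max}.
\end{equation}
Taking the maximum over $\omega$ gives $L_{\mathcal{D}}\le 2\sqrt{d}N_{\max}\lambda_{\max}$. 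The final inequality $d\le d_{\mathcal{S}}+|\mathcal{A}|$ is immediate from $d=d_{\mathcal{S}}+d_{\mathcal{A}}$ and the standing assumption $d_{\mathcal{A}}\le|\mathcal{A}|$.

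There is no serious obstacle. The only point needing care is bookkeeping: the number of Hamiltonians encoding coordinate $j$ (denoted $L_j$ in the appendix's multi-index notation) must be identified with $N_j$, so that the eigenvalue sum has at most $N_{\max}$ terms. The action coordinates are treated in the same way, using the Hamiltonians $\tilde{H}^{(j)}_k$, which the definitions of $N_{\max}$ and $\lambda_{\max}$ already range over.
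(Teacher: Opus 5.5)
Your proof is correct and follows essentially the same route as the paper: you use the Cartesian-product structure to bound each coordinate of a frequency vector by $2N_{\max}\lambda_{\max}$, since each coordinate is a difference of sums of at most $N_j$ eigenvalues, and then combine the $d$ coordinates to obtain the $\sqrt{d}$ factor. Your two-sided bound $|\Lambda_{\vec{i}}|\le N_j\lambda_{\max}$ is slightly cleaner than the paper's one-sided bound $\Lambda^{(j)}_{\vec{i}}\le N_j\lambda^{(j)}_{\max}$, which implicitly relies on the symmetry of the frequency set to control negative frequencies.
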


\begin{proof} It follows from the way in which $\Omega_\mathcal{D}$ is constructed from $\tilde{\Omega}_\mathcal{D}$ that $\max_{\omega\in\Omega_{\mathcal{D}}}\|\omega\|_2 = \max_{\omega\in\tilde{\Omega}_{\mathcal{D}}}\|\omega\|_2$, and we therefore focus on upper bounding the latter quantity. To this end, recall that the frequency set $\tilde{\Omega}_\mathcal{D}$ has a Cartesian product structure
\begin{equation}
\tilde{\Omega}_\mathcal{D} = \tilde{\Omega}^{(1)}_\mathcal{D} \times \cdots \times \tilde{\Omega}^{(d)}_\mathcal{D},
\end{equation}
and as a result for any $\omega\in\tilde{\Omega}_{\mathcal{D}}$, we have
\begin{equation}
\|\omega\|_2 = \sqrt{\sum_{j=1}^d (\omega^{(j)})^2} \leq \sqrt{\sum_{j=1}^d \left(\omega^{(j)}_{\max}\right)^2},
\end{equation}
where $\omega^{(j)}_{\max} = \max\{\omega^{(j)} \in \tilde{\Omega}^{(j)}_\mathcal{D}\}$. Therefore
\begin{equation}
L_{\mathcal{D}} = \max_{\omega\in\tilde{\Omega}_{\mathcal{D}}}\|\omega\|_2 \leq \sqrt{\sum_{j=1}^d \left(\omega^{(j)}_{\max}\right)^2}.
\end{equation}
Given this, it remains to upper bound $\omega^{(j)}_{\max}$ for each $j\in[d]$. Recall from Appendix~\ref{app:data-encoding} that the frequencies in $\tilde{\Omega}^{(j)}_\mathcal{D}$ are constructed from differences of sums of eigenvalues of the encoding Hamiltonians $\{H^{(j)}_1,\ldots,H^{(j)}_{N_j}\}$. Specifically,
\begin{equation}
\tilde{\Omega}^{(j)}_\mathcal{D} = \left\{\Lambda^{(j)}_{\vec{i}} - \Lambda^{(j)}_{\vec{j}} \,\Big|\, \vec{i}, \vec{j}\right\},
\end{equation}
where 
\begin{equation}
\Lambda^{(j)}_{\vec{i} = i_1,\ldots,i_{N_j}} = \sum_{k=1}^{N_j} \lambda^{i_k}_k
\end{equation}
is the sum of the $i_k$'th eigenvalues of the $k$ Hamiltonians in $\mathcal{D}^{(j)}$. As such, if we define
\begin{equation}
\lambda^{(j)}_{\max} = \max_{k\in N_j}[\|H^{(j)}_k\|_{\rm op}]
\end{equation}
to be the largest singular value of all Hamiltonians encoding component $j$ then we immediately have
\begin{equation}
\Lambda^{(j)}_{\vec{i}} \leq N_j\lambda^{(j)}_{\max}
\end{equation}
for all $\vec{i}$, from which it follows that
\begin{equation}
\max_{\vec{i}}\left(\Lambda^{(j)}_{\vec{i}}\right) \leq N_j \lambda^{(j)}_{\max}.
\end{equation}
Therefore
\begin{equation}
\omega^{(j)}_{\max} \leq 2N_j\lambda^{(j)}_{\max} \leq 2N_{\max}\lambda_{\max},
\end{equation}
where in the last step we used the definitions $N_{\max} = \max_{j\in[d]}[N_j]$ and $\lambda_{\max} = \max_{j\in[d]}\left[\lambda^{(j)}_\mathrm{max}\right]$. Substituting back gives
\begin{equation}
L_{\mathcal{D}} = \sqrt{\sum_{j=1}^d \left(\omega^{(j)}_{\max}\right)^2} \leq \sqrt{\sum_{j=1}^d \left(2N_{\max}\lambda_{\max}\right)^2}= 2\sqrt{d}\, N_{\max}\lambda_{\max},
\end{equation}
which completes the proof. 
\end{proof}
For typical Pauli encoding strategies, $N_\mathrm{max}$ and $\lambda_\mathrm{max}$ are constants, ensuring $L_{\mathcal{D}} = \mathcal{O}(\sqrt{d})$.

\subsection{Time Complexity: Efficient Kernel Evaluation}\label{ss:time-complexity}

Each iteration of Algorithm~\ref{alg:dequantization} requires multiple evaluations of the kernel $K_{(\mathcal{D},w)}$. Since the feature space dimension $M$ scales exponentially with $d$, naive evaluation is intractable. Two approaches circumvent this:
\begin{enumerate}
\item \textbf{Random Fourier Features (RFF)~\cite{landman2022classicallyapproximatingvariationalquantum,Sweke2025potential,sahebi2025dequantizationsupervisedquantummachine}:} Approximates $K_{(\mathcal{D},w)}$ using a polynomial-dimension random feature map, efficient whenever $w$ defines an efficiently samplable distribution over $\Omega_{\mathcal{D}}$.
\item \textbf{Tensor network contractions~\cite{sweke2025kernelbaseddequantizationvariationalqml}:} Evaluates $K_{(\mathcal{D},w)}$ exactly and efficiently when $w$ is induced from a symmetric MPS with polynomial bond-dimension, exploiting the Cartesian product structure of $\Omega_{\mathcal{D}}$.
\end{enumerate}
Since Alice chooses $w$, she can always select a weight vector admitting efficient evaluation. For the analysis of the preceding sections to hold without modification (i.e., without additional approximation error from RFF), it suffices that $w$ is induced from a symmetric MPS as in Ref.~\cite{sweke2025kernelbaseddequantizationvariationalqml}.

\subsection{Combining the Bounds}\label{ss:comb_bounds}
Collecting the results above: under Conditions 1--5 of Theorem~\ref{thm:main-classical}, Corollary~\ref{cor:poly_sample} together with Lemma~\ref{lem:L_K-upperbound} and the alignment condition $G_K = \mathcal{O}(\mathrm{poly}(d_\mathcal{S},|\mathcal{A}|))$ ensure that the sample complexity $m$ is polynomial. The tensor network structure on $w$ ensures time efficiency. The number of iterations $K = \mathcal{O}(\log(1/\epsilon))$ is always polynomial. This completes the proof of Theorem~\ref{thm:main-classical}. \qed

\section{Proof of \Cref{lem:sufficient_m_K}}\label{a:proof_lemm_mk}

Although \Cref{lem:sufficient_m_K} is stated for the PQC kernel $K_{(\mathcal{D},w)}$, we prove a more general version that holds for any kernel $k$ whose associated RKHS $\mathcal{H}_K$ satisfies the assumptions in \Cref{sub:assumptions}. The key structural assumption used below is realizability of the clipped Bellman targets: if $\hat{Q}_{k-1}=\Pi_{[-V_{\max},V_{\max}]}F_{k-1}$, then $g_{k-1}\equiv T\hat{Q}_{k-1}\in\mathcal{H}_K$. We also use that the noise terms are independent $\tau^2$-sub-Gaussian random variables (Observation~\ref{obs:subgaussian-noise}). In the following, we denote by $\mathcal{H}_K$ the RKHS associated with the kernel $k$, and all norms and inner products without subscripts refer to this space. We note that the error in the FQI error-propagation theorem below is measured with respect to the $L^2$ norm weighted by the sampling distribution $\sigma$; this weighted $L^2$ norm is the natural error metric throughout the proof.

We begin by recalling the following error-propagation result for Fitted Q-Iteration, restated from Ref.~\cite{theoretical_deepQ}. In the statement below, $\sigma$ denotes the state-action distribution from which the training data is sampled at each iteration of Fitted Q-Iteration, while $\mu$ denotes the (possibly different) state-action distribution with respect to which the resulting policy is evaluated.

\begin{theorem}[FQI error propagation (Theorem 6.1 in \cite{theoretical_deepQ})]
\label{th:fqi_error_propagation}
Let $\{\hat{Q}_k\}_{k=0}^K$ be the sequence of action-value estimators produced by Fitted Q-Iteration, and let
\begin{equation}
\pi_K(s) = \arg\max_a \hat{Q}_K(s,a)
\end{equation}
be the greedy policy induced by the final estimate $\hat{Q}_K$. Let $\sigma$ be the sampling distribution over $\mathcal Z=\mathcal S\times\mathcal A$ used by Fitted Q-Iteration, let $\mu$ be a fixed distribution over $\mathcal Z$, and let $T$ denote the optimal Bellman operator. Assume there exists a constant $\phi_{\mu,\sigma}<\infty$ such that
\begin{equation}
(1-\gamma)^2
\sum_{m\ge 1}
\gamma^{m-1}m\,
\sup_{\pi_1,\ldots,\pi_m}
\left[
\mathbb E_{\sigma}
\left|
\frac{
d(P^{\pi_m}P^{\pi_{m-1}}\cdots P^{\pi_1}\mu)
}{
d\sigma
}
\right|^2
\right]^{1/2}
\le
\phi_{\mu,\sigma},
\end{equation}
where $P^{\pi_j}$ denotes the transition operator induced by policy $\pi_j$. Then
\begin{equation}
\mathbb{E}_{(s,a)\sim\mu}\!\left[|Q^*(s,a)-Q^{\pi_K}(s,a)|\right]
\le
\frac{2\phi_{\mu,\sigma}\gamma}{(1-\gamma)^2}
\max_{1\le k\le K}
\left(
\mathbb{E}_{(s,a)\sim\sigma}
\!\left[
|\hat{Q}_k(s,a)-T\hat{Q}_{k-1}(s,a)|^2
\right]
\right)^{1/2}
+
\frac{4\gamma^{K+1}}{(1-\gamma)^2}R_{\max}.
\label{eq:error_decomposition_main}
\end{equation}
\end{theorem}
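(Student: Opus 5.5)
We follow the classical pointwise error-propagation argument of Munos and Szepesv\'ari, as in Ref.~\cite{theoretical_deepQ}. The idea is to express $Q^*-Q^{\pi_K}$ as a positive linear combination of products of transition operators applied to the one-step errors. We then move from $\mu$ to $\sigma$ through Radon--Nikodym derivatives and Cauchy--Schwarz.

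\textbf{Step 1: one-step sandwich.} Define the regression errors $\varepsilon_k \equiv T\hat Q_{k-1}-\hat Q_k$ for $k\in[K]$. Let $\pi^*$ be greedy with respect to $Q^*$ and $\tilde\pi_k$ be greedy with respect to $\hat Q_k$. The greedy properties give
\begin{equation*}
\gamma P^{\tilde\pi_k}(Q^*-\hat Q_k)\;\le\; TQ^*-T\hat Q_k \;\le\; \gamma P^{\pi^*}(Q^*-\hat Q_k),
\end{equation*}
where $P^\pi$ acts on functions of $(s,a)$ as $(P^\pi f)(s,a)=\mathbb E_{s'\sim P(s,a),a'\sim\pi(s')}f(s',a')$. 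Since $Q^*=TQ^*$, we have $Q^*-\hat Q_{k+1}=(TQ^*-T\hat Q_k)+\varepsilon_{k+1}$. Unrolling from $k=0$ to $K$ yields pointwise upper and lower bounds on $Q^*-\hat Q_K$:
\begin{itemize}
\item an upper bound $\sum_{k}\gamma^{K-k}(P^{\pi^*})^{K-k}\varepsilon_k+\gamma^K(P^{\pi^*})^K(Q^*-\hat Q_0)$;
\item a lower bound of the same shape, but built from products $P^{\tilde\pi_{K-1}}\cdots P^{\tilde\pi_{k}}$.
\end{itemize}

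\textbf{Step 2: from $\hat Q_K$ to $Q^{\pi_K}$.} Since $\pi_K=\tilde\pi_K$ is greedy with respect to $\hat Q_K$, the standard identity gives
\begin{equation*}
Q^*-Q^{\pi_K}\le (I-\gamma P^{\pi_K})^{-1}\bigl(\gamma P^{\pi^*}-\gamma P^{\pi_K}\bigr)(Q^*-\hat Q_K).
\end{equation*}
We insert the two bounds from Step 1 into this inequality and use $(I-\gamma P^{\pi_K})^{-1}=\sum_{j\ge 0}\gamma^j(P^{\pi_K})^j$. This gives a bound of the form
\begin{equation*}
|Q^*-Q^{\pi_K}|\le \sum_{k=1}^{K}\sum_{j\ge0}\gamma^{K-k+1+j}\,\mathcal{P}_{k,j}\,|\varepsilon_k| \;+\;(\text{term in }\gamma^{K+1}).
\end{equation*}
Here each $\mathcal{P}_{k,j}$ is a sum of at most two products of $K-k+1+j$ transition operators $P^{\pi_i}$ for various policies. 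The initial-error term is bounded using $\|Q^*-\hat Q_0\|_\infty\le 2V_{\max}$ together with the factor $(1-\gamma)^{-1}$ from the Neumann series. This produces the $\frac{4\gamma^{K+1}}{(1-\gamma)^2}R_{\max}$ contribution.

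\textbf{Step 3: change of measure and summation.} We integrate against $\mu$. For each term we write
\begin{equation*}
\mathbb E_\mu\bigl[P^{\pi_m}\cdots P^{\pi_1}|\varepsilon_k|\bigr]=\mathbb E_\sigma\!\Bigl[\tfrac{d(P^{\pi_m}\cdots P^{\pi_1}\mu)}{d\sigma}|\varepsilon_k|\Bigr],
\end{equation*}
and apply Cauchy--Schwarz in $L^2(\sigma)$. This bounds each term by the concentrability quantity inside $\phi_{\mu,\sigma}$ times $\|\varepsilon_k\|_{L^2(\sigma)}\le\max_k\|\varepsilon_k\|_{L^2(\sigma)}$. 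Regrouping by the total number $m$ of transition operators, the number of pairs $(k,j)$ with $K-k+1+j=m$ is at most $m$. The total weight on $m$-fold products is therefore at most of order $\gamma^{m}m$. Summing against the supremum over policies and using the hypothesis defining $\phi_{\mu,\sigma}$, together with the normalizing factor $(1-\gamma)^2$, gives $\frac{2\phi_{\mu,\sigma}\gamma}{(1-\gamma)^2}\max_k\|\varepsilon_k\|_{L^2(\sigma)}$.

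\textbf{Main obstacle.} The only delicate part is the bookkeeping in Steps 2--3. The absolute value must be handled so that both the upper and lower bounds contribute, which is the source of the factor $2$. The combinatorial count of pairs $(k,j)$ must match the weight $\gamma^{m-1}m$ exactly. I would first carry out the computation with unnormalized weights, then verify that the geometric-series normalization $(1-\gamma)^2\sum_m\gamma^{m-1}m=1$ reproduces the stated constants.
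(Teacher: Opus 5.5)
Your proposal is correct and follows the same route as the source. The paper does not reprove this theorem; it imports it from Ref.~\cite{theoretical_deepQ}, whose proof is exactly this Munos--Szepesv\'ari scheme: sandwich $Q^*-\hat Q_K$ between products of $P^{\pi^*}$ and greedy-policy kernels, pass to $Q^{\pi_K}$ through $(I-\gamma P^{\pi_K})^{-1}$, then change measure with Cauchy--Schwarz and use that at most $m$ pairs $(k,j)$ give an $m$-fold product. Your bookkeeping yields the constants $2\gamma\phi_{\mu,\sigma}/(1-\gamma)^2$ and $4\gamma^{K+1}R_{\max}/(1-\gamma)^2$ directly. The second constant uses $\|\hat Q_0\|_\infty\le V_{\max}$, which the statement leaves implicit but which holds here because of the clipping in Algorithm~\ref{alg:dequantization}.
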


In our setting, Algorithm~\ref{alg:dequantization} samples training data from $\sigma = U(\mathcal{Z})$ (Assumption~\ref{ass:sampling-model}), and the loss $\mathcal{L}$ of Definition~\ref{def:loss-function} evaluates the resulting policy with respect to this same uniform distribution. We will therefore use Theorem~\ref{th:fqi_error_propagation} with $\mu=\sigma=U(\mathcal{Z})$. For this, we need to bound the resulting concentrability coefficient $\phi_{U,U}$. We do so using Assumption~\ref{ass:bounded-density}, which requires that the transition density $p(s'\mid s,a)$ with respect to $U(\mathcal{S})$ is uniformly bounded by $p_{\max}<\infty$.

\begin{lemma}[Concentrability coefficient for uniform sampling and evaluation]
\label{lem:concentrability-UU}
Under Assumption~\ref{ass:bounded-density}, if we choose $\mu=\sigma=U(\mathcal{Z})$ in Theorem~\ref{th:fqi_error_propagation}, then
\begin{equation}
\phi_{U,U} \;\le\; \sqrt{|\mathcal{A}|\,p_{\max}}.
\end{equation}
\end{lemma}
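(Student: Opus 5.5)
The plan is to bound the Radon--Nikodym derivative of every multi-step distribution $P^{\pi_m}\cdots P^{\pi_1}\mu$ uniformly, and then sum the series in Theorem~\ref{th:fqi_error_propagation} exactly. The key observation is that only the \emph{last} transition matters: whatever distribution $\nu$ over $\mathcal{Z}$ we have after $m-1$ steps, one further application of $P^{\pi_m}$ produces a measure dominated by $|\mathcal{A}|\,p_{\max}\,U(\mathcal{Z})$. This is a pointwise (essentially sup-norm) bound, and it immediately controls the $L^2(\sigma)$ norm.

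\textbf{Step 1 (one-step domination).} Throughout, $P^{\pi}$ acts on distributions over $\mathcal{Z}$ via $(s,a)\mapsto s'\sim P(\cdot\mid s,a)$, $a'\sim\pi(\cdot\mid s')$. Fix any $\nu\in\mathcal{P}(\mathcal{Z})$ and any (possibly stochastic) policy $\pi$. Integrating Assumption~\ref{ass:bounded-density} against $\nu$ shows that the next-state marginal $\rho(B)=\int P(B\mid s,a)\,d\nu(s,a)$ satisfies $\rho(B)\le p_{\max}U(\mathcal{S})(B)$ for all measurable $B$. Hence $\rho\ll U(\mathcal{S})$, with density $h\le p_{\max}$ a.e. by Radon--Nikodym. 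The joint law of $(s',a')$ then has density with respect to $U(\mathcal{Z})=U(\mathcal{S})\otimes U(\mathcal{A})$ given by
\begin{equation}
\frac{d(P^{\pi}\nu)}{dU(\mathcal{Z})}(s',a') = |\mathcal{A}|\,h(s')\,\pi(a'\mid s') \le |\mathcal{A}|\,p_{\max}.
\end{equation}
Applying this with $\nu=P^{\pi_{m-1}}\cdots P^{\pi_1}\mu$ covers every $m\ge1$ and every choice of $\pi_1,\dots,\pi_m$.

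\textbf{Step 2 ($L^2$ bound).} Write $f$ for this density. Since $0\le f\le |\mathcal{A}|p_{\max}$ and $\mathbb{E}_\sigma[f]=1$,
\begin{equation}
\mathbb{E}_\sigma[f^2]\le |\mathcal{A}|\,p_{\max}\,\mathbb{E}_\sigma[f]=|\mathcal{A}|\,p_{\max}.
\end{equation}
Thus each bracketed term, and hence each supremum over policies, is at most $\sqrt{|\mathcal{A}|p_{\max}}$.

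\textbf{Step 3 (summing the series).} Use the identity $\sum_{m\ge1}m\gamma^{m-1}=(1-\gamma)^{-2}$. The prefactor $(1-\gamma)^2$ cancels it exactly, which gives $\phi_{U,U}\le\sqrt{|\mathcal{A}|p_{\max}}$.

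The only delicate point is Step 1: passing rigorously from the set-wise inequality in Assumption~\ref{ass:bounded-density} to an a.e. density bound for the mixture $\rho$, and handling stochastic policies, for which the factor $\pi(a'\mid s')\le 1$ is what yields the $|\mathcal{A}|$ factor. Both are handled by Radon--Nikodym plus Fubini. I expect no other obstacle.
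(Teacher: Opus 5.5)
Your proposal is correct and follows the paper's proof essentially step for step: a one-step bound $d(P^{\pi}\nu)/dU(\mathcal{Z})\le|\mathcal{A}|\,p_{\max}$ that holds for any input distribution and any policy (so it applies at the last transition for every $m\ge1$), then $\mathbb{E}_\sigma[f^2]\le\|f\|_\infty\,\mathbb{E}_\sigma[f]$, then the exact cancellation $(1-\gamma)^2\sum_{m\ge1}m\gamma^{m-1}=1$. The only difference is cosmetic and slightly cleaner on your side: you integrate the set-wise assumption against $\nu$ before taking a Radon--Nikodym derivative, whereas the paper writes an explicit density formula using $f_\eta$ and a pointwise transition density $p(s'\mid s,a)$, which tacitly assumes $\eta\ll U(\mathcal{Z})$ and a jointly measurable version of that density (both harmless in context, since the recursion starts from $U(\mathcal{Z})$).
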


\begin{proof}
For any probability distribution $\eta$ on $\mathcal{Z}$, write $f_\eta \equiv d\eta/dU(\mathcal{Z})$ for its density with respect to $U(\mathcal{Z})=U(\mathcal{S})\times U(\mathcal{A})$, so that $\mathbb{E}_{U(\mathcal{Z})}[f_\eta]=1$. Fix any policy $\pi$ and let $\eta' = P^{\pi}\eta$ be the state-action distribution obtained by sampling $(s,a)\sim\eta$, then $s'\sim P(\cdot\mid s,a)$ and $a'\sim\pi(\cdot\mid s')$. Unwinding the definition of $\eta'$ in terms of $f_\eta$ and the transition density $p(\cdot\mid\cdot,\cdot)$ gives, for every $(s',a')\in\mathcal{Z}$,
\begin{equation}
f_{\eta'}(s',a')
= \pi(a'\mid s')\sum_{a\in\mathcal{A}}\mathbb{E}_{s\sim U(\mathcal{S})}\bigl[f_\eta(s,a)\,p(s'\mid s,a)\bigr].
\end{equation}
By Assumption~\ref{ass:bounded-density}, for every $(s,a)$ and every measurable $B\subseteq\mathcal{S}$,
\[
P(B\mid s,a)\le p_{\max}U(\mathcal{S})(B),
\]
which implies that $P(\cdot\mid s,a)$ has a density $p(s'\mid s,a)$ with respect to $U(\mathcal{S})$ satisfying $p(s'\mid s,a)\le p_{\max}$. Using this, joint with the facts that $\pi(a'\mid s')\le 1$, and $\mathbb{E}_{U(\mathcal{Z})}[f_\eta]=1$, we obtain
\begin{equation}
f_{\eta'}(s',a')
\le p_{\max}\sum_{a\in\mathcal{A}}\mathbb{E}_{s\sim U(\mathcal{S})}\bigl[f_\eta(s,a)\bigr]
= |\mathcal{A}|\,p_{\max}\,\mathbb{E}_{U(\mathcal{Z})}[f_\eta]
= |\mathcal{A}|\,p_{\max}.
\end{equation}
Crucially, this bound holds for \emph{any} distribution $\eta$ and any policy $\pi$, so it applies in particular at every stage of an arbitrary composition of transitions. Specifically, let $\eta_m \equiv P^{\pi_m}\cdots P^{\pi_1}U(\mathcal{Z})$ for $m\ge 1$ and any policy sequence $\pi_1,\ldots,\pi_m$. Since $\eta_m$ is obtained from $\eta_{m-1}$ (with $\eta_0=U(\mathcal{Z})$) by a single application of $P^{\pi_m}$, the bound above gives, for every $m\ge 1$,
\begin{equation}
f_{\eta_m}(s,a) \le |\mathcal{A}|\,p_{\max} \qquad \text{for all } (s,a)\in\mathcal{Z},
\end{equation}
uniformly over $\pi_1,\ldots,\pi_m$. Since $f_{\eta_m}\ge 0$ and $\mathbb{E}_{U(\mathcal{Z})}[f_{\eta_m}]=1$, this pointwise bound also implies
\begin{equation}
\mathbb{E}_{U(\mathcal{Z})}\bigl[f_{\eta_m}^2\bigr] \le |\mathcal{A}|\,p_{\max}\cdot\mathbb{E}_{U(\mathcal{Z})}[f_{\eta_m}] = |\mathcal{A}|\,p_{\max},
\end{equation}
so that, for every $m\ge 1$,
\begin{equation}
\sup_{\pi_1,\ldots,\pi_m}\left(\mathbb{E}_{U(\mathcal{Z})}\bigl[f_{\eta_m}^2\bigr]\right)^{1/2} \le \sqrt{|\mathcal{A}|\,p_{\max}}.
\end{equation}
Substituting this into the series defining $\phi_{\mu,\sigma}$ in Theorem~\ref{th:fqi_error_propagation}, and using $\sum_{m\ge 1} m\gamma^{m-1} = (1-\gamma)^{-2}$,
\begin{equation}
(1-\gamma)^2\sum_{m\ge 1}\gamma^{m-1}m\sup_{\pi_1,\ldots,\pi_m}\left(\mathbb{E}_{U(\mathcal{Z})}\bigl[f_{\eta_m}^2\bigr]\right)^{1/2}
\le (1-\gamma)^2\sqrt{|\mathcal{A}|\,p_{\max}}\sum_{m\ge 1}m\gamma^{m-1}
= \sqrt{|\mathcal{A}|\,p_{\max}}.
\end{equation}
Hence $\phi_{U,U}=\sqrt{|\mathcal{A}|\,p_{\max}}$ satisfies the defining inequality of $\phi_{\mu,\sigma}$, which proves the lemma.
\end{proof}

The decomposition of the error given in Theorem~\ref{th:fqi_error_propagation} separates the total error into a \emph{statistical} component (the first term: the maximum approximation error at each iteration) and an \emph{algorithmic} component (the second term: which vanishes exponentially with the number of iterations). Our strategy to ensure the total error is at most $\epsilon$ is to bound each component by at most $\epsilon/2$ with high probability over both the random training data and the noise. Controlling the algorithmic term by requiring the second term to be at most $\epsilon/2$ yields condition~\eqref{eq:K_condition}. The prefactors in the statistical term depend only on $\gamma$, so it remains to bound the quantity $\max_{1\le k\le K} \left(\mathbb{E}_{(s,a)\sim\sigma}\!\left[|\hat{Q}_k(s,a) - T\hat{Q}_{k-1}(s,a)|^2\right]\right)^{1/2}$.

For full rigor, we analyze each iteration conditionally on the past. For $k\in\{1,\dots,K\}$, let $\mathcal{F}_{k-1}$ be the $\sigma$-field generated by all batches and noises up to round $k-1$. Then $\hat{Q}_{k-1}$ and $g_{k}=T\hat{Q}_{k-1}$ are $\mathcal{F}_{k-1}$-measurable, while the current batch and current noise are fresh draws independent of $\mathcal{F}_{k-1}$. Hence every concentration/discretization statement at round $k$ is first proved under $\Pr(\cdot\mid \mathcal{F}_{k-1})$. For readability, we suppress this conditioning in the notation below, but it is always understood.

In order to upper bound this quantity, we first derive a high-probability bound on the RKHS predictor error $|g_{k-1}(z)-F_k(z)|$, where $g_{k-1}\equiv T\hat{Q}_{k-1}$ denotes the Bellman target and $F_k$ denotes the (unclipped) kernel ridge regression output at iteration $k$, before applying the projection $\Pi_{[-V_{\max},V_{\max}]}$. To do this, let $\mathbf{K}_m(z) = (K(z,z_1),\dots,K(z,z_m))^\top$ be the kernel vector, $K_m$ the Gram matrix, and define the vectors of target values and noise as
\begin{equation}
\mathbf{y}_{k-1} = \bigl(g_{k-1}(z_1),\dots,g_{k-1}(z_m)\bigr)^\top,
\qquad
\boldsymbol{\epsilon}_m = (\epsilon_1,\dots,\epsilon_m)^\top,
\end{equation}
where, under Observation~\ref{obs:subgaussian-noise}, the regression labels are modeled as $y_i=g_{k-1}(z_i)+\epsilon_i$, with the noise terms $\epsilon_i$ independent $\tau^2$-sub-Gaussian random variables. The kernel ridge regression predictor at iteration $k$ is the function
\begin{equation}\label{eq:Fk_expression}
F_k(z) = \mathbf{K}_m(z)^\top (K_m + \lambda^2 \mathds{1}_m)^{-1} (\mathbf{y}_{k-1} + \boldsymbol{\epsilon}_m).
\end{equation}
We stress that $F_k(z)$ is a random variable depending on both the randomly drawn training set $\{z_i\}_{i=1}^m$ and the noise realization $\boldsymbol{\epsilon}_m$. Consequently, the prediction error $g_{k-1}(z) - F_k(z)$ is also a random variable depending on these same sources of randomness. A direct computation yields the exact decomposition
\begin{equation}\label{eq:det-noise}
g_{k-1}(z) - F_k(z)
= \underbrace{g_{k-1}(z) - \mathbf{K}_m(z)^\top (K_m + \lambda^2 \mathds{1}_m)^{-1} \mathbf{y}_{k-1}}_{\epsilon_{\mathrm{det}}(z;\{z_i\}):\ \text{bias error}}
- \underbrace{\mathbf{K}_m(z)^\top (K_m + \lambda^2 \mathds{1}_m)^{-1} \boldsymbol{\epsilon}_m}_{\epsilon_{\mathrm{noise}}(z;\{z_i\}):\ \text{noise-induced error}}.
\end{equation}
We emphasize that both $\epsilon_{\mathrm{det}}(z;\{z_i\})$ and the predictive uncertainty $\sigma_m(z;\{z_i\})$ introduced below depend on the training set $\{z_i\}_{i=1}^m$ in addition to $z$. We suppress this dependence and write simply $\epsilon_{\mathrm{det}}(z)$ and $\sigma_m(z)$ when the training set is understood from context.

To bound the above terms it will be helpful to introduce $\sigma_m(z)$, the predictive uncertainty of the kernel ridge predictor using $m$ data samples, defined via
\begin{equation}
\sigma_m(z)=\sqrt{K(z,z)-\mathbf{K}_m(z)^\top (K_m + \lambda^2 \mathds{1}_m)^{-1}\mathbf{K}_m(z)}.
\end{equation}
With this in hand, the bias error $\epsilon_{\mathrm{det}}(z)$ is controlled via the following lemma, which holds for any fixed realization of the training set $\{z_i\}_{i=1}^m$ and any fixed $z\in\mathcal{Z}$ (no randomness is invoked in its proof):

\begin{lemma}[Bias error bound (fixed dataset)]
\label{lem:deterministic}
For any $z\in\mathcal{Z}$ and any fixed training set $\{z_i\}_{i=1}^m$,
\begin{equation}
\bigl|\epsilon_{\mathrm{det}}(z)\bigr|
\le \|g_{k-1}\|_{\mathcal{H}_K}\,\sigma_m(z).
\end{equation}
\end{lemma}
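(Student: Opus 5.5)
The plan is to use the reproducing property together with a Cauchy--Schwarz argument in $\mathcal{H}_K$. This is the standard route for kernel ridge regression and Gaussian process bias bounds. By the assumption of \Cref{sub:assumptions} (realizability of the clipped Bellman targets), $g_{k-1}\in\mathcal{H}_K$. Hence every evaluation can be written as an inner product: $g_{k-1}(z_i)=\langle g_{k-1},K(\cdot,z_i)\rangle_{\mathcal{H}_K}$ and $g_{k-1}(z)=\langle g_{k-1},K(\cdot,z)\rangle_{\mathcal{H}_K}$.

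Write $\alpha(z)\equiv (K_m+\lambda^2\mathds{1}_m)^{-1}\mathbf{K}_m(z)\in\mathbb{R}^m$. Since the Gram matrix is symmetric, the bias error becomes
\begin{equation}
\epsilon_{\mathrm{det}}(z) = g_{k-1}(z)-\sum_{i=1}^m \alpha_i(z)\,g_{k-1}(z_i) = \Bigl\langle g_{k-1},\; K(\cdot,z)-\sum_{i=1}^m\alpha_i(z)K(\cdot,z_i)\Bigr\rangle_{\mathcal{H}_K}.
\end{equation}
Cauchy--Schwarz then gives $|\epsilon_{\mathrm{det}}(z)|\le \|g_{k-1}\|_{\mathcal{H}_K}\,\|h_z\|_{\mathcal{H}_K}$, where $h_z\equiv K(\cdot,z)-\sum_i\alpha_i(z)K(\cdot,z_i)$. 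It remains to show $\|h_z\|_{\mathcal{H}_K}\le\sigma_m(z)$.

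Expand the norm with the reproducing property. Abbreviating $\mathbf{k}=\mathbf{K}_m(z)$ and $A=K_m+\lambda^2\mathds{1}_m$,
\begin{equation}
\|h_z\|^2_{\mathcal{H}_K} = K(z,z) - 2\,\mathbf{k}^\top A^{-1}\mathbf{k} + \mathbf{k}^\top A^{-1}K_m A^{-1}\mathbf{k}.
\end{equation}
Because $K_m = A-\lambda^2\mathds{1}_m$, the last term equals $\mathbf{k}^\top A^{-1}\mathbf{k}-\lambda^2\mathbf{k}^\top A^{-2}\mathbf{k}$. Therefore
\begin{equation}
\|h_z\|^2_{\mathcal{H}_K} = \sigma_m^2(z) - \lambda^2\|A^{-1}\mathbf{k}\|_2^2 \le \sigma_m^2(z).
\end{equation}
This holds for every fixed training set and every $z$, so the lemma follows.

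The only delicate step is this algebraic identity for $\|h_z\|^2$. The point to check is that the regularization contributes a nonpositive correction, so the inequality goes the right way. No probabilistic argument is needed, consistent with the lemma's statement.

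An alternative is the variational characterization $\sigma_m^2(z)=\sup_{\|f\|_{\mathcal{H}_K}\le 1}\bigl(f(z)-\mathbf{k}^\top A^{-1}\mathbf{f}\bigr)^2+\lambda^2\|A^{-1}\mathbf{k}\|^2$. However, the direct computation above is shorter and is the one I would present.
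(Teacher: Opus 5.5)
Your proof is correct, and it differs from the paper's only in how $\|h_z\|_{\mathcal{H}_K}$ is bounded. The first half matches the paper exactly: realizability of $g_{k-1}$, the reproducing property, and Cauchy--Schwarz against $h_z=K(\cdot,z)-\sum_i\xi_iK(\cdot,z_i)$, where your $\alpha(z)$ is the paper's $\vec{\xi}_m$.

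The paper bounds the norm in two steps:
\begin{itemize}
\item Lemma~\ref{lem:norm_as_sup} shows that $\|h_z\|_{\mathcal{H}_K}^2=\sup_{\|g\|_{\mathcal{H}_K}\le 1}\bigl(g(z)-\vec{\xi}_m^\top\mathbf{g}_m\bigr)^2$.
\item It then imports from Vakili et al.\ (Lemma~\ref{lemma:vakili}) the identity $\sigma_m^2(z)=\sup_{\|g\|_{\mathcal{H}_K}\le 1}\bigl(g(z)-\vec{\xi}_m^\top\mathbf{g}_m\bigr)^2+\lambda^2\|\vec{\xi}_m\|_2^2$, and drops the nonnegative second term.
\end{itemize}

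You instead expand $\|h_z\|^2=K(z,z)-2\mathbf{k}^\top A^{-1}\mathbf{k}+\mathbf{k}^\top A^{-1}K_mA^{-1}\mathbf{k}$ and substitute $K_m=A-\lambda^2\mathds{1}_m$. This gives the exact identity $\|h_z\|_{\mathcal{H}_K}^2+\lambda^2\|A^{-1}\mathbf{k}\|_2^2=\sigma_m^2(z)$, which is precisely what the paper's two cited results combine to say. In effect you prove the imported proposition in two lines rather than invoking it.

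Your route is shorter and self-contained. The same identity also gives $\lambda^2\|\vec{\xi}_m\|_2^2\le\sigma_m^2(z)$, which is the bound the paper needs again in Lemma~\ref{lem:noise}. The paper's route offers the interpretation of $\sigma_m(z)$ as a worst-case prediction error over the RKHS unit ball, at the cost of an external citation and an auxiliary lemma.
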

\begin{proof}
Define the coefficient vector
\begin{equation}
\vec{\xi}_m = (K_m + \lambda^2 \mathds{1}_m)^{-1} \mathbf{K}_m(z) \in \mathbb{R}^m,
\end{equation}
whose components are $\xi_i = [\vec{\xi}_m]_i$ for $i=1,\dots,m$. The noise-free kernel ridge predictor evaluated at $z$ can then be written in terms of $\vec{\xi}_m$ via
\begin{equation}
\mathbf{K}_m(z)^\top (K_m + \lambda^2 \mathds{1}_m)^{-1} \mathbf{y}_{k-1} = \sum_{i=1}^m \xi_i \, g_{k-1}(z_i).
\end{equation}
Using Assumption~\ref{as:func_classes}, we have $g_{k-1}=T\hat{Q}_{k-1}\in \mathcal{H}_K$. Up to this point the analysis did not rely on this assumption; however, it is needed here to ensure that the Bellman target belongs to the RKHS, which allows us to express it through the reproducing property. In particular, for every $i$,
\begin{equation}
\sum_{i=1}^m \xi_i \, g_{k-1}(z_i) = \sum_{i=1}^m \xi_i \langle g_{k-1}, k(\cdot,z_i) \rangle_{\mathcal{H}_K}
= \Big\langle g_{k-1}, \sum_{i=1}^m \xi_i k(\cdot,z_i) \Big\rangle_{\mathcal{H}_K}.
\end{equation}
The analysis can proceed without this assumption, but doing so requires handling the case in which the Bellman target does not lie in the hypothesis class, which introduces an additional approximation error term and significantly complicates the argument. We discuss this more general setting in \Cref{ap:relaxing_closure}.

Therefore, using the reproducing property again, the bias error admits the inner-product representation
\begin{align}
\epsilon_{\mathrm{det}}(z)
&= g_{k-1}(z) - \mathbf{K}_m(z)^\top (K_m + \lambda^2 \mathds{1}_m)^{-1} \mathbf{y}_{k-1} \nonumber\\
&= \langle g_{k-1}, k(\cdot,z) \rangle_{\mathcal{H}_K} - \Big\langle g_{k-1}, \sum_{i=1}^m \xi_i k(\cdot,z_i) \Big\rangle_{\mathcal{H}_K} \nonumber\\
&= \Big\langle g_{k-1}, k(\cdot,z) - \sum_{i=1}^m \xi_i k(\cdot,z_i) \Big\rangle_{\mathcal{H}_K}.\nonumber
\end{align}
Applying the Cauchy--Schwarz inequality in the RKHS immediately yields
\begin{equation}\label{eq:det_CS}
\bigl|\epsilon_{\mathrm{det}}(z)\bigr|
\le \|g_{k-1}\|_{\mathcal{H}_K} \Big\| k(\cdot,z) - \sum_{i=1}^m \xi_i k(\cdot,z_i) \Big\|_{\mathcal{H}_K}.
\end{equation}
To obtain the lemma statement, we now need to upper bound $\| k(\cdot,z) - \sum_{i=1}^m \xi_i k(\cdot,z_i) \|_{\mathcal{H}_K}$ by $\sigma_m(z)$. To this end, we use the following technical lemma:
\begin{lemma}\label{lem:norm_as_sup}
Let $\mathbf{g}_m = (g(z_1),\dots,g(z_m))^\top$ for any $g\in\mathcal{H}_K$. Then
\begin{equation}
\Big\| k(\cdot,z) - \sum_{i=1}^m \xi_i k(\cdot, z_i) \Big\|_{\mathcal{H}_K}^2 = \sup_{\|g\|_{\mathcal{H}_K} \leq 1} \Big( g(z) - \vec{\xi}_m^\top \mathbf{g}_m \Big)^2.
\end{equation}
\end{lemma}
\begin{proof}
Define
\begin{equation}
v \equiv k(\cdot,z) - \sum_{i=1}^m \xi_i k(\cdot, z_i).
\end{equation}
Importantly, note that $v\in \mathcal{H}_K.$ As such, by using the reproducing property of $\mathcal{H}_K$ one has
\begin{align}
\langle g, v\rangle_{\mathcal{H}_K} &= \langle g, k(\cdot, z)\rangle_{\mathcal{H}_K} -\sum_{i=1}^m\xi_i\langle g, k(\cdot, z_i)\rangle_{\mathcal{H}_K} \nonumber\\
&= g(z) - \sum_{i=1}^m\xi_i g(z_i)\nonumber\\
&= g(z) -  \vec{\xi}_m^\top \mathbf{g}_m, \nonumber
\end{align}
which implies
\begin{equation*}
\sup_{\|g\|_{\mathcal{H}_K} \leq 1} \Big( g(z) - \vec{\xi}_m^\top \mathbf{g}_m \Big)^2 = \sup_{\|g\|_{\mathcal{H}_K} \leq 1} (\langle g, v\rangle_{\mathcal{H}_K})^2.
\end{equation*}
Using the Cauchy--Schwarz inequality we have
\begin{equation}
\bigl| \langle g, v \rangle_{\mathcal{H}_K} \bigr| \leq \|g\|_{\mathcal{H}_K} \cdot \|v\|_{\mathcal{H}_K} ,
\label{eq:CS1}
\end{equation}
which implies
\begin{align}
\sup_{\|g\|_{\mathcal{H}_K} \leq 1} (\langle g, v\rangle_{\mathcal{H}_K})^2 &\leq \sup_{\|g\|_{\mathcal{H}_K} \leq 1} \|g\|^2_{\mathcal{H}_K} \cdot \|v\|^2_{\mathcal{H}_K} \label{eq:CS2}\\
&\leq \|v\|^2_{\mathcal{H}_K}. \nonumber
\end{align}
To finish the proof, we show that this upper bound is \textit{tight}. The inequality in Eq.~\eqref{eq:CS1} becomes equality when $g$ is parallel to $v$, and the supremum in Eq.~\eqref{eq:CS2} is attained when $\|g\|_{\mathcal{H}_K} = 1$. Both conditions are simultaneously satisfied by choosing
\begin{equation}
g^* = \frac{v}{\|v\|_{\mathcal{H}_K}} \quad \text{(assuming $v \neq 0$; the case $v=0$ is trivial)},
\end{equation}
which has $\|g^*\|_{\mathcal{H}_K} = 1$ and is exactly in the direction of $v$. Substituting gives
\begin{equation}
\bigl( \langle g^*, v \rangle_{\mathcal{H}_K} \bigr)^2 = \|v\|_{\mathcal{H}_K}^2,
\end{equation}
so the supremum equals $\|v\|_{\mathcal{H}_K}^2 = \| k(\cdot,z) - \sum_{i=1}^m \xi_i k(\cdot, z_i) \|_{\mathcal{H}_K}^2$.
\end{proof}
Now, with this in hand, we use the following characterization of $\sigma_m(z)$ from Ref.~\cite{Vakili2021a}:

\begin{lemma}[Uncertainty of the kernel ridge predictor (Proposition 1 in \cite{Vakili2021a})]
\label{lemma:vakili}
Let $\vec{\xi}_m = (K_m + \lambda^2 \mathds{1}_m)^{-1} \mathbf{K}_m(z) \in \mathbb{R}^m$ and $\mathbf{g}_m = (g(z_1),\dots,g(z_m))^\top$ for any $g\in\mathcal{H}_K$ as defined before. Then
\begin{equation}
\sigma_m^2(z) = \sup_{\|g\|_{\mathcal{H}_K}\le 1} \Bigl(g(z) - \vec{\xi}_m^\top \mathbf{g}_m\Bigr)^2 + \lambda^2 \|\vec{\xi}_m\|_2^2.
\end{equation}
\end{lemma}
In particular, using Lemma~\ref{lem:norm_as_sup} and noting that $\lambda^2 \|\vec{\xi}_m\|_2^2 \geq 0$, it follows that
\begin{equation}
\Big\| k(\cdot,z) - \sum_{i=1}^m \xi_i k(\cdot, z_i) \Big\|_{\mathcal{H}_K}^2 \leq \sigma_m^2(z).
\end{equation}
Substituting this inequality back into \eqref{eq:det_CS} gives the desired bound
\begin{equation}
\bigl|\epsilon_{\mathrm{det}}(z)\bigr|
\le \|g_{k-1}\|_{\mathcal{H}_K} \, \sigma_m(z).
\end{equation}
This completes the proof.
\end{proof}

The noise-induced error $\epsilon_\mathrm{noise}$ is then handled by a standard concentration result for sub-Gaussian variables. As with Lemma~\ref{lem:deterministic}, this statement is conditional on a fixed (but arbitrary) training set $\{z_i\}_{i=1}^m$: the probability below is taken only over the noise $\boldsymbol{\epsilon}_m$.

\begin{lemma}[Noise-induced error (fixed dataset)]
\label{lem:noise}
Under Observation~\ref{obs:subgaussian-noise}, namely that $\epsilon_1,\dots,\epsilon_m$ are independent $\tau^2$-sub-Gaussian random variables, for any fixed $z\in\mathcal{Z}$ and any fixed training set $\{z_i\}_{i=1}^m$, with probability at least $1-\delta_1$ over the draw of $\boldsymbol{\epsilon}_m$,
\begin{equation}
\bigl|\epsilon_{\mathrm{noise}}(z)\bigr|
\le \frac{\tau}{\lambda} \sigma_m(z) \sqrt{2\log\frac{2}{\delta_1}}.
\end{equation}
\end{lemma}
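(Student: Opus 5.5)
Since the training set $\{z_i\}_{i=1}^m$ and the point $z$ are fixed, $\epsilon_{\mathrm{noise}}(z)$ is a fixed linear combination of the independent noise variables. We therefore write
\begin{equation*}
\epsilon_{\mathrm{noise}}(z) = \vec{\xi}_m^\top \boldsymbol{\epsilon}_m = \sum_{i=1}^m \xi_i \epsilon_i,
\qquad
\vec{\xi}_m = (K_m + \lambda^2 \mathds{1}_m)^{-1}\mathbf{K}_m(z),
\end{equation*}
which uses the symmetry of $(K_m+\lambda^2\mathds{1}_m)^{-1}$. The plan is to show that this sum is sub-Gaussian with variance proxy $\tau^2\|\vec{\xi}_m\|_2^2$, apply the two-sided Chernoff bound, and then control $\|\vec{\xi}_m\|_2$ by the predictive uncertainty $\sigma_m(z)$.

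\textbf{Step 1 (sub-Gaussianity of the weighted sum).} By Observation~\ref{obs:subgaussian-noise}, the $\epsilon_i$ are conditionally independent and centered, with $\mathbb{E}[e^{t\epsilon_i}\mid z_i]\le e^{\tau^2t^2/2}$. Conditional independence lets the moment generating function of the sum factorize. For every $t\in\mathbb{R}$ this gives
\begin{equation*}
\mathbb{E}\bigl[e^{t\sum_i \xi_i\epsilon_i}\bigr]
= \prod_i \mathbb{E}\bigl[e^{(t\xi_i)\epsilon_i}\bigr]
\le \exp\!\Bigl(\tfrac{\tau^2 t^2\|\vec{\xi}_m\|_2^2}{2}\Bigr).
\end{equation*}

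\textbf{Step 2 (tail bound).} We apply Markov's inequality to $e^{t\,\epsilon_{\mathrm{noise}}}$, optimize over $t$, and do the same for the lower tail. A union bound over the two tails gives
\begin{equation*}
\Pr\bigl(|\epsilon_{\mathrm{noise}}(z)|>u\bigr)\le 2\exp\!\Bigl(-\frac{u^2}{2\tau^2\|\vec{\xi}_m\|_2^2}\Bigr).
\end{equation*}
Setting the right-hand side equal to $\delta_1$ shows that, with probability at least $1-\delta_1$,
\begin{equation*}
|\epsilon_{\mathrm{noise}}(z)|\le \tau\|\vec{\xi}_m\|_2\sqrt{2\log(2/\delta_1)}.
\end{equation*}
If $\vec{\xi}_m=0$, the noise term vanishes and the claim holds trivially.

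\textbf{Step 3 (relating $\|\vec{\xi}_m\|_2$ to $\sigma_m(z)$).} This is the only step with real content. Lemma~\ref{lemma:vakili} writes $\sigma_m^2(z)$ as a supremum term plus $\lambda^2\|\vec{\xi}_m\|_2^2$. The supremum term is nonnegative, being a supremum of squares, and by Lemma~\ref{lem:norm_as_sup} it equals a squared RKHS norm. Hence
\begin{equation*}
\lambda^2\|\vec{\xi}_m\|_2^2\le\sigma_m^2(z),
\qquad\text{so}\qquad
\|\vec{\xi}_m\|_2\le \sigma_m(z)/\lambda.
\end{equation*}
Substituting this into the bound from Step 2 gives exactly the stated inequality
\begin{equation*}
|\epsilon_{\mathrm{noise}}(z)|\le \frac{\tau}{\lambda}\,\sigma_m(z)\sqrt{2\log\frac{2}{\delta_1}}.
\end{equation*}

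The main point requiring care is the conditioning in Step 1. The noise must be independent of the fixed design, so that $\vec{\xi}_m$ can be treated as deterministic. This holds because each round uses a fresh batch, conditionally on $\mathcal{F}_{k-1}$, and the noise at that round is independent given the $z_i$.
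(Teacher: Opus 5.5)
Your proposal is correct and follows essentially the same route as the paper: write $\epsilon_{\mathrm{noise}}(z)=\vec{\xi}_m^\top\boldsymbol{\epsilon}_m$, show it is sub-Gaussian with variance proxy $\tau^2\|\vec{\xi}_m\|_2^2$, apply the two-sided Chernoff bound, and use the decomposition in Lemma~\ref{lemma:vakili} to get $\lambda^2\|\vec{\xi}_m\|_2^2\le\sigma_m^2(z)$. Your appeal to Lemma~\ref{lem:norm_as_sup} in Step 3 is unnecessary, since nonnegativity of the supremum of squares already suffices; your explicit MGF factorization and your remarks on conditioning on $\mathcal{F}_{k-1}$ only spell out what the paper states in one line.
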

\begin{proof}
The noise-induced error at a fixed point $z\in\mathcal{Z}$ is the linear functional
\begin{equation}
\epsilon_{\mathrm{noise}}(z)
= \mathbf{K}_m(z)^\top (K_m + \lambda^2 \mathds{1}_m)^{-1} \boldsymbol{\epsilon}_m
= \vec{\xi}_m^\top \boldsymbol{\epsilon}_m = \sum_{i=1}^m \xi_i \epsilon_i,
\end{equation}
where $\vec{\xi}_m = (K_m + \lambda^2 \mathds{1}_m)^{-1} \mathbf{K}_m(z) \in \mathbb{R}^m$ is the same coefficient vector introduced in the proof of Lemma~\ref{lem:deterministic}.
Since the noise variables $\epsilon_1,\dots,\epsilon_m$ are independent and $\tau^2$-sub-Gaussian, $\epsilon_{\mathrm{noise}}(z)$ is itself sub-Gaussian with parameter $\tau^2 \|\vec{\xi}_m\|_2^2$. The Chernoff--Hoeffding tail bound for a zero-mean sub-Gaussian random variable \cite{subgaussian} then reads
\begin{equation}
\Pr\bigl[ |\epsilon_{\mathrm{noise}}(z)| \ge t \bigr]
\le 2 \exp\left( -\frac{t^2}{2 \tau^2 \|\vec{\xi}_m\|_2^2} \right)
\end{equation}
for all $t>0$. In order to have the right-hand side at most $\delta_1$, it is sufficient to take
\begin{equation}
t \geq \tau \|\vec{\xi}_m\|_2 \sqrt{2 \log \frac{2}{\delta_1}}.
\end{equation}
Hence, with probability at least $1-\delta_1$,
\begin{equation}
|\epsilon_{\mathrm{noise}}(z)|
\le \tau \|\vec{\xi}_m\|_2 \sqrt{2 \log \frac{2}{\delta_1}}.
\end{equation}
To prove the statement of the lemma, it remains to relate $\|\vec{\xi}_m\|_2$ to the predictive uncertainty $\sigma_m(z)$. From the characterization of $\sigma_m(z)$ in \Cref{lemma:vakili}, we have the decomposition
\begin{equation}
\sigma_m^2(z) = \sup_{\|g\|_{\mathcal{H}_K}\le 1} \bigl( g(z) - \vec{\xi}_m^\top \mathbf{g}_m \bigr)^2 + \lambda^2 \|\vec{\xi}_m\|_2^2.
\end{equation}
As the first term on the right-hand side is non-negative, this implies the elementary inequality
\begin{equation}
\lambda^2 \|\vec{\xi}_m\|_2^2 \le \sigma_m^2(z),
\qquad\text{or equivalently}\qquad
\|\vec{\xi}_m\|_2 \le \frac{\sigma_m(z)}{\lambda}.
\end{equation}
Substituting this bound into the concentration inequality finally yields
\begin{equation}
|\epsilon_{\mathrm{noise}}(z)|
\le \frac{\tau}{\lambda} \sigma_m(z) \sqrt{2\log\frac{2}{\delta_1}},
\end{equation}
which is the claimed statement.
\end{proof}

Now that we have an upper bound on both $|\epsilon_{\mathrm{det}}(z)|$ and $|\epsilon_{\mathrm{noise}}(z)|$, we can upper bound the right-hand side of Eq.~\eqref{eq:det-noise}. Applying the triangle inequality, and using both Lemma~\ref{lem:deterministic} and Lemma~\ref{lem:noise}, gives, for each fixed $z\in\mathcal{Z}$ and each fixed training set $\{z_i\}_{i=1}^m$,
\begin{equation}\label{eq:pointwise_bound_main}
\Pr_{\boldsymbol{\epsilon}_m}\!\left[
|g_{k-1}(z) - F_k(z)|
\le \sigma_m(z) \left( \|g_{k-1}\|_{\mathcal{H}_K} + \frac{\tau}{\lambda} \sqrt{2\log\frac{2}{\delta_1}} \right)
\right] \ge 1-\delta_1.
\end{equation}
Since $\hat{Q}_k=\Pi_{[-V_{\max},V_{\max}]}F_k$ and $g_{k-1}(z)\in[-V_{\max},V_{\max}]$, projection onto this interval cannot increase pointwise distance to any point already inside the interval, so
\begin{equation}
|g_{k-1}(z)-\hat{Q}_k(z)| \le |g_{k-1}(z)-F_k(z)|.
\end{equation}
The bound in Eq.~\eqref{eq:pointwise_bound_main} therefore transfers directly from the unclipped RKHS predictor $F_k$ to the clipped iterate $\hat{Q}_k$, and we use this fact silently from here on.

To go from the pointwise bound~\eqref{eq:pointwise_bound_main} to a bound on the integrated error $\mathbb{E}_{z\sim U(\mathcal{Z})}[|g_{k-1}(z)-\hat{Q}_k(z)|^2]$ (still for a fixed training set), we first establish two auxiliary results on Lipschitz continuity and discretization. The key challenge is that the pointwise bound holds only at individual points $z$, while the integrated error requires control over the full continuous state-action space. We address this by introducing a finite covering net and extending the bound to all points via Lipschitz continuity.

\begin{lemma}[Lipschitz continuity]\label{lemm:lipschitz}
For normalized kernels of the form
\begin{equation}
K_{(\mathcal{D},w)}(x,x') = \frac{1}{\|w\|_2^2} \sum_{i=0}^M w_i^2 \cos\bigl(\langle \omega_i, x - x'\rangle\bigr),
\end{equation}
every function $f \in \mathcal{H}_K$ (with arbitrary RKHS norm $\|f\|_{\mathcal{H}_K}$) is Lipschitz continuous with constant at most $L_{\mathcal{D}} \|f\|_{\mathcal{H}_K}$, where
\begin{equation}
L_{\mathcal{D}} \equiv \max_{0 \le i \le M} \|\omega_i\|_2.
\end{equation}
\end{lemma}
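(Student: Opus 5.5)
The plan is to use the explicit feature-map representation of $\mathcal{H}_K$ together with Cauchy--Schwarz in the RKHS. By Eq.~\eqref{eq:feature-map-form}, $K_{(\mathcal{D},w)}(x,x')=\langle\phi_{(\mathcal{D},w)}(x),\phi_{(\mathcal{D},w)}(x')\rangle$. Consequently every $f\in\mathcal{H}_K$ has the form $f(x)=\langle v,\phi_{(\mathcal{D},w)}(x)\rangle$ with $\|f\|_{\mathcal{H}_K}=\|v\|_2$; we take $v$ to be the minimal-norm representative, which is automatic here since the features are linearly independent when all $w_i\neq 0$ and distinct frequencies are used. Equivalently, by the reproducing property,
\begin{equation*}
|f(x)-f(x')| = |\langle f, K(\cdot,x)-K(\cdot,x')\rangle_{\mathcal{H}_K}| \le \|f\|_{\mathcal{H}_K}\,\|K(\cdot,x)-K(\cdot,x')\|_{\mathcal{H}_K}.
\end{equation*}
It therefore suffices to show that $\|K(\cdot,x)-K(\cdot,x')\|_{\mathcal{H}_K}\le L_{\mathcal{D}}\|x-x'\|_2$.

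To bound this, I expand the squared norm using the reproducing property and shift invariance:
\begin{equation*}
\|K(\cdot,x)-K(\cdot,x')\|_{\mathcal{H}_K}^2 = K(x,x)+K(x',x')-2K(x,x') = \frac{2}{\|w\|_2^2}\sum_{i=0}^M w_i^2\bigl(1-\cos\langle\omega_i,x-x'\rangle\bigr).
\end{equation*}
The $i=0$ term vanishes, since $\omega_0=0$. For the remaining terms I use $1-\cos t = 2\sin^2(t/2)\le t^2/2$, which gives $2(1-\cos\langle\omega_i,x-x'\rangle)\le \langle\omega_i,x-x'\rangle^2\le \|\omega_i\|_2^2\|x-x'\|_2^2$, where the last step is Cauchy--Schwarz in $\mathbb{R}^d$. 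Summing over $i$ then yields
\begin{equation*}
\|K(\cdot,x)-K(\cdot,x')\|_{\mathcal{H}_K}^2 \le \frac{1}{\|w\|_2^2}\sum_i w_i^2\|\omega_i\|_2^2\,\|x-x'\|_2^2 \le L_{\mathcal{D}}^2\|x-x'\|_2^2.
\end{equation*}
This last inequality uses the normalization $\sum_i w_i^2/\|w\|_2^2=1$ and the definition $L_{\mathcal{D}}=\max_i\|\omega_i\|_2$. Taking square roots and combining with the first display proves the claim.

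The argument is short, and I expect no serious obstacle. The only point requiring care is the identification of $\|f\|_{\mathcal{H}_K}$ for a general $f$. The reproducing-kernel route above sidesteps this entirely, because it only uses $f\in\mathcal{H}_K$ and the reproducing property; no explicit coefficient representation is needed. One could alternatively differentiate $f$ in the feature form and bound $\|\nabla f\|_2$ via Cauchy--Schwarz, but the kernel-difference route is cleaner and avoids questions of non-uniqueness of feature coefficients.
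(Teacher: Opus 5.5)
Your proof is correct and follows essentially the same route as the paper. Both use the reproducing property with Cauchy--Schwarz in $\mathcal{H}_K$, then the identity $\|K(\cdot,x)-K(\cdot,x')\|_{\mathcal{H}_K}^2=2(1-K(x,x'))$ from normalization, then $1-\cos t\le t^2/2$ and Cauchy--Schwarz in $\mathbb{R}^d$ together with $\sum_i w_i^2/\|w\|_2^2=1$. Your aside that linear independence of the features is ``automatic'' is not needed, and it can actually fail on $\mathcal{Z}$ (e.g.\ $\sin(\pi a)$ vanishes for $a\in\{0,1\}$); your argument never relies on it.
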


\begin{proof}
Let $\nu_i = w_i^2 / \|w\|_2^2$ (so $\sum_i \nu_i = 1$). For any $f \in \mathcal{H}_K$ the reproducing property yields
\begin{equation}
|f(z) - f(z')| \le |\langle f,K(\cdot,z)-K(\cdot,z')\rangle|\leq \|f\|_{\mathcal{H}_K}\|K(\cdot,z)-K(\cdot,z')\|_{\mathcal{H}_K},
\end{equation}
where we used Cauchy--Schwarz. Expanding
\begin{equation}
\|K(\cdot,z)-K(\cdot,z')\|_{\mathcal{H}_K}^2 = K(z,z) + K(z',z') - 2K(z,z')
\end{equation}
and using that the kernel is normalized yields
\begin{equation}
|f(z) - f(z')| \leq \|f\|_{\mathcal{H}_K} \sqrt{2\bigl(1 - K(z,z')\bigr)}.
\end{equation}
Using the elementary inequality $1 - \cos\theta \le \theta^2/2$ and the fact that $\sum_i \nu_i=1$, we obtain
\begin{align}
1 - K(z,z') &= \sum_{i=0}^M \nu_i (1 - \cos\langle\omega_i,z-z'\rangle) \nonumber\\
&\le \frac{1}{2} \sum_{i=0}^M \nu_i \langle\omega_i,z-z'\rangle^2 \nonumber\\
&\le\frac{1}{2}\sum_{i=0}^M \nu_i\|\omega_i\|_2^2\|z-z'\|_2^2 \nonumber\\
&\le\frac{1}{2} \|z-z'\|_2^2 L_{\mathcal{D}}^2, \nonumber
\end{align}
where the last step uses $\sum_i \nu_i = 1$. Hence
\begin{equation}
\sqrt{2(1-K(z,z'))} \le L_{\mathcal{D}} \|z-z'\|_2,
\end{equation}
which proves the Lipschitz claim.
\end{proof}

Since the state space $\mathcal{S}$ is a continuous subset of $\mathbb{R}^{d_{\mathcal{S}}}$, we cannot directly apply a union bound over all $z\in\mathcal{Z}$. Instead, we introduce a finite covering net (discretization) of $\mathcal{Z}$ and use Lipschitz continuity to control the error at points between grid nodes.

\begin{lemma}[Discretization property]\label{lem:discretization}
Let $f,g\in\mathcal{H}_K$ with $C_f \equiv \|f\|_{\mathcal{H}_K}$ and $C_g\equiv\|g\|_{\mathcal{H}_K}$. Assume that the state space $\mathcal{S}\subset\mathbb{R}^{d_{\mathcal{S}}}$ is compact with Euclidean diameter $D=\sup_{s,s'\in\mathcal{S}}\|s-s'\|_2$. For any $\varepsilon>0$ there exists a finite discretization $\mathcal{Z}_\varepsilon\subseteq\mathcal{Z}$ (where $\mathcal{Z}=\mathcal{S}\times\mathcal{A}$) such that, for any $z\in\mathcal{Z}$,
\begin{equation}
|f(z) - f([z])| \le \varepsilon, \qquad |g(z) - g([z])| \le \varepsilon,
\end{equation}
where $[z]$ is the closest point in $\mathcal{Z}_\varepsilon$ to $z$ (with respect to the Euclidean norm), and
\begin{equation}
|\mathcal{Z}_\varepsilon| \le |\mathcal{A}|\Bigl( D L_{\mathcal{D}} \max(C_f, C_g) / \varepsilon + 1 \Bigr)^{d_{\mathcal{S}}}.
\end{equation}
The constant $L_{\mathcal{D}}$ is the Lipschitz constant from the previous lemma.
\end{lemma}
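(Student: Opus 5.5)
The plan is a standard grid-covering argument built on Lemma~\ref{lemm:lipschitz}. Since $\mathcal{Z}=\mathcal{S}\times\mathcal{A}$ with $\mathcal{A}$ finite, we discretize only the continuous factor $\mathcal{S}$ and take $\mathcal{Z}_\varepsilon = \mathcal{S}_\eta\times\mathcal{A}$. Here $\mathcal{S}_\eta\subseteq\mathcal{S}$ is an $\eta$-net of $\mathcal{S}$ in the Euclidean norm, with radius
\begin{equation}
\eta = \frac{\varepsilon}{L_{\mathcal{D}}\max(C_f,C_g)}.
\end{equation}
For $z=(s,a)$, the nearest point $[z]$ then shares the action component $a$, because every other grid point with the same action is at least as close as any point with a different action. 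Hence $\|z-[z]\|_2 = \|s-[s]\|_2\le\eta$. Lemma~\ref{lemm:lipschitz} then gives
\begin{equation}
|f(z)-f([z])|\le L_{\mathcal{D}}C_f\,\eta\le\varepsilon,
\end{equation}
and likewise for $g$. A small care point: if $[z]$ is chosen as the closest point of the whole grid, it could have a different action only if some grid point with a different action is closer. That is impossible, since differing actions in $\{0,1\}^{d_\mathcal{A}}$ add at least $1$ to the squared distance, while a same-action point lies within $\eta$. This argument assumes $\eta\le 1$. If $\eta>1$ we may simply shrink $\eta$ to $1$, which only helps the bound; alternatively, we can define $[z]$ as the closest grid point with the same action.

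The remaining work is the cardinality bound $|\mathcal{S}_\eta|\le (D/\eta+1)^{d_\mathcal{S}}$. The natural construction is the axis-aligned grid: $\mathcal{S}$ has diameter $D$, so it lies in a cube of side $D$. Placing grid points with spacing $h$ along each axis uses at most $\lceil D/h\rceil+1$ points per axis and gives covering radius $\sqrt{d_\mathcal{S}}\,h/2$.

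This step, getting exactly the constant in the statement, is the main obstacle. Matching the radius to $\eta$ naively produces an extra factor of $\sqrt{d_\mathcal{S}}$. I will first try absorbing it as in Lemma~\ref{lem:sufficient_m_K}, where the bound on $|\mathcal{Z}_m|$ indeed carries a factor $\sqrt{d_\mathcal{S}}$; since $\mathcal{S}\subseteq[0,1]^{d_\mathcal{S}}$, we have $D\le\sqrt{d_\mathcal{S}}$, and this is consistent with that usage.

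Alternatively, we can use a maximal $\eta$-separated set with a volumetric argument, or cover the cube $[0,1]^{d_\mathcal{S}}$ directly by cells of side $\eta/\sqrt{d_\mathcal{S}}$. The latter gives $(\sqrt{d_\mathcal{S}}/\eta+1)^{d_\mathcal{S}}$, which equals $(D/\eta+1)^{d_\mathcal{S}}$ with $D$ read as the diameter of the ambient cube. Grid points lying outside $\mathcal{S}$ are replaced by nearby points of $\mathcal{S}$ (at most doubling the radius, handled by halving $\eta$ if needed), or simply kept, since $f,g$ are defined on all of $\mathbb{R}^d$.

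Finally, multiplying the state-grid count by $|\mathcal{A}|$ and substituting $\eta$ yields
\begin{equation}
|\mathcal{Z}_\varepsilon|\le|\mathcal{A}|\bigl(DL_{\mathcal{D}}\max(C_f,C_g)/\varepsilon+1\bigr)^{d_\mathcal{S}}.
\end{equation}
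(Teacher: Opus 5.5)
Your construction is the paper's: an $\eta$-net $\mathcal{S}_\eta$ of $\mathcal{S}$ with $\eta=\varepsilon/(L_{\mathcal{D}}\max(C_f,C_g))$, then $\mathcal{Z}_\varepsilon=\mathcal{S}_\eta\times\mathcal{A}$, then Lemma~\ref{lemm:lipschitz}. The paper simply asserts $|\mathcal{S}_\eta|\le(D/\eta+1)^{d_{\mathcal{S}}}$ without justification. You are right that this is the only step with real content, but none of your candidate arguments proves it.

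\begin{itemize}
\item The axis-aligned grid over a bounding box of side $D$ puts a factor of order $\sqrt{d_{\mathcal{S}}}$ in front of $D/\eta$. It therefore exceeds the claimed count for large $d_{\mathcal{S}}$.
\item You cannot absorb that factor into Lemma~\ref{lem:sufficient_m_K}. The $\sqrt{d_{\mathcal{S}}}$ there is just the substitution $D\le\sqrt{d_{\mathcal{S}}}$ into the present lemma, so there is no slack. In any case, this would not prove the lemma as stated.
\item Cells of side $\eta/\sqrt{d_{\mathcal{S}}}$ covering $[0,1]^{d_{\mathcal{S}}}$ give $(\sqrt{d_{\mathcal{S}}}/\eta+1)^{d_{\mathcal{S}}}$. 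This is exactly what Lemma~\ref{lem:sufficient_m_K} ends up using, but it is strictly weaker than the statement whenever $D<\sqrt{d_{\mathcal{S}}}$.
\item The usual volumetric packing argument encloses the disjoint balls $B(p_i,\eta/2)$ in a ball of radius $D+\eta/2$. That only yields $(2D/\eta+1)^{d_{\mathcal{S}}}$.
\end{itemize}

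The missing ingredient is the isodiametric inequality: $\mathrm{vol}(E)\le\omega_{d_{\mathcal{S}}}(\mathrm{diam}\,E/2)^{d_{\mathcal{S}}}$ for bounded $E\subseteq\mathbb{R}^{d_{\mathcal{S}}}$.

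\begin{itemize}
\item Let $\{p_1,\dots,p_N\}\subseteq\mathcal{S}$ be a maximal set with pairwise distances $>\eta$. Maximality makes it an $\eta$-net contained in $\mathcal{S}$.
\item The balls $B(p_i,\eta/2)$ are disjoint, and their union has diameter at most $D+\eta$.
\item Hence $N\,\omega_{d_{\mathcal{S}}}(\eta/2)^{d_{\mathcal{S}}}\le\omega_{d_{\mathcal{S}}}\bigl((D+\eta)/2\bigr)^{d_{\mathcal{S}}}$, i.e.\ $N\le(D/\eta+1)^{d_{\mathcal{S}}}$ exactly.
\item The same bound shows every such separated set is finite, so a maximal one exists.
\end{itemize}

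Two smaller points.

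First, your worry that $[z]$ might carry a different action is unnecessary. The globally closest grid point satisfies $\|z-[z]\|_2\le\|z-(s',a)\|_2\le\eta$, where $s'\in\mathcal{S}_\eta$ is within $\eta$ of $s$. Lemma~\ref{lemm:lipschitz} applies to any pair of points.

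Second, several of your patches go in the wrong direction. Shrinking $\eta$ to $1$, or halving $\eta$, enlarges the net and breaks the stated count. Keeping grid points outside $\mathcal{S}$ violates $\mathcal{Z}_\varepsilon\subseteq\mathcal{Z}$. In the cell construction, instead pick one point of $\mathcal{S}$ in each cell of diameter $\eta$ that meets $\mathcal{S}$.
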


\begin{proof}
By the Lipschitz lemma, it suffices to ensure $\|z-[z]\|_2 \le \varepsilon_{\mathrm{grid}}$ for all $z\in\mathcal{Z}$, where
\begin{equation}
\varepsilon_{\mathrm{grid}} \equiv \frac{\varepsilon}{L_{\mathcal{D}} \max(C_f,C_g)}.
\end{equation}
Let $\mathcal{S}_\varepsilon$ be any $\varepsilon_{\mathrm{grid}}$-net of $\mathcal{S}$ (w.r.t.\ the Euclidean norm). Then
\begin{equation}
|\mathcal{S}_\varepsilon| \le \Bigl( \frac{D}{\varepsilon_{\mathrm{grid}}} + 1 \Bigr)^{d_{\mathcal{S}}} = \Bigl( D L_{\mathcal{D}} \max(C_f,C_g) / \varepsilon + 1 \Bigr)^{d_{\mathcal{S}}}.
\end{equation}
Define $\mathcal{Z}_\varepsilon \equiv \mathcal{S}_\varepsilon \times \mathcal{A}$. Since $|\mathcal{A}|$ is finite, we obtain the claimed cardinality bound. For the closest point we have $\|z-[z]\|_2\le\varepsilon_{\mathrm{grid}}$, and the desired approximation errors follow immediately from the Lipschitz property.
\end{proof}

Since $\mathcal{S}\subseteq[0,1]^{d_{\mathcal{S}}}$, the Euclidean diameter satisfies $D\le\sqrt{d_{\mathcal{S}}}$. We now apply \Cref{lem:discretization} with $\varepsilon=1/m$ to construct a covering net $\mathcal{Z}_m$. A subtlety arises here: for the union bound over grid points to be valid, the net must be fixed \emph{before} the noise $\boldsymbol{\epsilon}_m$ is drawn. If we were to use $C_g=\|F_k\|_{\mathcal{H}_K}$ as in the discretization lemma, the net would depend on the noise realization (since $F_k$ depends on $\boldsymbol{\epsilon}_m$), invalidating the union bound. We therefore replace $\max(C_f,C_g)$ by a deterministic upper bound. Since $F_k$ is the KRR minimizer of $\sum_i(y_i-F_k(z_i))^2+\lambda^2\|F_k\|_{\mathcal{H}_K}^2$, we have $\lambda^2\|F_k\|_{\mathcal{H}_K}^2\le\sum_i y_i^2\le m V_{\max}^2$, so $\|F_k\|_{\mathcal{H}_K}\le\sqrt{m}\,V_{\max}/\lambda$. Define
\begin{equation}
G_K \equiv \max_{1\le k\le K}\|T\hat{Q}_{k-1}\|_{\mathcal{H}_K},
\end{equation}
which is a deterministic quantity depending only on the MDP and the kernel (not on the noise), and set
\begin{equation}
B_{\mathrm{net}}(m) \equiv \max\!\left\{G_K,\; \frac{\sqrt{m}\,V_{\max}}{\lambda}\right\}.
\end{equation}
We then apply Lemma~\ref{lem:discretization} with the pair $f=g_{k-1}$, $g=F_k$ and the deterministic RKHS-norm envelope $B_{\mathrm{net}}(m)\ge\max(\|g_{k-1}\|_{\mathcal{H}_K},\|F_k\|_{\mathcal{H}_K})$, which ensures the net is fixed independently of the noise. The corresponding grid radius is
\begin{equation}
\varepsilon_{\mathrm{grid}} = \frac{1}{m\,L_{\mathcal{D}}\,B_{\mathrm{net}}(m)},
\end{equation}
and the net cardinality satisfies
\begin{equation}
|\mathcal{Z}_m| \;\le\; |\mathcal{A}|\bigl(\sqrt{d_{\mathcal{S}}}\,L_{\mathcal{D}}\,B_{\mathrm{net}}(m)\,m + 1\bigr)^{d_{\mathcal{S}}}.
\end{equation}
For every fixed $z'\in\mathcal{Z}_m$ and every fixed training set, the pointwise bound~\eqref{eq:pointwise_bound_main} gives
\begin{equation}\label{eq:pointwise_grid}
|g_{k-1}(z') - F_k(z')|
\le \sigma_m(z') \left( \|g_{k-1}\|_{\mathcal{H}_K} + \frac{\tau}{\lambda} \sqrt{2\log\frac{2}{\delta'}} \right)
\end{equation}
with probability at least $1-\delta'$ over the noise. Choosing $\delta'=\delta_1/|\mathcal{Z}_m|$ and applying a union bound over the finite set $\mathcal{Z}_m$, we obtain that, with probability at least $1-\delta_1$ over the noise (for every fixed training set),
\begin{equation}\label{eq:cond_uniform}
|g_{k-1}(z') - F_k(z')|
\le \sigma_m(z') \Bigl( \|g_{k-1}\|_{\mathcal{H}_K} + \frac{\tau}{\lambda} \sqrt{2\log\frac{2|\mathcal{Z}_m|}{\delta_1}} \Bigr)
\end{equation}
holds simultaneously for every $z'\in\mathcal{Z}_m$.

We now extend this bound from the grid $\mathcal{Z}_m$ to the full continuous space $\mathcal{Z}$. For any $z\in\mathcal{Z}$, let $[z]\in\mathcal{Z}_m$ denote its nearest grid point. By the triangle inequality,
\begin{equation}
|g_{k-1}(z) - F_k(z)|\leq|g_{k-1}(z)-g_{k-1}([z])|+|F_k(z)-F_k([z])|+|g_{k-1}([z])-F_k([z])|.
\end{equation}
By construction of the grid, the first two terms satisfy $|g_{k-1}(z)-g_{k-1}([z])|\le 1/m$ and $|F_k(z)-F_k([z])|\le 1/m$ (from Lemma~\ref{lem:discretization} with $\varepsilon=1/m$). The third term is bounded by~\eqref{eq:cond_uniform} evaluated at $[z]$. Combining, the following holds with probability at least $1-\delta_1$ over the noise, uniformly for all $z\in\mathcal{Z}$:
\begin{equation}\label{eq:uniform_all_z}
|g_{k-1}(z) - F_k(z)|\leq \sigma_m([z]) \Bigl( \|g_{k-1}\|_{\mathcal{H}_K} + \frac{\tau}{\lambda} \sqrt{2\log\frac{2|\mathcal{Z}_m|}{\delta_1}} \Bigr)+\frac{2}{m}.
\end{equation}

We now need to pass from $\sigma_m([z])$ to $\sigma_m(z)$ in order to compute the expectation over $z\sim U(\mathcal{Z})$. This requires controlling the regularity of the predictive uncertainty, which we establish in the following lemma.

\begin{lemma}[Lipschitz continuity of the predictive variance]\label{lem:lipschitz_sigma}
For any fixed training set $\{z_i\}_{i=1}^m$ and any $z,z'\in\mathcal{Z}$,
\begin{equation}
\bigl|\sigma_m^2(z)-\sigma_m^2(z')\bigr| \le 2L_{\mathcal{D}}\,\|z-z'\|_2.
\end{equation}
In particular, since $\sigma_m(z)\ge 0$ for all $z$, the identity $|a-b|\le\sqrt{|a^2-b^2|}$ for $a,b\ge 0$ implies
\begin{equation}
\bigl|\sigma_m(z)-\sigma_m(z')\bigr| \le \sqrt{2L_{\mathcal{D}}\,\|z-z'\|_2}.
\end{equation}
\end{lemma}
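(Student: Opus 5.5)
We want a Lipschitz-type bound on $\sigma_m^2(z)=K(z,z)-\mathbf{K}_m(z)^\top(K_m+\lambda^2\mathds{1}_m)^{-1}\mathbf{K}_m(z)$. Since the kernel is normalized, $K(z,z)=K(z',z')=1$, so the diagonal terms cancel. Hence
\begin{equation*}
\sigma_m^2(z)-\sigma_m^2(z') = \mathbf{K}_m(z')^\top A\,\mathbf{K}_m(z') - \mathbf{K}_m(z)^\top A\,\mathbf{K}_m(z),
\qquad A\equiv (K_m+\lambda^2\mathds{1}_m)^{-1}.
\end{equation*}
The plan is to rewrite this quadratic form as an RKHS quantity and then use the feature-space Lipschitz estimate already obtained in the proof of Lemma~\ref{lemm:lipschitz}.

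\textbf{Operator rewriting.} Let $\Phi$ denote the sampling operator with columns $\phi_{(\mathcal{D},w)}(z_i)$, so that $K_m=\Phi^\top\Phi$ and $\mathbf{K}_m(z)=\Phi^\top\phi(z)$. Define the symmetric operator $P\equiv\Phi A\Phi^\top=\Phi(\Phi^\top\Phi+\lambda^2\mathds{1})^{-1}\Phi^\top$. This equals $\Phi\Phi^\top(\Phi\Phi^\top+\lambda^2\mathds{1})^{-1}$, which is positive semidefinite with operator norm at most $1$. The difference then becomes
\begin{equation*}
\langle\phi(z'),P\phi(z')\rangle-\langle\phi(z),P\phi(z)\rangle=\langle \phi(z')-\phi(z),\,P(\phi(z')+\phi(z))\rangle .
\end{equation*}
This identity uses only the symmetry of $P$.

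\textbf{Bounding the two factors.} Apply Cauchy--Schwarz together with $\|P\|_{\rm op}\le 1$ to get
\begin{equation*}
|\sigma_m^2(z)-\sigma_m^2(z')|\le \|\phi(z)-\phi(z')\|\,\bigl(\|\phi(z)\|+\|\phi(z')\|\bigr).
\end{equation*}
Each factor is controlled as follows.
\begin{itemize}
\item Normalization gives $\|\phi(z)\|=\sqrt{K(z,z)}=1$, so the second factor equals $2$.
\item The first factor equals $\sqrt{2(1-K(z,z'))}$. The proof of Lemma~\ref{lemm:lipschitz} already showed this is at most $L_{\mathcal{D}}\|z-z'\|_2$.
\end{itemize}
Combining these gives $|\sigma_m^2(z)-\sigma_m^2(z')|\le 2L_{\mathcal{D}}\|z-z'\|_2$.

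\textbf{Square-root version.} The second claim follows from the elementary inequality $|a-b|\le\sqrt{|a-b|(a+b)}=\sqrt{|a^2-b^2|}$ for $a,b\ge0$.

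\textbf{Main obstacle.} The only delicate point is justifying $\|P\|_{\rm op}\le1$ in the possibly large (but finite-dimensional, dimension $2M+1$) feature space. I would do this via the push-through identity $\Phi(\Phi^\top\Phi+\lambda^2\mathds{1}_m)^{-1}=(\Phi\Phi^\top+\lambda^2\mathds{1})^{-1}\Phi$, which shows $P=S(S+\lambda^2\mathds{1})^{-1}$ with $S=\Phi\Phi^\top\succeq0$. Its eigenvalues are $s/(s+\lambda^2)\in[0,1)$, so $\|P\|_{\rm op}<1$.
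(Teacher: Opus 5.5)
Your proof is correct and follows essentially the same route as the paper. Both arguments write $\sigma_m^2$ as a quadratic form in a resolvent-type operator of norm at most $1$, and factor the difference of the two quadratic forms as an inner product of a sum and a difference. Both then conclude with Cauchy--Schwarz, the normalization $K(z,z)=1$, and the estimate $\|K(\cdot,z)-K(\cdot,z')\|\le L_{\mathcal{D}}\|z-z'\|_2$ from the proof of Lemma~\ref{lemm:lipschitz}, obtaining the same constant $2L_{\mathcal{D}}$.

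The only difference is cosmetic. The paper works in the RKHS with $\lambda^2(C_m+\lambda^2 I)^{-1}$ and its square root. You instead cancel the diagonal terms and use the complementary operator $P=S(S+\lambda^2\mathds{1})^{-1}=\mathds{1}-\lambda^2(S+\lambda^2\mathds{1})^{-1}$ in the explicit finite-dimensional feature space. This makes your push-through step fully elementary, whereas the paper only sketches that step in the RKHS.
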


\begin{proof}
The following derivation uses standard operator-theoretic tools from the kernel methods literature (see, e.g., \cite{Vakili2021a}) to rewrite the predictive variance in a form amenable to Lipschitz analysis. The definition of the predictive variance involves the finite-dimensional Gram matrix $K_m\in\mathbb{R}^{m\times m}$ and the kernel vector $\mathbf{K}_m(z)\in\mathbb{R}^m$. To prove a Lipschitz bound, it will be convenient to lift this expression to an operator-theoretic form in the infinite-dimensional RKHS $\mathcal{H}_K$. This is achieved via the \emph{empirical covariance operator}
\begin{equation}
C_m \equiv \sum_{i=1}^m K(\cdot,z_i)\otimes K(\cdot,z_i),
\end{equation}
which is a positive semi-definite self-adjoint operator on $\mathcal{H}_K$. Here, for any $f,g\in\mathcal{H}_K$, the tensor product $(f\otimes g)$ denotes the rank-one operator that acts on $h\in\mathcal{H}_K$ via $(f\otimes g)(h) = \langle g,h\rangle_{\mathcal{H}_K} f$. The operator $C_m$ encodes the same information as the Gram matrix $K_m$: specifically, its action on any kernel section $K(\cdot,z)$ is
\begin{equation}
C_m\, K(\cdot,z) = \sum_{i=1}^m K(\cdot,z_i)\,K(z_i,z) = \sum_{i=1}^m [\mathbf{K}_m(z)]_i\, K(\cdot,z_i),
\end{equation}
so that $C_m$ acts in the span of the kernel sections $\{K(\cdot,z_i)\}_{i=1}^m$ in the same way that $K_m$ acts in $\mathbb{R}^m$.

We now explain how the predictive variance $\sigma_m^2(z)$ can be rewritten in terms of $C_m$. Recall the definition:
\begin{equation}
\sigma_m^2(z) = K(z,z) - \mathbf{K}_m(z)^\top (K_m + \lambda^2 \mathds{1}_m)^{-1}\mathbf{K}_m(z).
\end{equation}
The key observation is the \emph{push-through identity} (also called the Woodbury or matrix-inversion identity). In the finite-dimensional setting it states that for any matrix $A\in\mathbb{R}^{m\times m}$ and vector $\mathbf{v}\in\mathbb{R}^m$,
\begin{equation}
\mathbf{v}^\top(A + \lambda^2 \mathds{1}_m)^{-1}\mathbf{v} = \frac{1}{\lambda^2}\bigl(\mathbf{v}^\top \mathbf{v} - \mathbf{v}^\top(A + \lambda^2\mathds{1}_m)^{-1}A\,\mathbf{v}\bigr).
\end{equation}
The infinite-dimensional analog, applied in $\mathcal{H}_K$ with the operator $C_m$ playing the role of $A$ and the element $K(\cdot,z)$ playing the role of $\mathbf{v}$ (see, e.g., \cite{Vakili2021a}), yields
\begin{equation}\label{eq:push_through}
\sigma_m^2(z) = \lambda^2 \bigl\langle K(\cdot,z),\, (C_m + \lambda^2 I)^{-1} K(\cdot,z)\bigr\rangle_{\mathcal{H}_K},
\end{equation}
where $I$ denotes the identity operator on $\mathcal{H}_K$. To verify this, note that the right-hand side equals
\begin{align}
\lambda^2 \bigl\langle K(\cdot,z),\, (C_m + \lambda^2 I)^{-1} K(\cdot,z)\bigr\rangle_{\mathcal{H}_K}
&= \bigl\langle K(\cdot,z),\, K(\cdot,z)\bigr\rangle_{\mathcal{H}_K} - \bigl\langle K(\cdot,z),\, C_m(C_m + \lambda^2 I)^{-1} K(\cdot,z)\bigr\rangle_{\mathcal{H}_K}\nonumber\\
&= K(z,z) - \bigl\langle K(\cdot,z),\, C_m(C_m + \lambda^2 I)^{-1} K(\cdot,z)\bigr\rangle_{\mathcal{H}_K},\nonumber
\end{align}
where we used $\lambda^2(C_m+\lambda^2 I)^{-1} = I - C_m(C_m+\lambda^2 I)^{-1}$ and the reproducing property. The second term, when restricted to the span of $\{K(\cdot,z_i)\}$, reduces exactly to $\mathbf{K}_m(z)^\top(K_m+\lambda^2\mathds{1}_m)^{-1}\mathbf{K}_m(z)$, confirming~\eqref{eq:push_through}.

Since $(C_m+\lambda^2 I)$ is strictly positive (as $C_m\ge 0$ and $\lambda^2>0$), it admits a unique positive self-adjoint square root $(C_m+\lambda^2 I)^{1/2}$, and we can write
\begin{equation}
\sigma_m^2(z) = \lambda^2 \bigl\|(C_m + \lambda^2 I)^{-1/2} K(\cdot,z)\bigr\|_{\mathcal{H}_K}^2.
\end{equation}

With this representation in hand, the Lipschitz bound follows from elementary Hilbert-space manipulations. Define
\begin{equation}
u \equiv (C_m+\lambda^2 I)^{-1/2}K(\cdot,z), \qquad v \equiv (C_m+\lambda^2 I)^{-1/2}K(\cdot,z').
\end{equation}
Then $\sigma_m^2(z)=\lambda^2\|u\|_{\mathcal{H}_K}^2$ and $\sigma_m^2(z')=\lambda^2\|v\|_{\mathcal{H}_K}^2$, so using the factorization $\|u\|^2-\|v\|^2 = \langle u+v,\,u-v\rangle$ we get
\begin{equation}
\bigl|\sigma_m^2(z)-\sigma_m^2(z')\bigr| = \lambda^2\bigl|\|u\|_{\mathcal{H}_K}^2 - \|v\|_{\mathcal{H}_K}^2\bigr| = \lambda^2 \bigl|\langle u+v,\, u-v\rangle_{\mathcal{H}_K}\bigr| \le \lambda^2 \|u+v\|_{\mathcal{H}_K}\,\|u-v\|_{\mathcal{H}_K},
\end{equation}
where the last step is Cauchy--Schwarz. We now bound each factor separately.

\medskip
\noindent\textbf{Bounding $\|u+v\|_{\mathcal{H}_K}$:} Since the kernel is normalized, $\|K(\cdot,z)\|_{\mathcal{H}_K}=\sqrt{K(z,z)}=1$ for all $z$. The operator $(C_m+\lambda^2 I)^{-1/2}$ has operator norm
\begin{equation}
\bigl\|(C_m+\lambda^2 I)^{-1/2}\bigr\|_{\mathrm{op}} = \frac{1}{\sqrt{\lambda_{\min}(C_m+\lambda^2 I)}} \le \frac{1}{\sqrt{\lambda^2}} = \frac{1}{\lambda},
\end{equation}
where $\lambda_{\min}(C_m+\lambda^2 I)\ge \lambda^2$ because $C_m$ is positive semi-definite. Therefore
\begin{equation}
\|u\|_{\mathcal{H}_K} = \bigl\|(C_m+\lambda^2 I)^{-1/2}K(\cdot,z)\bigr\|_{\mathcal{H}_K} \le \frac{1}{\lambda}\|K(\cdot,z)\|_{\mathcal{H}_K} = \frac{1}{\lambda},
\end{equation}
and similarly $\|v\|_{\mathcal{H}_K}\le 1/\lambda$. Hence $\|u+v\|_{\mathcal{H}_K} \le 2/\lambda$.

\medskip
\noindent\textbf{Bounding $\|u-v\|_{\mathcal{H}_K}$:} By linearity of $(C_m+\lambda^2 I)^{-1/2}$,
\begin{equation}
\|u-v\|_{\mathcal{H}_K} = \bigl\|(C_m+\lambda^2 I)^{-1/2}\bigl(K(\cdot,z)-K(\cdot,z')\bigr)\bigr\|_{\mathcal{H}_K} \le \frac{1}{\lambda}\bigl\|K(\cdot,z)-K(\cdot,z')\bigr\|_{\mathcal{H}_K},
\end{equation}
using the same operator-norm bound. From the proof of Lemma~\ref{lemm:lipschitz}, we established
\begin{equation}
\bigl\|K(\cdot,z)-K(\cdot,z')\bigr\|_{\mathcal{H}_K} = \sqrt{2(1-K(z,z'))} \le L_{\mathcal{D}}\|z-z'\|_2.
\end{equation}

\medskip
\noindent\textbf{Combining:} Substituting the two bounds above,
\begin{equation}
\bigl|\sigma_m^2(z)-\sigma_m^2(z')\bigr| \le \lambda^2 \cdot \frac{2}{\lambda} \cdot \frac{L_{\mathcal{D}}\|z-z'\|_2}{\lambda} = 2L_{\mathcal{D}}\,\|z-z'\|_2.
\end{equation}
For the bound on $\sigma_m$ itself, note that for $a,b\ge 0$ we have $(a-b)^2 \le |a-b|(a+b) = |a^2-b^2|$, hence $|a-b|\le\sqrt{|a^2-b^2|}$. Applying this with $a=\sigma_m(z)$ and $b=\sigma_m(z')$ gives
\begin{equation}
\bigl|\sigma_m(z)-\sigma_m(z')\bigr| \le \sqrt{2L_{\mathcal{D}}\,\|z-z'\|_2}.\qedhere
\end{equation}
\end{proof}

Applying Lemma~\ref{lem:lipschitz_sigma} with $\|z-[z]\|_2\le \varepsilon_{\mathrm{grid}} = \frac{1}{m\,L_{\mathcal{D}}\,B_{\mathrm{net}}(m)}$ gives
\begin{align}
\bigl|\sigma_m^2(z)-\sigma_m^2([z])\bigr| &\le \frac{2}{m\,B_{\mathrm{net}}(m)}, \label{eq:sigma2_correction}\\
\bigl|\sigma_m(z)-\sigma_m([z])\bigr| &\le \sqrt{\frac{2}{m\,B_{\mathrm{net}}(m)}}. \label{eq:sigma_correction}
\end{align}
In particular, using $\sigma_m([z])\le\sigma_m(z)+|\sigma_m(z)-\sigma_m([z])|$ and $\sigma_m([z])^2\le\sigma_m(z)^2+|\sigma_m^2(z)-\sigma_m^2([z])|$, we can replace $\sigma_m([z])$ by $\sigma_m(z)$ in~\eqref{eq:uniform_all_z} at the cost of the above corrections.

\paragraph{Two sources of randomness.} Before proceeding, we emphasize that the bound~\eqref{eq:uniform_all_z} is conditional on the training set: it holds with probability $1-\delta_1$ over the noise, for an \emph{arbitrary fixed} draw of $\{z_i\}_{i=1}^m$. However, $\sigma_m(z)$ on the right-hand side is itself a random variable through its dependence on the training set. We now address the training-set randomness. The remainder of the argument involves two independent sources of randomness:
\begin{itemize}
  \item \textbf{Training-set randomness:} the $m$ points $\{z_i\}_{i=1}^m$ are drawn i.i.d.\ from $U(\mathcal{Z})$. The predictive uncertainty $\sigma_m(z)$ depends on these points and is therefore a random function of the training set.
  \item \textbf{Noise randomness:} the labels $\boldsymbol{\epsilon}_m$ are drawn independently of the training set; this is the randomness already handled in Lemma~\ref{lem:noise} and the union bound above.
\end{itemize}

The following lemma provides a bound on the expected integrated uncertainty, exploiting the fact that the training data are drawn from the same distribution over which the uncertainty is averaged.

\begin{lemma}[Expected integrated uncertainty]\label{lem:info_gain}
Under Assumption~\ref{ass:sampling-model}, when $\{z_i\}_{i=1}^m \overset{\mathrm{iid}}{\sim} U(\mathcal{Z})$,
\begin{equation}
\mathbb{E}_{\{z_i\}}\!\left[\mathbb{E}_{z\sim U(\mathcal{Z})}[\sigma_m^2(z)]\right]
\;\le\;
\frac{2\,\Gamma(m+1)}{(m+1)\,\log(1+1/\lambda^2)},
\end{equation}
where $\Gamma(n)=\sup_{z_1,\dots,z_n\in\mathcal{Z}}\frac{1}{2}\log\det\bigl(\mathds{1}_n+K_n/\lambda^2\bigr)$ is the maximum information gain of the kernel after $n$ observations.
\end{lemma}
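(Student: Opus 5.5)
The plan is a symmetrization (exchangeability) argument combined with the standard "sum of posterior variances" bound from Gaussian-process bandits.

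\textbf{Step 1: exchangeability.} Let $z_{m+1}\sim U(\mathcal{Z})$ be independent of the training set. Then
\begin{equation}
\mathbb{E}_{\{z_i\}}\mathbb{E}_{z\sim U(\mathcal{Z})}[\sigma_m^2(z)]=\mathbb{E}\bigl[\sigma^2_{m}(z_{m+1};z_1,\dots,z_m)\bigr].
\end{equation}
The $m+1$ points are i.i.d., hence exchangeable, so $z_{m+1}$ may be swapped with any $z_j$. This gives
\begin{equation}
\mathbb{E}_{\{z_i\}}\mathbb{E}_{z}[\sigma_m^2(z)]=\frac{1}{m+1}\sum_{j=1}^{m+1}\mathbb{E}\bigl[\sigma^2_{m}(z_j;\{z_i\}_{i\neq j})\bigr].
\end{equation}

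\textbf{Step 2: monotonicity of the posterior variance.} For each $j$, I would use that $\sigma^2$ can only shrink when the conditioning set grows. Concretely, $\sigma^2_{m}(z_j;\{z_i\}_{i\ne j})\le \sigma^2_{j-1}(z_j;z_1,\dots,z_{j-1})$. This follows from the operator form \eqref{eq:push_through}: adding sections $K(\cdot,z_i)\otimes K(\cdot,z_i)$ to $C$ increases $C+\lambda^2 I$, so $(C+\lambda^2I)^{-1}$ decreases in the Loewner order. Hence the sum in Step 1 is at most
\begin{equation}
\sum_{j=1}^{m+1}\sigma_{j-1}^2(z_j),
\end{equation}
the sequential posterior variances along the ordering $z_1,\dots,z_{m+1}$.

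\textbf{Step 3: the sequential sum as information gain.} I would relate this sum to $\Gamma(m+1)$ via the chain rule for log-determinants:
\begin{equation}
\tfrac12\log\det(\mathds{1}_{m+1}+K_{m+1}/\lambda^2)=\tfrac12\sum_{j=1}^{m+1}\log\bigl(1+\sigma_{j-1}^2(z_j)/\lambda^2\bigr).
\end{equation}
This is the Schur-complement identity for Gram matrices; the posterior-variance definition here, with regularizer $\lambda^2$, is exactly what appears there. Since the kernel is normalized, $\sigma^2_{j-1}(z_j)\le K(z_j,z_j)=1$, so the ratio $x=\sigma^2_{j-1}(z_j)/\lambda^2$ lies in $[0,1/\lambda^2]$. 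By concavity of $\log(1+x)$ on that interval,
\begin{equation}
x\le \frac{1/\lambda^2}{\log(1+1/\lambda^2)}\log(1+x),
\end{equation}
that is, $\sigma^2\le \log(1+\sigma^2/\lambda^2)/\log(1+1/\lambda^2)$. Summing over $j$ gives
\begin{equation}
\sum_j\sigma_{j-1}^2(z_j)\le \frac{2\Gamma(m+1)}{\log(1+1/\lambda^2)}
\end{equation}
for every realization. Taking expectations and dividing by $m+1$ then finishes the proof.

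\textbf{Main obstacle.} The step needing most care is verifying the log-det chain rule and the exact normalization constants with the paper's convention $(K_m+\lambda^2\mathds{1})^{-1}$. The natural identity is $\det(\mathds{1}+K_{n}/\lambda^2)=\det(\mathds{1}+K_{n-1}/\lambda^2)\,(1+\sigma_{n-1}^2(z_n)/\lambda^2)$. I would prove it by a block-determinant computation on $\lambda^2\mathds{1}+K_n$, checking that its Schur complement equals $\lambda^2+\sigma_{n-1}^2(z_n)$. Monotonicity in Step 2 is then routine.
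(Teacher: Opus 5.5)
Your proposal is correct and follows essentially the same route as the paper: your exchangeability-plus-pointwise-monotonicity step is equivalent to the paper's argument that $a_t=\mathbb{E}[\sigma^2_{t-1}(z_t)]$ is non-increasing, so that $a_{m+1}\le\frac{1}{m+1}\sum_{t=1}^{m+1}a_t$. The only difference is that you derive the cumulative bound $\sum_{t}\sigma^2_{t-1}(z_t)\le 2\Gamma(m+1)/\log(1+1/\lambda^2)$ yourself, via the Schur-complement chain rule and concavity of $\log(1+x)$ with $K(z,z)=1$, whereas the paper simply cites it as Lemma~5.3 of \cite{Srinivas}.
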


\begin{proof}
Define the sequence
\begin{equation}
a_t \;\equiv\; \mathbb{E}_{z_1,\dots,z_{t-1}}\!\left[\mathbb{E}_{z_t\sim U(\mathcal{Z})}\bigl[\sigma^2_{t-1}(z_t)\bigr]\right], \qquad t=1,2,\dots,
\end{equation}
where $\sigma^2_{t-1}(z_t)$ is the predictive variance at a fresh point $z_t\sim U(\mathcal{Z})$ given the training set $\{z_1,\dots,z_{t-1}\}$. The left-hand side of the lemma is precisely $a_{m+1}$, since it conditions on $m$ training points and evaluates the predictive variance at a fresh $(m+1)$-th point drawn from $U(\mathcal{Z})$.

The sequence $(a_t)_{t\ge 1}$ is non-increasing: for any $t' \ge t$, the predictive variance $\sigma_{t'-1}^2(z)$ is at most $\sigma_{t-1}^2(z)$ pointwise, because conditioning on additional observations can only decrease the predictive variance of the kernel ridge predictor (this follows from the positive semi-definiteness of $C_m$ and the monotonicity of the map $C\mapsto (C+\lambda^2 I)^{-1}$ on the positive cone). Therefore $a_{m+1}\le a_t$ for every $t\le m+1$. By exchangeability of the i.i.d.\ sequence $(z_1,\dots,z_{m+1})$, we have $\mathbb{E}_{S_{t-1},z_t}[\sigma_{t-1}^2(z_t)] = \mathbb{E}_{S_{t-1},z}[\sigma_{t-1}^2(z)]$ for any fresh $z\sim U(\mathcal{Z})$ independent of $S_{t-1}$, confirming that $a_t$ is well-defined regardless of which point in the sequence is designated as the test point. Since $a_{m+1}\le a_t$ for every $t\in\{1,\dots,m+1\}$, it is in particular bounded by the average:
\begin{equation}
a_{m+1} \;\le\; \frac{1}{m+1}\sum_{t=1}^{m+1} a_t.
\end{equation}
It remains to bound this sum. By Fubini's theorem,
\begin{equation}
\sum_{t=1}^{m+1} a_t \;=\; \sum_{t=1}^{m+1} \mathbb{E}_{z_1,\dots,z_t}\!\bigl[\sigma^2_{t-1}(z_t)\bigr]
\;=\; \mathbb{E}_{z_1,\dots,z_{m+1}}\!\left[\sum_{t=1}^{m+1}\sigma^2_{t-1}(z_t)\right].
\end{equation}
We now recall the following standard bound on cumulative predictive variances.

\begin{lemma}[Cumulative uncertainty bound {\cite[Lemma~5.3]{Srinivas}}]
\label{lem:srinivas}
For any sequence of points $\{z_l\}_{l=1}^n\subset\mathcal{Z}$,
\begin{equation}
\sum_{l=1}^n \sigma_{l-1}^2(z_l)
\le
\frac{2\,\Gamma(n)}{\log(1+1/\lambda^2)}.
\end{equation}
\end{lemma}

Applying Lemma~\ref{lem:srinivas} with $n=m+1$ inside the expectation (the bound holds for every realization of $z_1,\dots,z_{m+1}$, so it passes through the expectation),
\begin{equation}
\sum_{t=1}^{m+1} a_t \;\le\; \frac{2\,\Gamma(m+1)}{\log(1+1/\lambda^2)}.
\end{equation}
Combining the above,
\begin{equation}
\mathbb{E}_{\{z_i\}}\!\left[\mathbb{E}_{z\sim U(\mathcal{Z})}[\sigma_m^2(z)]\right]
\;=\; a_{m+1} \;\le\; \frac{1}{m+1}\sum_{t=1}^{m+1} a_t \;\le\; \frac{2\,\Gamma(m+1)}{(m+1)\,\log(1+1/\lambda^2)},
\end{equation}
which completes the proof.
\end{proof}

\paragraph{Combining the bounds.} We now assemble the pieces. Define the abbreviation
\begin{equation}
\mathcal{F}_m \;\equiv\; \|g_{k-1}\|_{\mathcal{H}_K} + \frac{\tau}{\lambda} \sqrt{2\log\frac{4K|\mathcal{Z}_m|}{\delta}},
\end{equation}
where the factor $4K$ inside the logarithm accounts for the union bound over grid points and over the $K$ iterations of the algorithm (as explained below). Starting from~\eqref{eq:uniform_all_z} and using $\sigma_m([z])\le\sigma_m(z)+|\sigma_m(z)-\sigma_m([z])|\le \sigma_m(z)+\sqrt{\frac{2}{m\,B_{\mathrm{net}}(m)}}$ (from~\eqref{eq:sigma_correction}), we have --- with probability at least $1-\delta/(2K)$ over the noise, for any fixed training set --- the pointwise bound
\begin{equation}
|g_{k-1}(z) - F_k(z)| \;\le\; \left(\sigma_m(z)+\sqrt{\frac{2}{m\,B_{\mathrm{net}}(m)}}\right)\mathcal{F}_m \;+\; \frac{2}{m},
\end{equation}
valid for all $z\in\mathcal{Z}$ simultaneously. Squaring, taking the expectation over $z\sim U(\mathcal{Z})$, and using the elementary inequality $(a+b+c)^2\le 3(a^2+b^2+c^2)$, we obtain
\begin{equation}\label{eq:integrated_bound}
\mathbb{E}_{z\sim U(\mathcal{Z})}\!\left[|g_{k-1}(z)-F_k(z)|^2\right]
\;\le\; 3\,\mathbb{E}_{z\sim U(\mathcal{Z})}[\sigma_m^2(z)]\,\mathcal{F}_m^2 \;+\; \frac{6\,\mathcal{F}_m^2}{m\,B_{\mathrm{net}}(m)} \;+\; \frac{12}{m^2}.
\end{equation}
This bound holds at each fixed iteration $k$ with probability at least $1-\delta/(2K)$ over the noise, for any fixed training set. The quantity $\mathbb{E}_{z}[\sigma_m^2(z)]$ on the right-hand side remains a random variable through the training set.

We now convert the in-expectation bound from Lemma~\ref{lem:info_gain} into a high-probability statement via Markov's inequality. Define the non-negative random variable $Y\equiv\mathbb{E}_{z\sim U(\mathcal{Z})}[\sigma_m^2(z)]$, whose randomness comes entirely from the draw of the training set $\{z_i\}_{i=1}^m$. By Lemma~\ref{lem:info_gain},
\begin{equation}
\mathbb{E}_{\{z_i\}}[Y] \;\le\; \frac{2\,\Gamma(m+1)}{(m+1)\,\log(1+1/\lambda^2)}.
\end{equation}
Since $Y\ge 0$, Markov's inequality gives $\Pr[Y\ge t]\le \mathbb{E}[Y]/t$ for any $t>0$. Setting the failure probability to $\delta/(2K)$, to account for the union bound over $K$ iterations, yields: with probability at least $1-\delta/(2K)$ over the draw of the training set at each iteration,
\begin{equation}\label{eq:markov_sigma}
\mathbb{E}_{z\sim U(\mathcal{Z})}[\sigma_m^2(z)] \;\le\; \frac{4K\,\Gamma(m+1)}{\delta\,(m+1)\,\log(1+1/\lambda^2)}.
\end{equation}

To make the conditional-to-unconditional step explicit, define $\mathcal{B}_k$ as the round-$k$ bad event (i.e., at least one of the two round-$k$ bounds fails). From the conditional analysis at round $k$,
\begin{equation}
\Pr(\mathcal{B}_k\mid \mathcal{F}_{k-1}) \le \frac{\delta}{K} 
\end{equation}
The tower property (law of total expectation for indicators) gives
\begin{equation}
\Pr(\mathcal{B}_k)=\mathbb{E}\!\left[\Pr(\mathcal{B}_k\mid \mathcal{F}_{k-1})\right] \le \frac{\delta}{K}.
\end{equation}
Now apply the union bound over iterations:
\begin{equation}
\Pr\!\left(\bigcup_{k=1}^K \mathcal{B}_k\right) \le \sum_{k=1}^K \Pr(\mathcal{B}_k) \le \delta,
\end{equation}
which is exactly why the final statement holds simultaneously for all $k=1,\dots,K$.
We now combine both sources of randomness. At each iteration $k$, we have two independent high-probability events: the noise bound~\eqref{eq:integrated_bound} (holding with probability $\ge 1-\delta/(2K)$) and the Markov bound~\eqref{eq:markov_sigma} (holding with probability $\ge 1-\delta/(2K)$). By a union bound, both hold simultaneously at a given iteration with probability at least $1-\delta/K$. Since there are $K$ iterations in total, a further union bound over all iterations gives that the bound holds simultaneously at every $k\in\{1,\dots,K\}$ with probability at least $1-\delta$ over the joint randomness. Substituting~\eqref{eq:markov_sigma} into~\eqref{eq:integrated_bound}, we obtain that with probability at least $1-\delta$,
\begin{equation}\label{eq:full_bound}
\max_{1\le k\le K}\,\mathbb{E}_{z\sim U(\mathcal{Z})}\!\left[|g_{k-1}(z)-F_k(z)|^2\right]
\;\le\; \frac{12K\,\Gamma(m+1)\,\mathcal{F}_m^2}{\delta\,(m+1)\,\log(1+1/\lambda^2)} \;+\; \frac{6\,\mathcal{F}_m^2}{m\,B_{\mathrm{net}}(m)} \;+\; \frac{12}{m^2}.
\end{equation}
We note that the right-hand side does not depend on the iteration index $k$: the bound is uniform across iterations, and the maximum on the left-hand side is controlled by the same expression at every $k$. The factor $K$ in the leading term arises solely from the union bound over iterations. Since the clipping bound $|g_{k-1}(z)-\hat{Q}_k(z)|\le|g_{k-1}(z)-F_k(z)|$ holds pointwise, the same bound applies with $F_k$ replaced by $\hat{Q}_k$.

Returning to the error decomposition in Theorem~\ref{th:fqi_error_propagation} with $\phi_{U,U}\le\sqrt{|\mathcal{A}|\,p_{\max}}$ (Lemma~\ref{lem:concentrability-UU}), our strategy is to ensure that each of the two terms in~\eqref{eq:error_decomposition_main} is at most $\epsilon/2$. The algorithmic term is handled by condition~\eqref{eq:K_condition} on the number of iterations $K$. For the statistical term, we require
\begin{equation}
\frac{2\phi_{U,U}\gamma}{(1-\gamma)^2}\max_{1\le k\le K}\left(\mathbb{E}_{z\sim U(\mathcal{Z})}\!\left[|\hat{Q}_k(z)-T\hat{Q}_{k-1}(z)|^2\right]\right)^{1/2} \;\le\; \frac{\epsilon}{2}.
\end{equation}
Squaring both sides and rearranging, this is equivalent to requiring
\begin{equation}\label{eq:target_bound}
\max_{1\le k\le K}\,\mathbb{E}_{z\sim U(\mathcal{Z})}\!\left[|\hat{Q}_k(z)-T\hat{Q}_{k-1}(z)|^2\right] \;\le\; \frac{\epsilon^2(1-\gamma)^4}{16\,\phi_{U,U}^2\,\gamma^2} \;\equiv\; \varepsilon_{\mathrm{stat}}^2,
\end{equation}
where we define the target threshold
\begin{equation}
\varepsilon_{\mathrm{stat}} \equiv \frac{\epsilon(1-\gamma)^2}{4\gamma\sqrt{|\mathcal{A}|\,p_{\max}}}.
\end{equation}
It therefore suffices to ensure that the right-hand side of~\eqref{eq:full_bound} is at most $\varepsilon_{\mathrm{stat}}^2$. This is guaranteed if each of the three terms is at most $\varepsilon_{\mathrm{stat}}^2/3$, which yields the following conditions on $m$:
\begin{enumerate}
\item[(i)] From the information-gain term:
\begin{equation}
\frac{12K\,\Gamma(m+1)\,\mathcal{F}_m^2}{\delta\,(m+1)\,\log(1+1/\lambda^2)} \;\le\; \frac{\varepsilon_{\mathrm{stat}}^2}{3},
\end{equation}
which is equivalent to
\begin{equation}\label{eq:condition_1}
m+1 \;\ge\; \frac{36K\,\Gamma(m+1)\,\mathcal{F}_m^2}{\delta\,\varepsilon_{\mathrm{stat}}^2\,\log(1+1/\lambda^2)}.
\end{equation}

\item[(ii)] From the discretization correction:
\begin{equation}
\frac{6\,\mathcal{F}_m^2}{m\,B_{\mathrm{net}}(m)} \;\le\; \frac{\varepsilon_{\mathrm{stat}}^2}{3},
\end{equation}
which is equivalent to
\begin{equation}\label{eq:condition_2}
m \;\ge\; \frac{18\,\mathcal{F}_m^2}{\varepsilon_{\mathrm{stat}}^2\,B_{\mathrm{net}}(m)}.
\end{equation}

\item[(iii)] From the grid residual:
\begin{equation}
\frac{12}{m^2} \;\le\; \frac{\varepsilon_{\mathrm{stat}}^2}{3},
\end{equation}
which is equivalent to
\begin{equation}\label{eq:condition_3}
m \;\ge\; \frac{6}{\varepsilon_{\mathrm{stat}}}.
\end{equation}
\end{enumerate}

When all three conditions are satisfied simultaneously, we have $\max_{1\le k\le K}\mathbb{E}_z[|\hat{Q}_k(z)-T\hat{Q}_{k-1}(z)|^2]\le\varepsilon_{\mathrm{stat}}^2$ with probability at least $1-\delta$, and therefore
\begin{equation}
\frac{2\phi_{U,U}\gamma}{(1-\gamma)^2}\max_{1\le k\le K}\left(\mathbb{E}_z[|\hat{Q}_k(z)-T\hat{Q}_{k-1}(z)|^2]\right)^{1/2} \;\le\; \frac{2\phi_{U,U}\gamma}{(1-\gamma)^2}\,\varepsilon_{\mathrm{stat}} \;=\; \frac{\epsilon}{2}.
\end{equation}
Combined with the algorithmic term (also at most $\epsilon/2$ by choice of $K$), the total loss satisfies
\begin{equation}
\mathbb{E}_{(s,a)\sim U(\mathcal{Z})}\!\left[|Q^*(s,a)-Q^{\pi_K}(s,a)|\right] \;\le\; \epsilon
\end{equation}
with probability at least $1-\delta$ over the joint randomness (training set and noise) completing the proof.

\section{Proof of \Cref{lem:gamma_intermediate}}\label{a:gamma_upper}
This proof is adapted from the proof of Lemma 2 in \cite{vakili2024kernelizedreinforcementlearningorder}, and we restate it here for clarity. Recall the definition 
\begin{equation}
    \Gamma(m)=\max_{\{z_i\}_{i=1}^{m}\subset \mathcal{Z}}\frac{1}{2}\log\det\left(\mathds{1}_m+\frac{1}{\lambda^2}K_{\{z_i\}_{i=1}^m}\right),
\end{equation}
where $K_{\{z_i\}_{i=1}^m}$ denotes the Gram matrix of the kernel $K_{(\mathcal{D},w)}$ for the corresponding subset of data points $\{z_i\}_{i=1}^m$.  
We will derive a uniform upper bound for $\log\det\left(\mathds{1}_m+\frac{1}{\lambda^2}K_{\{z_i\}_{i=1}^m}\right)$, valid for any subset $\{z_i\}_{i=1}^{m}\subset \mathcal{Z}$. 

Recall from \Cref{eq:PQC-mercer_1,eq:PQC-mercer_2,eq:PQC-mercer_3} that the kernel admits the decomposition
\begin{equation}
    K_{(\mathcal{D},w)}(z,z') = \sum_{i=0}^{2M} \nu_i \, \phi^{(i)}(z) \, \phi^{(i)}(z'),
\end{equation}
where 
\begin{equation}
    \nu_0=\frac{w_0^2}{\|w\|_2^2},\quad \nu_{2k-1}=\nu_{2k}=\frac{w_k^2}{2\|w\|_2^2},\quad k=1,\dots,M
\end{equation}
and $w = (w_0,\ldots, w_M)$ is the weight vector, with $w_i\geq w_{i+1}> 0$ for all $i$. Note that as a result of the chosen normalization one has $K_{(\mathcal{D},w)}(z,z)=1$ for all $z\in\mathcal{Z}$.

We now truncate this kernel decomposition into two parts: the first part corresponding to the $T$ largest reweighted features, and the second containing the remaining components. More specifically, we have
\begin{equation}
    \begin{split}
        K_{(\mathcal{D},w)}(z,z')&=\underbrace{\sum_{i=0}^{T} \nu_i\, \phi_i^{\top}(z)\,\phi_i(z')}_{\equiv K^{T}(z,z')}+\underbrace{\sum_{i=T+1}^{2M} \nu_i\, \phi_i^{\top}(z)\,\phi_i(z')}_{\equiv K^{0}(z,z')}.
    \end{split}
\end{equation}
This decomposition directly translates to the Gram matrix as
\begin{equation}
    K_{\{z_i\}_{i=1}^m}=K_{\{z_i\}_{i=1}^m}^{T}+K_{\{z_i\}_{i=1}^m}^0.
\end{equation}
For what follows, note that both $K_{\{z_i\}_{i=1}^m}^{T}$ and $K_{\{z_i\}_{i=1}^m}^{0}$ are positive semidefinite.

We are now ready to manipulate the following expression:
\begin{equation}
    \begin{split}
        \log\det\left(\mathds{1}_m+\frac{1}{\lambda^2}K_{\{z_i\}_{i=1}^m}\right)
        &=\log\det\left(\mathds{1}_m+\frac{1}{\lambda^2}K_{\{z_i\}_{i=1}^m}^T+\frac{1}{\lambda^2}K_{\{z_i\}_{i=1}^m}^0\right)\\
        &=\log\det\left(\left(\mathds{1}_m+\frac{1}{\lambda^2}K_{\{z_i\}_{i=1}^m}^T\right)\left(\mathds{1}_m+\frac{1}{\lambda^2}\left(\mathds{1}_m+\frac{1}{\lambda^2}K_{\{z_i\}_{i=1}^m}^T\right)^{-1}\,K_{\{z_i\}_{i=1}^m}^0\right)\right)\\
        &=\log\det\left(\mathds{1}_m+\frac{1}{\lambda^2}K_{\{z_i\}_{i=1}^m}^T\right)
        +\log\det\left(\mathds{1}_m+\frac{1}{\lambda^2}\left(\mathds{1}_m+\frac{1}{\lambda^2}K_{\{z_i\}_{i=1}^m}^T\right)^{-1}\,K_{\{z_i\}_{i=1}^m}^0\right),
    \end{split}
\end{equation}
where, in the second equality, we used the identity $\mathds{1}+A+B=(\mathds{1}+A)(\mathds{1}+(\mathds{1}+A)^{-1}B)$, and in the third, the property $\det(MN)=\det(M)\,\det(N)$.

Let us define $\vec{\phi}_T(z)=[\phi_1,\ldots,\phi_T]^\top$.  
This allows us to define the matrix $\Phi_T=[\vec{\phi}_T(z_i)]^\top_{i=1,\ldots,m}$, which is an $m\times T$ matrix stacking the truncated feature vectors of the $m$ data points as rows.   Additionally, let $\Sigma_T=\mathrm{diag}(\nu_1,\ldots,\nu_{T})$ containing the first $T$ normalized weights.

Using these definitions, we can upper bound the first term as follows:
\begin{equation}
    \begin{split}
        \log\det\left(\mathds{1}_m+\frac{1}{\lambda^2}K_{\{z_i\}_{i=1}^m}^T\right)
        &=\log\det\left(\mathds{1}_m+\frac{1}{\lambda^2}\Phi_T\Sigma_T\Phi_T^\top\right)\\
        &= \log\det\left(\mathds{1}_T+\frac{1}{\lambda^2}\Sigma_T^{1/2}\,\Phi_T^\top\,\Phi_T \,\Sigma_T^{1/2}\right)\\
        &\leq T\log\left(\frac{\mathrm{tr}(\mathds{1}_T+\frac{1}{\lambda^2}\Sigma_T^{1/2}\,\Phi_T^\top\,\Phi_T \,\Sigma_T^{1/2})}{T}\right)\\
        &=T\log\left(1+\frac{1}{\lambda^2 T}\mathrm{tr}(\Phi_T\,\Sigma_T\,\Phi_T^\top)\right)\\
        &\leq T\log\left(1+\frac{m}{\lambda^2 T}\right),
    \end{split}
\end{equation}
where in the second line we used the Weinstein–Aronszajn identity \cite{pozrikidis_introduction_2014}; in the third, the inequality $\log\det(A)\leq T \log(\mathrm{tr}(A)/T)$ that holds for a $T\times T$ positive definite matrix $A$ and follows from the arithmetic–geometric mean inequality applied to the eigenvalues of $A$; and finally, we used $\Phi_T\,\Sigma_T\,\Phi_T^\top=K_{\{z_i\}_{i=1}^m}^T$ and  $K(z,z)=1$. 

For the second term, we have
\begin{equation}
    \begin{split}
        \log\det\left(\mathds{1}_m+\frac{1}{\lambda^2}(\mathds{1}_m+\frac{1}{\lambda^2}K_{\{z_i\}_{i=1}^m}^T)^{-1}\,K_{\{z_i\}_{i=1}^m}^0\right)
        &\leq m\log\left(\frac{\mathrm{tr}(\mathds{1}_m + \frac{1}{\lambda^2}(\mathds{1}_m+\frac{1}{\lambda^2}K_{\{z_i\}_{i=1}^m}^T)^{-1}\,K_{\{z_i\}_{i=1}^m}^0)}{m}\right)\\
        &= m\log\left(1 + \frac{1}{\lambda^2m}\mathrm{tr}\left((\mathds{1}_m+\frac{1}{\lambda^2}K_{\{z_i\}_{i=1}^m}^T)^{-1}\,K_{\{z_i\}_{i=1}^m}^0\right)\right)\\
        &\leq m \log\left(1+\frac{1}{\lambda^2m}\mathrm{tr}(K_{\{z_i\}_{i=1}^m}^0)\right)\\
        &\leq m \log\left(1+\frac{\epsilon_T}{\lambda^2}\right)\\
        &\leq \frac{m\epsilon_T}{\lambda^2},
    \end{split}
\end{equation}
where in the first step we used the same property as before; in the third, the fact that $(\mathds{1}_m+\frac{1}{\lambda^2}K_{\{z_i\}_{i=1}^m}^T)^{-1}$ is positive definite and its largest eigenvalue is bounded by 1, together with the fact that $K_m^0$ is positive semi-definite. In particular, we applied the inequality that for two positive semi-definite matrices $A,B$, $\mathrm{tr}(AB)\leq \lambda_A \mathrm{tr}(B)$, where $\lambda_A$ is the largest eigenvalue of $A$. In the fourth step we used $K_{(\mathcal{D},w)}^0(z,z)\leq \epsilon_T$, and in the last one, the approximation $\log(1+x)\leq x$.

Combining both results, we obtain the desired upper bound
\begin{equation}
    \Gamma(m)\leq \frac{T}{2}\log\left(1+\frac{m}{\lambda^2 T}\right)+\frac{m\epsilon_T}{2\lambda^2}.
\end{equation}

\section{Proof of \Cref{lem:gamma_poly}}\label{a:gamma_poly}
In \Cref{a:gamma_upper}, we showed that for any positive integer $n$ and any $T \ge 1$ we have
\begin{equation}
\label{eq:gamma-general-bound}
\Gamma(n)
\le 
\frac{T}{2}\log\!\left(1+\frac{n}{\lambda^2 T}\right)
+
\frac{n}{2\lambda^2}\,\epsilon_T,
\end{equation}
where
\begin{equation}
\epsilon_T
=
\sup_{z\in\mathcal{Z}} K^0_{(\Omega_{\mathcal{D}},w)}(z,z)
=
\sum_{i=T+1}^{2M} \nu_i .
\end{equation}
Under Assumption~\ref{ass:poly-eigendecay}, there exist constants $C>0$ and $p>1$ such that the Mercer eigenvalues $\{\nu_j\}_{j=0}^{2M}$ satisfy
\begin{equation}
\nu_0 \le C,
\qquad
\nu_{2i-1} = \nu_{2i} \le \frac{C}{2} (i+1)^{-p}, \quad i=1,\ldots,M.
\end{equation}
In particular, the ordered Mercer eigenvalues decay polynomially (up to multiplicity) with exponent $p$.

Therefore, 
\begin{equation}
\label{eq:tail-bound}
\epsilon_T
=
\sum_{i=T+1}^{2M} \nu_i
\le
C \sum_{i=T+1}^{2M} i^{-p}
\le
C \int_T^{2M} x^{-p}\,dx
=
\frac{C}{p-1}\, T^{1-p}.
\end{equation}

We apply~\eqref{eq:gamma-general-bound} with $n = m+1$. Substituting~\eqref{eq:tail-bound}, we obtain
\begin{equation}
\label{eq:gamma-poly-intermediate}
\Gamma(m+1)
\le
\frac{T}{2}\log\!\left(1+\frac{m+1}{\lambda^2 T}\right)
+
\frac{C}{2\lambda^2(p-1)}\,(m+1)\, T^{1-p},
\end{equation}
which holds for any choice of $T$.

For $m+1 \geq \lambda^2 T$ (since $\lambda>0$ is a fixed constant),
\begin{equation}\label{eq:asymptotic-m-large-enough}
\log\!\left(1+\frac{m+1}{\lambda^2 T}\right)
\le
\log\!\left(\frac{2(m+1)}{\lambda^2 T}\right)
=
\log(m+1) - \log T + \mathcal{O}(1).
\end{equation}
We now choose
\begin{equation}
T = \left(\frac{m+1}{\log(m+1)}\right)^{1/p}.
\end{equation}

For the first term in~\eqref{eq:gamma-poly-intermediate}, using Eq.~\eqref{eq:asymptotic-m-large-enough}, we obtain
\begin{equation}
\frac{T}{2}\log\!\left(1+\frac{m+1}{\lambda^2 T}\right)
=
\mathcal{O}\!\left(
(m+1)^{1/p} (\log(m+1))^{(p-1)/p}
\right).
\end{equation}
For the second term,
\begin{equation}
(m+1)\, T^{1-p}
=
(m+1) \left(\frac{m+1}{\log(m+1)}\right)^{(1-p)/p}
=
\mathcal{O}\!\left(
(m+1)^{1/p} (\log(m+1))^{(p-1)/p}
\right).
\end{equation}
Both contributions scale in the same way, and therefore
\begin{equation}
\Gamma(m+1)
=
\mathcal{O}\!\left(
(m+1)^{1/p} (\log(m+1))^{(p-1)/p}
\right)
=
\mathcal{O}\!\left(
\bigl((m+1) (\log(m+1))^{p-1}\bigr)^{1/p}
\right).
\end{equation}
Hence there exists a constant $C_4>0$ (depending only on $C$, $p$, and $\lambda$) such that for all sufficiently large $m$
\begin{equation}
\Gamma(m+1) \le C_4 \bigl( (m+1) \, (\log(m+1))^{p-1} \bigr)^{1/p}.
\end{equation}
Since $p>1$, we have $(m+1)^{1/p} (\log(m+1))^{(p-1)/p} = o(m)$, which proves the claim.

\begin{remark}\label{rem:m+1-vs-m}
Since $m+1 \le 2m$ for $m\ge 1$ and $\log(m+1)\le 2\log m$ for $m\ge 2$, the bound above implies
\begin{equation}
\Gamma(m+1) \le 2C_4\bigl(m(\log m)^{p-1}\bigr)^{1/p}
\end{equation}
for all sufficiently large $m$.
\end{remark}

\section{Proof of \Cref{cor:poly_sample}}\label{a:coro_sample}
We begin by recalling the three conditions from \Cref{lem:sufficient_m_K}
that must be simultaneously satisfied:
\begin{align}
m+1 &\ge \frac{36K\,\Gamma(m+1)\,\mathcal{F}_m^2}{\delta\,\varepsilon_{\mathrm{stat}}^2\,\log(1+1/\lambda^2)},
\label{eq:mcond1_ab}\\
m &\ge \frac{18\,\mathcal{F}_m^2}{\varepsilon_{\mathrm{stat}}^2\,B_{\mathrm{net}}(m)},
\label{eq:mcond2_ab}\\
m &\ge \frac{6}{\varepsilon_{\mathrm{stat}}}, \label{eq:mcond3_ab}
\end{align}
where
\begin{equation}
\varepsilon_{\mathrm{stat}} =
\frac{\epsilon(1-\gamma)^2}{4\gamma\sqrt{|\mathcal{A}|\,p_{\max}}},
\qquad
G_K = \max_{1\le k\le K}\|T\hat{Q}_{k-1}\|_{\mathcal{H}_{(\mathcal{D},w)}},
\end{equation}
\begin{equation}
B_{\mathrm{net}}(m) =
\max\!\left\{G_K,\;\frac{\sqrt{m}\,V_{\max}}{\lambda}\right\},
\qquad
\mathcal{F}_m = G_K +
\frac{\tau}{\lambda}\sqrt{2\log\frac{4K|\mathcal{Z}_m|}{\delta}},
\end{equation}
and $|\mathcal{Z}_m| \le
|\mathcal{A}|\bigl(\sqrt{d_{\mathcal{S}}}\;L_{\mathcal{D}}\,B_{\mathrm{net}}(m)\,m +
1\bigr)^{d_{\mathcal{S}}}$.

\medskip
\noindent\textbf{Step 1: Upper bounding $\mathcal{F}_m^2$.}

We first isolate the $m$-dependent part of $\mathcal{F}_m$. We use 
$B_{\mathrm{net}}(m) \le G_K + \sqrt{m}\,V_{\max}/\lambda$ together with the property $(a\,m+b)\le (a+b)\,m$ for $a,b\ge 0$, $m\ge 1$. This gives, for $m\ge 1$:
\begin{equation}
L_{\mathcal{D}}\,B_{\mathrm{net}}(m)\,m + 1
\;\le\; L_{\mathcal{D}}\,(G_K + \sqrt{m}\,V_{\max}/\lambda)\;m +1\;\le\; L_{\mathcal{D}}\bigl((G_K + V_{\max}/\lambda)\;\sqrt{m}\bigr)\,m+1\;\le\;
\bigl(L_{\mathcal{D}}(G_K + V_{\max}/\lambda) + 1\bigr)\,m^{3/2}.
\end{equation}
Defining the $m$-independent constant
\begin{equation}
B_* \;\equiv\; \sqrt{d_{\mathcal{S}}}\;L_{\mathcal{D}}(G_K + V_{\max}/\lambda) + 1,
\end{equation}
we obtain
\begin{equation}
|\mathcal{Z}_m| \;\le\;
|\mathcal{A}|\,B_*^{d_{\mathcal{S}}}\,m^{3d_{\mathcal{S}}/2},
\end{equation}
and therefore
\begin{equation}
\log\frac{4K|\mathcal{Z}_m|}{\delta}
\;\le\;
\underbrace{\log\frac{4K|\mathcal{A}|\,B_*^{d_{\mathcal{S}}}}{\delta}}_{\equiv L}
\;+\; \frac{3d_{\mathcal{S}}}{2}\log m.
\end{equation}
Combining this with $(a+b)^2 \le 2a^2 + 2b^2$, we bound
\begin{equation}
\mathcal{F}_m^2
\;\le\;
2G_K^2 + \frac{2\tau^2}{\lambda^2}\Bigl(2\log\frac{4K|\mathcal{Z}_m|}{\delta}\Bigr)
\;\le\;
\underbrace{2G_K^2 + \frac{4\tau^2}{\lambda^2}L}_{\equiv\,\mathcal{P}}
\;+\;
\underbrace{\frac{6\tau^2
d_{\mathcal{S}}}{\lambda^2}}_{\equiv\,\mathcal{Q}}\,\log m,
\end{equation}
Note that $\mathcal{P}$ and $\mathcal{Q}$ are independent of $m$.

\medskip
\noindent\textbf{Step 2: Reducing condition~\eqref{eq:mcond1_ab} using
polynomial weight decay.}

By \Cref{lem:gamma_poly} (applied under \Cref{ass:poly-eigendecay}),
there exists a constant $C_4>0$, depending only on $C$, $p$, and
$\lambda$, such that for all sufficiently large $m$,
\begin{equation}
\Gamma(m+1) \;\le\; C_4\bigl((m+1)(\log(m+1))^{p-1}\bigr)^{1/p}.
\end{equation}
We apply two elementary inequalities: $m+1\leq 2m$ (valid for $m\ge 1$) and $\log(m+1)\le 2\log m$ (valid for $m\ge 2$), giving
\begin{equation}
\Gamma(m+1) \;\le\; 2C_4\bigl(m(\log m)^{p-1}\bigr)^{1/p},
\end{equation}
and therefore
\begin{equation}
\frac{m+1}{\Gamma(m+1)}
\;\ge\;
\frac{m}{2C_4\bigl(m(\log m)^{p-1}\bigr)^{1/p}}
\;=\;
\frac{1}{2C_4}\left(\frac{m}{\log m}\right)^{(p-1)/p}.
\end{equation}
Setting $\alpha \equiv (p-1)/p \in (0,1)$ and defining the constant
\begin{equation}
C_5 \;\equiv\; \frac{72\,C_4}{\log(1+1/\lambda^2)},
\end{equation}
condition~\eqref{eq:mcond1_ab} is implied by
\begin{equation}\label{eq:reduced_cond1}
\left(\frac{m}{\log m}\right)^\alpha
\;\ge\;
\frac{C_5\,K\,\mathcal{F}_m^2}{\delta\,\varepsilon_{\mathrm{stat}}^2}.
\end{equation}
Substituting the bound $\mathcal{F}_m^2 \le \mathcal{P} +
\mathcal{Q}\log m$ from Step~1, condition~\eqref{eq:reduced_cond1} is
in turn implied by
\begin{equation}\label{eq:inversion_form}
\left(\frac{m}{\log m}\right)^\alpha
\;\ge\;
\frac{C_5\,K\,\mathcal{P}}{\delta\,\varepsilon_{\mathrm{stat}}^2}
\;+\;
\frac{C_5\,K\,\mathcal{Q}}{\delta\,\varepsilon_{\mathrm{stat}}^2}\,\log m.
\end{equation}
This is of the form $\left(\frac{m}{\log m}\right)^\alpha \ge A + B'\log m$ with
\begin{equation}
A \;\equiv\; \frac{C_5\,K\,\mathcal{P}}{\delta\,\varepsilon_{\mathrm{stat}}^2},
\qquad
B' \;\equiv\; \frac{C_5\,K\,\mathcal{Q}}{\delta\,\varepsilon_{\mathrm{stat}}^2}.
\end{equation}
This will be analyzed in Step 4.

\medskip
\noindent\textbf{Step 3: Handling condition~\eqref{eq:mcond2_ab}.}

Since $B_{\mathrm{net}}(m) \ge \sqrt{m}\,V_{\max}/\lambda$ for all
$m\ge 1$, condition~\eqref{eq:mcond2_ab} is implied by
\begin{equation}
m \;\ge\; \frac{18\lambda\,\mathcal{F}_m^2}{\varepsilon_{\mathrm{stat}}^2\,V_{\max}\,\sqrt{m}}.
\end{equation}
Applying $\mathcal{F}_m^2 \le \mathcal{P} +
\mathcal{Q}\log m$, it is enough to guarantee
\begin{equation}
m^{3/2} \;\ge\; \frac{18\lambda\,(\mathcal{P} + \mathcal{Q}\log
m)}{\varepsilon_{\mathrm{stat}}^2\,V_{\max}}.
\end{equation}
Using $\log m \le \sqrt{m}$ for all $m\ge 1$ (natural logarithm), together with the fact that $x\ge a+b$ is implied by both $x\ge 2a$ and $x\ge 2b$ holding simultaneously,
\begin{equation}
m^{3/2} \;\ge\;
\frac{36\lambda\,\mathcal{P}}{\varepsilon_{\mathrm{stat}}^2\,V_{\max}}
\qquad\text{and}\qquad
m \;\ge\; \frac{36\lambda\,\mathcal{Q}}{\varepsilon_{\mathrm{stat}}^2\,V_{\max}}.
\end{equation}
Then 
\begin{equation}\label{eq:m_disc}
m \;\ge\; \max\!\left\{
  \left(\frac{36\lambda\,\mathcal{P}}{\varepsilon_{\mathrm{stat}}^2\,V_{\max}}\right)^{2/3},\;
  \frac{36\lambda\,\mathcal{Q}}{\varepsilon_{\mathrm{stat}}^2\,V_{\max}}
\right\}, 
\end{equation}
implies condition~\eqref{eq:mcond2_ab}.

\medskip
\noindent\textbf{Step 4: An inversion lemma.}

We need to invert condition~\eqref{eq:inversion_form}, which has the form $(m/\log m)^\alpha \ge A + B'\log m$. 

We will repeatedly use the following elementary relation.
\begin{lemma}\label{lemm:tool_inv}
    For any $X\ge e$, if
    \begin{equation}
    m\ge 2X\log X,
    \end{equation}
    then 
    \begin{equation}
    \frac{m}{\log m}\ge X
    \end{equation}
    holds.
\end{lemma}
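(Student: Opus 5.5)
The plan is to use the fact that $x\mapsto x/\log x$ is increasing on $[e,\infty)$, which reduces the claim to checking a single point. Because $X\ge e$, we have $2X\log X\ge 2e>e$. So every $m\ge 2X\log X$ also satisfies $m\ge e$, and monotonicity then gives
\begin{equation}
\frac{m}{\log m}\;\ge\;\frac{2X\log X}{\log(2X\log X)} .
\end{equation}
To confirm monotonicity, I will note that the derivative of $x/\log x$ is $(\log x-1)/(\log x)^2$, which is nonnegative for $x\ge e$.

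It then suffices to show $\frac{2X\log X}{\log(2X\log X)}\ge X$, which is the same as
\begin{equation}
2\log X\;\ge\;\log 2+\log X+\log\log X ,
\end{equation}
that is, $\log X\ge \log 2+\log\log X$. Writing $u=\log X\ge 1$, this becomes $u\ge\log(2u)$, or equivalently $e^u\ge 2u$.

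The final step is to check $e^u\ge 2u$ for $u\ge 1$. The function $h(u)=e^u-2u$ has its minimum at $u=\log 2$, where it equals $2-2\log 2>0$. Hence $e^u>2u$ for all real $u$, and in particular for $u\ge 1$.

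There is no real obstacle in this argument. The only point needing care is that the monotonicity step requires both $m$ and $2X\log X$ to lie in $[e,\infty)$, and the assumption $X\ge e$ is what guarantees this.
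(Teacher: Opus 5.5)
Your proof is correct and follows the paper's argument: monotonicity of $x/\log x$ on $[e,\infty)$ reduces the claim to $X\ge 2\log X$. The only cosmetic difference is in that last step: you verify it as $e^u\ge 2u$ via the global minimum at $u=\log 2$, whereas the paper uses that $X-2\log X$ is increasing for $X>2$ and positive at $X=e$.
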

\begin{proof}
    Since $X\ge e$, we have $2X\log X\geq 2e \log e=2e>e$, and therefore $m\ge 2X\log X>e$.

    The function $f(m)=\frac{m}{\log m}$ is increasing on $(e,\infty)$. Since $m\ge 2X\log X$, we have
    \begin{equation}
    \frac{m}{\log m}\ge \frac{2X \log X}{\log(2X\log X)}.
    \end{equation}
    Therefore, it suffices to show that $\frac{2X \log X}{\log(2X\log X)}\ge X$. For $X>0$, this is equivalent to
    \begin{equation}
    \log X\ge \log 2+\log(\log X)
    \end{equation}
    which corresponds to $X\ge 2\log X$. The function $g(X)=X-2\log X$ is increasing for $X> 2$, and $g(e)=e-2>0$; therefore $X-2\log X>0$ in this regime. This proves $2\log X\ge \log(2X\log X)$ and consequently
    \begin{equation}
    \frac{m}{\log m}\ge \frac{2X \log X}{\log(2X\log X)}.
    \end{equation}
\end{proof}
Using Lemma~\ref{lemm:tool_inv}, we establish the following inversion result for condition~\eqref{eq:mcond1_ab}.

\begin{lemma}[Inversion lemma]\label{lem:inversion}
Let $\alpha\in(0,1)$, $A\ge e$, and $B'\ge 0$. Define
\begin{equation}
A_{\mathrm{inv}} \;\equiv\; \max\!\left\{
  \left(2A\right)^{1/\alpha},\;
  \left(2X^*\log X^*\right)^{\frac{\alpha+1}{\alpha}}
\right\},
\end{equation}
with $X^*=\max\left\{e,\frac{\alpha+1}{\alpha}(2B')^{\frac{1}{\alpha+1}}\right\}$. If $m \ge 2A_{\mathrm{inv}}\log A_{\mathrm{inv}}$, then
\begin{equation}
\left(\frac{m}{\log m}\right)^\alpha \;\ge\; A + B'\log m.
\end{equation}
\end{lemma}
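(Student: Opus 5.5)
The plan is to reduce the statement to two applications of Lemma~\ref{lemm:tool_inv}, using the elementary fact that $x\ge a+b$ holds whenever both $x\ge 2a$ and $x\ge 2b$ hold, with $a=A$ and $b=B'\log m$.

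\textbf{Step 1 ($m/\log m\ge A_{\mathrm{inv}}$).} First I check that $A_{\mathrm{inv}}\ge e$. Since $A\ge e$ and $\alpha\in(0,1)$, we have $(2A)^{1/\alpha}\ge 2A\ge 2e>e$. Lemma~\ref{lemm:tool_inv} with $X=A_{\mathrm{inv}}$ then turns the hypothesis $m\ge 2A_{\mathrm{inv}}\log A_{\mathrm{inv}}$ into
\begin{equation*}
\frac{m}{\log m}\ \ge\ A_{\mathrm{inv}}.
\end{equation*}
As recorded in that lemma's proof, the hypothesis also gives $m>e$, so $\log m>1$ and $m\ge m/\log m\ge A_{\mathrm{inv}}$.

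\textbf{Step 2 (the constant term).} Since $A_{\mathrm{inv}}\ge (2A)^{1/\alpha}$, Step~1 gives
\begin{equation*}
\left(\frac{m}{\log m}\right)^\alpha\ \ge\ A_{\mathrm{inv}}^\alpha\ \ge\ 2A.
\end{equation*}

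\textbf{Step 3 (the logarithmic term).} We need $(m/\log m)^\alpha\ge 2B'\log m$. If $B'=0$ this is trivial. Otherwise it is equivalent to $m^\alpha\ge 2B'(\log m)^{1+\alpha}$. Taking $(1+\alpha)$-th roots, this becomes
\begin{equation*}
m^{\alpha/(\alpha+1)}\ \ge\ (2B')^{1/(\alpha+1)}\log m .
\end{equation*}
Substitute $y\equiv m^{\alpha/(\alpha+1)}$, so that $\log m=\frac{\alpha+1}{\alpha}\log y$. The condition then reads
\begin{equation*}
\frac{y}{\log y}\ \ge\ \frac{\alpha+1}{\alpha}(2B')^{1/(\alpha+1)}.
\end{equation*}
Because $X^*$ dominates the right-hand side, it suffices to show $y/\log y\ge X^*$. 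Since $X^*\ge e$, Lemma~\ref{lemm:tool_inv} applied to $y$ with $X=X^*$ reduces this to showing $y\ge 2X^*\log X^*$. That in turn is equivalent to
\begin{equation*}
m\ \ge\ (2X^*\log X^*)^{(\alpha+1)/\alpha},
\end{equation*}
which follows from $m\ge A_{\mathrm{inv}}$ (Step~1) and the second entry in the definition of $A_{\mathrm{inv}}$.

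\textbf{Combining.} Steps~2 and~3 give $(m/\log m)^\alpha\ge \max\{2A,\,2B'\log m\}\ge A+B'\log m$, which is the claim.

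The only delicate point is Step~3: the change of variables $y=m^{\alpha/(\alpha+1)}$ is what makes Lemma~\ref{lemm:tool_inv} applicable a second time. It also requires carefully checking the side conditions, namely that $X^*\ge e$ (true by its definition) and that $\log m>0$ (true from Step~1) so that the root-taking steps are legitimate. Everything else is monotonicity bookkeeping.
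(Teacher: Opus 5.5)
Your proof is correct and follows the paper's argument essentially step for step. You split the target via $\max(2A,2B'\log m)\ge A+B'\log m$, obtain the constant part from Lemma~\ref{lemm:tool_inv} with $X=A_{\mathrm{inv}}$, and obtain the logarithmic part by substituting $m^{\alpha/(\alpha+1)}$ and applying Lemma~\ref{lemm:tool_inv} again with $X=X^*$; your explicit check that $A_{\mathrm{inv}}\ge(2A)^{1/\alpha}>e$ before the first application is a small tidy-up, since the paper only verifies $A_{\mathrm{inv}}>e$ later, in Part 2.
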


\begin{proof}
We use again that $x\ge a+b$ follows from both $x\ge 2a$ and $x\ge 2b$ holding. The proof proceeds in two parts.

\medskip
\noindent\textbf{Part 1: $(m/\log m)^\alpha \ge 2A$.}

Since $A_{\mathrm{inv}} \ge (2A)^{1/\alpha}$, the choice $m\ge
2A_{\mathrm{inv}}\log A_{\mathrm{inv}}$ gives, by Lemma~\ref{lemm:tool_inv},
$m/\log m \ge A_{\mathrm{inv}} \ge (2A)^{1/\alpha}$, hence $(m/\log
m)^\alpha \ge 2A$.

\medskip
\noindent\textbf{Part 2: $(m/\log m)^\alpha \ge 2B'\log m$.}

Since $A_{\rm inv}\ge  (2X^*\log X^*)^{\frac{\alpha+1}{\alpha}}$, and since $X^*\ge e$ by definition and $(\alpha+1)/\alpha >1$, we have $A_{\rm inv}\ge (2e\log e)^{\frac{\alpha+1}{\alpha}}>2e>e$. Therefore,
\begin{equation}
m\ge 2A_{\rm inv}\log A_{\rm inv}>2A_{\rm inv}>A_{\rm inv}.
\end{equation}
Since $m>A_{\rm inv}$ and $A_{\rm inv}\ge  (2X^*\log X^*)^{\frac{\alpha+1}{\alpha}}$, we conclude
\begin{equation}
m\ge (2X^*\log X^*)^{\frac{\alpha+1}{\alpha}}.
\end{equation}
Defining $m'=m^{\frac{\alpha}{\alpha+1}}$, this condition becomes $m'\ge 2X^*\log X^*$. Applying Lemma~\ref{lemm:tool_inv} yields
\begin{equation}
\frac{m'}{\log m'}\ge X^*
\end{equation}
which, after substituting back $m'$ and rearranging, yields
\begin{equation}
\frac{m}{(\log m)^{\frac{\alpha+1}{\alpha}}}\ge \left(\frac{\alpha}{\alpha+1}X^*\right)^{\frac{\alpha+1}{\alpha}}.
\end{equation}
By definition, we have 
\begin{equation}
X^*\geq \frac{\alpha+1}{\alpha}(2B')^{\frac{1}{\alpha+1}},
\end{equation}
so plugging this here in,
\begin{equation}
\frac{m}{(\log m)^{\frac{\alpha+1}{\alpha}}}\geq (2B')^{1/\alpha}
\end{equation}
Raising both sides to the power $\alpha$, we obtain $m^\alpha/(\log m)^{\alpha+1}\geq 2B'$, which for $m>e$ implies
\begin{equation}
\frac{m^\alpha}{(\log m)^\alpha}\geq 2B'\log m
\end{equation}
as desired for Part 2.
\\
\\
\noindent\textbf{Combining Parts 1 and 2.}

Since both hold simultaneously,
\begin{equation}
  \left(\frac{m}{\log m}\right)^\alpha
  \;\ge\; \max(2A,\,2B'\log m)
  \;\ge\; A + B'\log m,
\end{equation}
where the last step uses $\max(2a,2b)\ge a+b$ for $a,b\ge 0$.
\end{proof}

\medskip
\noindent\textbf{Step 5: Choosing $m$ and verifying all conditions.}

We apply \Cref{lem:inversion} with $A = C_5 K\mathcal{P}/(\delta\varepsilon_{\mathrm{stat}}^2)$ and $B' = C_5 K\mathcal{Q}/(\delta\varepsilon_{\mathrm{stat}}^2)$. This gives
\begin{equation}
A_{\mathrm{inv}} \;\equiv\; \max\!\left\{
  \left(\frac{2C_5
K\mathcal{P}}{\delta\varepsilon_{\mathrm{stat}}^2}\right)^{1/\alpha},\;
  \left(2X^*\log X^*\right)^{(\alpha+1)/\alpha}
\right\}.
\end{equation}
where
\begin{equation}
X^*=\max\left\{e,\frac{\alpha+1}{\alpha}(2B')^{\frac{1}{\alpha+1}}\right\}.
\end{equation}
We assume $A \ge e$; if not, we replace $A$ by $e$ in the definition of $A_{\mathrm{inv}}$ (this only strengthens the sufficient condition), so the first branch $(2A)^{1/\alpha}$ reduces to a constant depending only on~$p$, and $A_{\mathrm{inv}}$ is then determined by the second branch involving~$B'$. Set
\begin{equation}\label{eq:final_m_correct}
m \;\equiv\; \max\!\left\{
  \frac{6}{\varepsilon_{\mathrm{stat}}},\quad
  \max\!\left\{
  \left(\frac{36\lambda\,\mathcal{P}}{\varepsilon_{\mathrm{stat}}^2\,V_{\max}}\right)^{2/3},\;
  \frac{36\lambda\,\mathcal{Q}}{\varepsilon_{\mathrm{stat}}^2\,V_{\max}}
\right\},\quad
  2A_{\mathrm{inv}}\log A_{\mathrm{inv}}
\right\}.
\end{equation}

\noindent\textit{Condition~\eqref{eq:mcond3_ab}.}
Satisfied directly by the first term in the maximum.

\noindent\textit{Condition~\eqref{eq:mcond2_ab}.}
Satisfied as established in Step~3.

\noindent\textit{Condition~\eqref{eq:mcond1_ab}.}
Since $m\ge 2A_{\mathrm{inv}}\log A_{\mathrm{inv}}$, \Cref{lem:inversion} gives $(m/\log m)^\alpha \ge A + B'\log m$, which by Step~2 implies condition~\eqref{eq:mcond1_ab}.

\medskip
\noindent\textbf{Step 6: Asymptotic form of the sample complexity.}
We now derive an explicit asymptotic expression for the sufficient sample size~\eqref{eq:final_m_correct} $m$. The three terms in the maximum scale as
\begin{equation}
\mathcal{O}(\varepsilon_{\mathrm{stat}}^{-1}),\quad \mathcal{O}\!\left(
\max\left\{
\frac{\mathcal P^{2/3}}{\varepsilon_{\mathrm{stat}}^{4/3}},
\frac{\mathcal Q}{\varepsilon_{\mathrm{stat}}^2}
\right\}
\right), \quad 2A_{\mathrm{inv}}\log A_{\mathrm{inv}}.
\end{equation}
Since
\begin{equation}
A=\widetilde{\Theta}(\varepsilon_{\mathrm{stat}}^{-2}\delta^{-1}),
\qquad
B'=\widetilde{\Theta}(\varepsilon_{\mathrm{stat}}^{-2}\delta^{-1}),
\end{equation}
where the $\widetilde{\Theta}$ absorbs the factors $K\mathcal{P}$ and $K\mathcal{Q}$, which depend polylogarithmically on $1/\epsilon$ and $1/\delta$ through $K=\mathcal{O}(\log(1/\epsilon))$ and $L=\log(4K|\mathcal{A}|B_*^{d_{\mathcal{S}}}/\delta)$, the definition of $A_{\mathrm{inv}}$ yields
\begin{equation}
A_{\mathrm{inv}}
=
\widetilde{\Theta}
\!\left(
\varepsilon_{\mathrm{stat}}^{-2/\alpha}
\delta^{-1/\alpha}
\right),
\end{equation}
where $\widetilde{\Theta}$ hides only polylogarithmic factors. Recalling that $\alpha=(p-1)/p\in(0,1)$, we have $\frac{2}{\alpha}=\frac{2p}{p-1}>2$. Hence,
\begin{equation}
2A_{\mathrm{inv}}\log A_{\mathrm{inv}}
=
\widetilde{\Theta}\!\left(
\varepsilon_{\mathrm{stat}}^{-2p/(p-1)}
\delta^{-p/(p-1)}
\right),
\end{equation}
which grows strictly faster, both as $\varepsilon_{\mathrm{stat}}\to0$ and as $\delta\to0$, than the remaining two terms, whose largest dependence on $\varepsilon_{\mathrm{stat}}$ is $\varepsilon_{\mathrm{stat}}^{-2}$ and which carry no explicit inverse powers of $\delta$. Therefore, $2A_{\mathrm{inv}}\log A_{\mathrm{inv}}$ determines the asymptotic sample complexity.

The expression for $A_{\mathrm{inv}}$ in Lemma~\ref{lem:inversion} is the maximum of two quantities,
\begin{equation}
A_{\mathrm{inv}}
=
\max\!\left\{
\left(2A\right)^{1/\alpha},
\;
\left(2X^*\log X^*\right)^{(\alpha+1)/\alpha}
\right\}.
\end{equation}
Consequently, the asymptotic behaviour of $2A_{\mathrm{inv}}\log A_{\mathrm{inv}}$ is determined by whichever of these two terms attains the maximum. We therefore analyze each possibility separately.

\medskip
\noindent\textit{First branch.}
Taking
\begin{equation}
A_{\mathrm{inv}} = (2A)^{1/\alpha} = \left( \frac{2C_5K\mathcal P} {\delta\varepsilon_{\mathrm{stat}}^2} \right)^{p/(p-1)}.
\end{equation}
Then
\begin{equation}
2A_{\mathrm{inv}}\log A_{\mathrm{inv}} = \widetilde{\mathcal{O}}\!\left(\left(\frac{K\mathcal P}{\delta\varepsilon_{\mathrm{stat}}^2}\right)^{p/(p-1)}\right).
\end{equation}

\medskip
\noindent\textit{Second branch.}
Now we take
\begin{equation}
A_{\mathrm{inv}} = (2X^*\log X^*)^{(\alpha+1)/\alpha},
\end{equation}
where
\begin{equation}
X^* = \max\!\left\{ e,\, \frac{\alpha+1}{\alpha}(2B')^{1/(\alpha+1)} \right\}, \qquad B' = \frac{C_5K\mathcal Q} {\delta\varepsilon_{\mathrm{stat}}^2}.
\end{equation}
Since $X^*=\widetilde{\mathcal{O}}\!\left(
(B')^{1/(\alpha+1)}\right)$, we obtain $A_{\mathrm{inv}} = \widetilde O\!\left( (B')^{1/\alpha} \right)$, and consequently
\begin{equation}
2A_{\mathrm{inv}}\log A_{\mathrm{inv}} = \widetilde O\!\left( \left( \frac{K\mathcal Q} {\delta\varepsilon_{\mathrm{stat}}^2} \right)^{p/(p-1)} \right).
\end{equation}
Taking into account that $\mathcal Q = \frac{6\tau^2d_{\mathcal S}}{\lambda^2}$, this becomes
\begin{equation}
2A_{\mathrm{inv}}\log A_{\mathrm{inv}} =
\widetilde O\!\left(
\left(
\frac{Kd_{\mathcal S}}
{\delta\varepsilon_{\mathrm{stat}}^2}
\right)^{p/(p-1)}
\right),
\end{equation}
where constants depending only on $\tau,\lambda,C,p,\gamma$ have been absorbed into the $\widetilde{\mathcal{O}}$ notation.

\medskip

Combining both branches,
\begin{equation}
2A_{\mathrm{inv}}\log A_{\mathrm{inv}} = \widetilde{\mathcal{O}}\!\left( \left( \frac{K} {\delta\varepsilon_{\mathrm{stat}}^2} \right)^{p/(p-1)} \max\!\left\{ \mathcal P^{p/(p-1)}, d_{\mathcal S}^{p/(p-1)} \right\} \right).
\end{equation}
Using $K=\mathcal{O}\left(\log\frac1\epsilon\right)$, the factor $K^{p/(p-1)}$ is polylogarithmic in $1/\epsilon$ and is therefore absorbed into the $\widetilde{\mathcal{O}}$ notation. Moreover, as $\varepsilon_{\mathrm{stat}}= \Theta\!\left( \frac{\epsilon} {\sqrt{|\mathcal A|\,p_{\max}}} \right)$, therefore
\begin{equation}
\varepsilon_{\mathrm{stat}}^{-2}=\Theta\!\left(\frac{|\mathcal A|\,p_{\max}}{\epsilon^2}\right).
\end{equation}
Finally, substituting the definitions
\begin{equation}
\mathcal P=2G_K^2+\frac{4\tau^2}{\lambda^2}\log\frac{4K|\mathcal A|B_*^{d_{\mathcal S}}}{\delta}, \qquad B_* = \sqrt{d_{\mathcal{S}}}\,L_{\mathcal D}\left(G_K+\frac{V_{\max}}{\lambda}\right)+1,
\end{equation}
yields
\begin{equation}
m = \widetilde{\mathcal{O}}\left( \left( \frac{|\mathcal A|\,p_{\max}} {\delta\,\epsilon^2} \right)^{p/(p-1)} \max\!\left\{ \left( G_K^2+ \log\frac{4K|\mathcal A|B_*^{d_{\mathcal S}}}{\delta} \right)^{p/(p-1)}, \; d_{\mathcal S}^{p/(p-1)} \right\} \right),
\end{equation}
where the $\widetilde{\mathcal{O}}$ notation hides only polylogarithmic factors in $1/\epsilon$, $1/\delta$, and constants depending only on $\gamma,\lambda,\tau,C,p$. Using the inequality 
\begin{equation}
(G_K^2 + L)^{p/(p-1)} \le 2^{p/(p-1)}\max\{G_K^2, L\}^{p/(p-1)}
\end{equation}
which follows from the elementary bound $a+b\leq 2\max \{a,b\}$, we obtain the bound stated in \Cref{cor:poly_sample}.

\section{Relaxing Assumption~\ref{as:func_classes}}
\label{ap:relaxing_closure}
In the main analysis we imposed the closure condition~\eqref{eq:stronger}, which requires that the image of the clipped PQC-RKHS under the Bellman operator $T$ is contained in $\mathcal{H}_{(\mathcal{D},w)}$. This condition eliminated approximation error entirely, since the ideal regression target $T\hat{Q}_{k-1}$ at each iteration was directly realizable in the hypothesis class. Here we show how the analysis changes when this assumption is dropped, and identify the additional error term that must be controlled.

\subsection{Setup: Two Distinct Function Classes}
Suppose that instead of a single RKHS $\mathcal{H}_{(\mathcal{D},w)}$, we now work with \emph{two} function classes:
\begin{itemize}
    \item $\mathcal{H}_{(\mathcal{D},w)}$ -- the hypothesis class over which Algorithm~\ref{alg:dequantization} optimizes at each iteration (the \textit{model} class);
    \item $\mathcal{G}$ -- a (possibly larger) function class such that $T\hat{Q}_{k-1}\in\mathcal{G}$ for all $k$, i.e., the class that contains the Bellman updates of all iterates (the \textit{target} class).
\end{itemize}
Under Assumption~\ref{as:func_classes} we had $\mathcal{G} = \mathcal{H}_{(\mathcal{D},w)}$. Without this assumption, $\mathcal{G}$ may contain functions not realizable in $\mathcal{H}_{(\mathcal{D},w)}$. As discussed in~\cite{chen2019information}, finite-sample guarantees in this setting typically require either Bellman completeness (i.e., Assumption~\ref{as:func_classes}) or an explicit approximation-error term in the bound.

\subsection{Decomposing the Per-Iteration Error}
At each iteration $k$, define the best approximation of the Bellman target within the hypothesis class:
\begin{equation}
\bar{g}_k \equiv \operatorname{argmin}_{f \in \mathcal{H}_{(\mathcal{D},w)}} \mathbb{E}_{z \sim U(\mathcal{S}\times\mathcal{A})}\bigl[|T\hat{Q}_{k-1}(z) - f(z)|^2\bigr].
\end{equation}
Inserting $\pm \bar{g}_k$ and applying the triangle inequality in $L^2(U(\mathcal{S}\times\mathcal{A}))$ gives:
\begin{align}
&\left(\mathbb{E}_{z \sim U(\mathcal{S}\times\mathcal{A})}\bigl[|T\hat{Q}_{k-1}(z) - \hat{Q}_k(z)|^2\bigr]\right)^{1/2} \nonumber \\
&\qquad \leq \underbrace{\left(\mathbb{E}_{z \sim U(\mathcal{S}\times\mathcal{A})}\bigl[|T\hat{Q}_{k-1}(z) - \bar{g}_k(z)|^2\bigr]\right)^{1/2}}_{\text{approximation error} \;\leq\; \beta_K}
+ \underbrace{\left(\mathbb{E}_{z \sim U(\mathcal{S}\times\mathcal{A})}\bigl[|\bar{g}_k(z) - \hat{Q}_k(z)|^2\bigr]\right)^{1/2}}_{\text{estimation error: } \varepsilon_{\mathrm{est}}}. \label{eq:two-class-decomp}
\end{align}
The approximation error at step $k$ is bounded by $\beta_K$ by definition in Equation~\eqref{eq:beta_K}.

\paragraph{Estimation error.}
This term measures how well the KRR estimator $\hat{Q}_k$ recovers the best realizable approximation $\bar{g}_k \in \mathcal{H}_{(\mathcal{D},w)}$ from $m$ samples. Since $\bar{g}_k\in\mathcal{H}_{(\mathcal{D},w)}$ by construction, the regression problem at iteration $k$ is now realizable (the target $\bar{g}_k$ lies in the hypothesis class), and the analysis of Appendix~\ref{a:proof_lemm_mk} applies verbatim with $T\hat{Q}_{k-1}$ replaced by $\bar{g}_k$. In particular, the bounds of Lemma~\ref{lem:sufficient_m_K} hold with $G_K = \max_{1\le k\le K}\|T\hat{Q}_{k-1}\|_{\mathcal{H}_{(\mathcal{D},w)}}$ replaced by $\max_{1\le k\le K}\|\bar{g}_k\|_{\mathcal{H}_{(\mathcal{D},w)}}$.

\paragraph{Approximation error.} This term quantifies the unavoidable mismatch between the Bellman target $T\hat{Q}_{k-1}\in\mathcal{G}$ and the hypothesis class $\mathcal{H}_{(\mathcal{D},w)}$. Let us define 
\begin{equation}
    \beta_K \equiv \max_{1 \leq k \leq K} \inf_{f \in \mathcal{H}_{(\mathcal{D},w)}} \left(\mathbb{E}_{z \sim U(\mathcal{S} \times \mathcal{A})}\bigl[|T\hat{Q}_{k-1}(z) - f(z)|^2\bigr]\right)^{1/2}
    \label{eq:beta_K}
\end{equation}
be the worst-case $L^2(U(\mathcal{S}\times\mathcal{A}))$ approximation error of the Bellman targets $\{T\hat{Q}_{k-1}\}_{k=1}^K$ within $\mathcal{H}_{(\mathcal{D},w)}$. This coefficient is the main object of the relaxed analysis:
\begin{itemize}
    \item If Assumption~\ref{as:func_classes} holds (Bellman closure), then $\beta_K = 0$ and we recover the original theorem.
    \item If Assumption~\ref{as:func_classes} fails, $\beta_K > 0$ is an irreducible approximation bias that limits the achievable accuracy regardless of sample size or number of iterations.
\end{itemize}
In particular, $\beta_K$ depends on the actual iterates $\hat{Q}_0, \ldots, \hat{Q}_{K-1}$ produced by the algorithm, not merely on the function classes $\mathcal{H}_{(\mathcal{D},w)}$ and $\mathcal{G}$.

\subsection{Propagating the Error through FQI}

Substituting the decomposition~\eqref{eq:two-class-decomp} into the FQI error propagation, the total suboptimality bound becomes:

\begin{lemma}[Relaxed suboptimality bound]\label{thm:relaxed}
Under the same conditions as Theorem~\ref{th:fqi_error_propagation} but without Assumption~\ref{as:func_classes}, we have
\begin{equation}
    \mathbb{E}_{z \sim U(\mathcal{S}\times\mathcal{A})}\bigl[|Q^*(z) - Q^{\pi_K}(z)|\bigr]
    \;\leq\;
    \frac{2\phi_{U,U}\gamma}{(1-\gamma)^2}\bigl(\varepsilon_{\mathrm{est}} + \beta_K\bigr)
    + \frac{4\gamma^{K+1}}{(1-\gamma)^2}R_{\max},
    \label{eq:relaxed-bound}
\end{equation}
where $\varepsilon_{\mathrm{est}} = \max_{1\le k\le K}\left(\mathbb{E}_{z\sim U(\mathcal{S}\times\mathcal{A})}[|\bar{g}_k(z) - \hat{Q}_k(z)|^2]\right)^{1/2}$.
\end{lemma}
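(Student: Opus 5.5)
The plan is to apply Theorem~\ref{th:fqi_error_propagation} directly, with $\mu=\sigma=U(\mathcal{S}\times\mathcal{A})$, and then bound the per-iteration regression error appearing there using the decomposition~\eqref{eq:two-class-decomp}. The theorem makes no use of Assumption~\ref{as:func_classes}. It is a purely MDP-level statement about any sequence $\{\hat{Q}_k\}$ produced by FQI, together with the greedy policy induced by the final iterate. It therefore holds verbatim in the relaxed setting, and it gives
\begin{equation*}
\mathcal{L}(Q^{\pi_K}) \le \frac{2\phi_{U,U}\gamma}{(1-\gamma)^2}\max_{1\le k\le K}\Bigl(\mathbb{E}_{z}\bigl[|\hat{Q}_k(z)-T\hat{Q}_{k-1}(z)|^2\bigr]\Bigr)^{1/2} + \frac{4\gamma^{K+1}}{(1-\gamma)^2}R_{\max}.
\end{equation*}
The concentrability coefficient $\phi_{U,U}$ is unaffected by this change. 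When a finite bound is needed, we may invoke Lemma~\ref{lem:concentrability-UU}, which relies only on Assumption~\ref{ass:bounded-density}.

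Next, fix $k$. We insert the best $L^2(U)$ approximation $\bar g_k$ and apply Minkowski's inequality in $L^2(U(\mathcal{S}\times\mathcal{A}))$, exactly as in~\eqref{eq:two-class-decomp}. This bounds the $k$-th root-mean-square error by the sum of two terms, $\|T\hat{Q}_{k-1}-\bar g_k\|_{L^2(U)}$ and $\|\bar g_k-\hat{Q}_k\|_{L^2(U)}$. The first term is at most $\beta_K$ by the definition~\eqref{eq:beta_K}. The second is at most $\varepsilon_{\mathrm{est}}$ by the definition given in the lemma. We then take the maximum over $k$, using $\max_k(a_k+b_k)\le \max_k a_k+\max_k b_k$, and substitute the result into the display above. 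This yields~\eqref{eq:relaxed-bound}.

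The only point requiring care is whether $\bar g_k$ actually exists as a minimizer. The argmin might fail to be attained if $\mathcal{H}_{(\mathcal{D},w)}$ were not closed in $L^2(U)$. Here, however, the PQC-RKHS is finite-dimensional, with dimension at most $2M+1$. A finite-dimensional subspace of $L^2(U)$ is closed, so the orthogonal projection of the bounded function $T\hat{Q}_{k-1}$ onto it exists and attains the infimum in~\eqref{eq:beta_K}. If one prefers to avoid this argument, one can take $\bar g_k$ to be any $\eta$-near-minimizer and let $\eta\to0$. I expect no genuine obstacle in this lemma, which is a bookkeeping step. The real work lies in bounding $\varepsilon_{\mathrm{est}}$ afterwards, where the labels have conditional mean $T\hat{Q}_{k-1}$ rather than $\bar g_k$, but that analysis is not part of the present statement.
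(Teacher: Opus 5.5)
Your proposal is correct and follows the paper's own argument: apply Theorem~\ref{th:fqi_error_propagation} with $\mu=\sigma=U(\mathcal{S}\times\mathcal{A})$, then bound each per-iteration $L^2(U)$ error through the triangle-inequality decomposition~\eqref{eq:two-class-decomp} by $\beta_K+\varepsilon_{\mathrm{est}}$, and finally take the maximum over $k$. Your observation that $\bar g_k$ exists because $\mathcal{H}_{(\mathcal{D},w)}$ is finite-dimensional, and hence closed in $L^2(U)$, fills in a detail the paper leaves implicit; note that $\bar g_k$ need only exist, not be unique, for the bound to hold.
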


\noindent When Assumption~\ref{as:func_classes} holds, $\beta_K = 0$ and~\eqref{eq:relaxed-bound} reduces to the bound of Theorem~\ref{th:fqi_error_propagation}. More generally, the bound decomposes cleanly into three terms: a statistical error $\varepsilon_{\mathrm{est}}$ controlled by the sample size $m$, an irreducible approximation bias $\beta_K$ controlled by the expressiveness of $\mathcal{H}_{(\mathcal{D},w)}$ relative to the Bellman targets, and an algorithmic error decaying geometrically in $K$.

In particular, for any target accuracy $\varepsilon > 0$, one can choose $m$ and $K$ so that the estimation and finite-iteration terms are each at most $\varepsilon/2$ as we do in Lemma~\ref{lem:sufficient_m_K}, yielding
\begin{equation}
\mathbb{E}_{z \sim U(\mathcal{S}\times\mathcal{A})}\bigl[|Q^*(z) - Q^{\pi_K}(z)|\bigr]
\;\leq\;
\varepsilon + \frac{2\phi_{U,U}\gamma}{(1-\gamma)^2}\beta_K.
\end{equation}
The remaining term $\frac{2\phi_{U,U}\gamma}{(1-\gamma)^2}\beta_K$ is irreducible: it cannot be driven to zero by increasing $m$ or $K$ alone. The theorem is therefore meaningful only when $\beta_K$ is small.

\subsection{A Conservative Upper Bound on \texorpdfstring{$\beta_K$}{betaK}}

In general, $\beta_K$ depends on the algorithm's iterates and must be controlled by structural assumptions on the problem. We record one simple, if crude, upper bound for completeness.

\begin{proposition}\label{prop:beta_crude}
Suppose $Q^* \in \mathcal{H}_{(\mathcal{D},w)}$ and that the iterates satisfy $\|\hat{Q}_{k-1}\|_\infty \leq V_{\max}$ for all $k$. Then
\begin{equation}
    \beta_K \;\leq\; 2\gamma V_{\max}.
    \label{eq:beta_crude}
\end{equation}
\end{proposition}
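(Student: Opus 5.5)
The natural approach is to use $Q^*$ itself as the comparison function in the infimum defining $\beta_K$ in \eqref{eq:beta_K}. Since $Q^*\in\mathcal{H}_{(\mathcal{D},w)}$ by hypothesis, for every $k$ we have
\begin{equation}
\inf_{f\in\mathcal{H}_{(\mathcal{D},w)}}\left(\mathbb{E}_{z\sim U(\mathcal{S}\times\mathcal{A})}\bigl[|T\hat{Q}_{k-1}(z)-f(z)|^2\bigr]\right)^{1/2}\le\left(\mathbb{E}_{z\sim U(\mathcal{S}\times\mathcal{A})}\bigl[|T\hat{Q}_{k-1}(z)-Q^*(z)|^2\bigr]\right)^{1/2}.
\end{equation}
The right-hand side is at most the sup-norm $\|T\hat{Q}_{k-1}-Q^*\|_\infty$. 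This reduces the problem to a pointwise bound that holds uniformly in $k$.

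Next, the fixed-point property $Q^*=TQ^*$ from \eqref{eq:bellman-opt-1} gives $T\hat{Q}_{k-1}-Q^*=T\hat{Q}_{k-1}-TQ^*$. Writing out the Bellman optimality operator \eqref{eq:Bellman_optimality}, the reward terms $r(s,a)$ cancel. We are left with
\begin{equation}
\gamma\,\mathbb{E}_{s'\sim P(s,a)}\Bigl[\max_{a'}\hat{Q}_{k-1}(s',a')-\max_{a'}Q^*(s',a')\Bigr].
\end{equation}
Using $|\max_{a'} f-\max_{a'} g|\le\max_{a'}|f-g|$, this is bounded pointwise by $\gamma\|\hat{Q}_{k-1}-Q^*\|_\infty$. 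This is simply the $\gamma$-contraction of $T$ in sup-norm.

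Finally, the triangle inequality gives $\|\hat{Q}_{k-1}-Q^*\|_\infty\le\|\hat{Q}_{k-1}\|_\infty+\|Q^*\|_\infty$. The first term is at most $V_{\max}$ by the hypothesis on the iterates, and the second is at most $V_{\max}$ since $\|Q^\pi\|_\infty\le V_{\max}$ for all $\pi$ and hence also for the supremum $Q^*$. Combining the steps gives $\beta_K\le 2\gamma V_{\max}$ for each $k$, and therefore also after taking the maximum over $k$.

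There is no real obstacle in this argument. The only points needing care are that $Q^*$ is an admissible competitor in the infimum, which is exactly where the hypothesis $Q^*\in\mathcal{H}_{(\mathcal{D},w)}$ enters, and that the expectation over $s'$ is well defined for bounded measurable $\hat{Q}_{k-1}$.
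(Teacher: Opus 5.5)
Your proposal is correct and follows essentially the same argument as the paper: take $Q^*$ as the competitor in the infimum, rewrite $T\hat{Q}_{k-1}-Q^*=T\hat{Q}_{k-1}-TQ^*$, apply the $\gamma$-contraction of $T$ in sup-norm together with $\|\hat{Q}_{k-1}\|_\infty,\|Q^*\|_\infty\le V_{\max}$, and pass from the $L^2$ norm to the sup-norm. The only cosmetic difference is the order: the paper first bounds the sup-norm infimum and then uses $\|\cdot\|_{L^2}\le\|\cdot\|_\infty$, whereas you substitute $Q^*$ directly into the $L^2$ infimum.
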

\begin{remark}
We note that the assumption $Q^* \in \mathcal{H}_{(\mathcal{D},w)}$ used in the following proposition is itself a strong realizability condition, essentially equivalent to what Corollary~\ref{cor:realizability} derives from the Bellman closure condition~\eqref{eq:stronger} that is being relaxed in this appendix. Proposition~\ref{prop:beta_crude} is therefore best understood as a sanity check confirming that $\beta_K$ is finite under mild conditions, rather than as a sharp or practically useful bound.
\end{remark}

\begin{proof}
Since $Q^* \in \mathcal{H}_{(\mathcal{D},w)}$ and $TQ^* = Q^*$, we have for each $k$:
\begin{equation}
\inf_{f \in \mathcal{H}_{(\mathcal{D},w)}} \|T\hat{Q}_{k-1} - f\|_\infty
\;\leq\; \|T\hat{Q}_{k-1} - Q^*\|_\infty
= \|T\hat{Q}_{k-1} - TQ^*\|_\infty
\leq \gamma\,\|\hat{Q}_{k-1} - Q^*\|_\infty
\leq 2\gamma V_{\max},
\end{equation}
where the second inequality uses $\gamma$-contractivity of $T$, and the last uses $\|\hat{Q}_{k-1}\|_\infty \leq V_{\max}$ and $\|Q^*\|_\infty \leq V_{\max}$. Since $\|\cdot\|_{L^2(U(\mathcal{S}\times\mathcal{A}))} \leq \|\cdot\|_\infty$, the same bound holds for $\beta_K$ as defined in~\eqref{eq:beta_K}.
\end{proof}

Substituting~\eqref{eq:beta_crude} into~\eqref{eq:relaxed-bound} gives
\begin{equation}
\frac{2\phi_{U,U}\gamma}{(1-\gamma)^2}\beta_K \;\leq\; \frac{4\phi_{U,U}\gamma^2 V_{\max}}{(1-\gamma)^2} = \frac{4\phi_{U,U}\gamma^2 R_{\max}}{(1-\gamma)^3},
\end{equation}
using $V_{\max} = R_{\max}/(1-\gamma)$. For $\gamma$ close to $1$, this term is $O((1-\gamma)^{-3})$. The bound~\eqref{eq:beta_crude} is offered as a sanity check confirming that $\beta_K$ is finite under mild conditions; it is \emph{not} a useful quantitative guarantee. Meaningful results require either Bellman closure ($\beta_K = 0$) or a problem-specific argument that $\beta_K$ is small, e.g., via covering-number comparisons between $\mathcal{G}$ and $\mathcal{H}_{(\mathcal{D},w)}$, or by leveraging structural properties of the PQC feature map.

\subsection{Summary}
Dropping Assumption~\ref{as:func_classes} introduces an irreducible approximation bias $\beta_K$, defined in~\eqref{eq:beta_K}, reflecting the mismatch between the Bellman targets and the hypothesis class $\mathcal{H}_{(\mathcal{D},w)}$. The resulting error decomposition~\eqref{eq:two-class-decomp} is standard in the analysis of fitted value iteration with two function classes \cite{chen2019information,theoretical_deepQ}: the estimation error $\varepsilon_{\mathrm{est}}$ is controlled by the sample complexity analysis of the main text, while $\beta_K$ must be controlled either by structural assumptions on the pair $(\mathcal{G}, \mathcal{H}_{(\mathcal{D},w)})$ or by a problem-specific argument. When $\beta_K = 0$ (i.e., Assumption~\ref{as:func_classes} holds), Lemma~\ref{thm:relaxed} reduces exactly to Theorem~\ref{th:fqi_error_propagation}. When $\beta_K > 0$, it constitutes an irreducible error floor that cannot be removed by increasing the sample size $m$ or the number of iterations $K$ alone. The conservative bound $\beta_K \leq 2\gamma V_{\max}$ (Proposition~\ref{prop:beta_crude}) confirms finiteness under mild conditions but is too coarse to yield useful guarantees near $\gamma = 1$. Tightening this bound via covering-number comparisons between $\mathcal{G}$ and $\mathcal{H}_{(\mathcal{D},w)}$, or by exploiting the structure of the PQC feature map, is a natural direction for future work, and the analysis above identifies precisely where $\beta_K$ enters the proof.

\printbibliography

\end{document}